\DeclarePairedDelimiter\floor{\lfloor}{\rfloor}
\newtheorem{claim}{Claim}[section]
\newtheorem{observation}{Observation}[section]
\newtheorem{example}{Example}
\newtheorem{property}{Property}
\newtheorem{theorem}{Theorem}[section]
\newtheorem{corollary}{Corollary}[section]
\newtheorem{lemma}[theorem]{Lemma}
\newtheorem{definition}[theorem]{Definition}
\def\GrabProofArgument[#1]{ #1: \egroup\ignorespaces}
\def\proof{\noindent\textbf\bgroup Proof%
	\@ifnextchar[{\GrabProofArgument}{. \egroup\ignorespaces}}
\def\endproof{\hspace*{\fill}$\Box$\medskip}
\definecolor{navyblue}{rgb}{0.0, 0.0, 0.5}
\newcommand{\agent}{i}
\newcommand{\agents}{N}
\newcommand{\goods}{M}
\newcommand{\good}{g}
\newcommand{\pool}{P}
\newcommand{\MMS}{\mathsf{MMS}}
\newcommand{\util}{v}
\newcommand{\EFX}{\mathsf{EFX}}
\newcommand{\EFO}{\mathsf{EF1}}
\newcommand{\EFkX}{\mathsf{EFkX}}
\newcommand{\EFXX}{\mathsf{EF2X}}
\newcommand{\EFXp}{\mathsf{EFX^+}}
\newcommand{\lexlarger}{\underset{\scriptscriptstyle\textsf{lex}}\succ}
\newcommand{\lexlargereq}{\underset{\scriptscriptstyle\textsf{lex}}\succeq}
\newcommand{\valu}{v}
\newcommand{\alloc}{X}
\newcommand{\allocs}{X}
\newcommand{\etal}{\textit{et al.}}
\newcommand{\champset}[2]{[{#1} \mid {#2}]}
\newcommand{\suchthat}{s.t. }
\newcounter{proccnt}
\newcommand{\konote}[1]{}
\DeclareMathOperator*{\argmin}{arg\,min}
\DeclareMathOperator*{\argmax}{arg\,max}
\begin{document}
\title{An $\EFXX$ Allocation Protocol for Restricted Additive Valuations}
\author{
		Hannaneh Akrami \thanks{MPII, SIC, Graduate School of Computer Science, Saarbrücken} \and Rojin Rezvan \thanks{University of Texas at Austin} \and Masoud Seddighin \thanks{School of Computer Science, Institute for Research in Fundamental Sciences (IPM), Tehran, Iran} 
}

\maketitle

\makeatletter
\def\GrabProofArgument[#1]{ of \cref{#1}. \egroup\ignorespaces}
\def\proof{\noindent\textbf\bgroup Proof%
	\@ifnextchar[{\GrabProofArgument}{. \egroup\ignorespaces}}
\def\endproof{\hspace*{\fill}\qed \medskip}
\makeatother                

\begin{abstract}
	We study the problem of fairly allocating a set of $m$ indivisible goods to a set of $n$ agents.  Envy-freeness up to any good ($\EFX$) criteria --- which 
requires that no agent prefers the bundle of another agent after  removal of any single good --- is known to be a remarkable analogous of envy-freeness when the resource is a set of indivisible goods \cite{caragiannis2019unreasonable}.
In this paper, we investigate $\EFX$ notion for the restricted additive valuations, that is, every good has some non-negative
value, and every agent is interested in only some of the goods.

We introduce a natural relaxation of $\EFX$ called $\EFkX$ which requires that no agent envies another agent after  removal of any $k$ goods.  
Our main contribution is an algorithm that finds a complete (i.e., no good is discarded) $\EFXX$ allocation for the restricted additive valuations. In our algorithm we devise new concepts, namely ``configuration'' and ``envy-elimination'' that might be of independent interest. 

We also use our new tools to find an $\EFX$ allocation for restricted additive valuations that discards at most $\floor{n/2} -1$ goods. This improves the currently best known result of Chaudhury \textit{et al.}~\cite{charity} for the restricted additive valuations by a factor of $2$. 
\end{abstract}

\section{Introduction}
\label{intro}

Fair allocation deals with the problem of allocating a resource to agents with diverse preferences. Due to the wide range of applications, this problem has received attention in different fields such as economics, mathematics, operations research, politics, and computer science \cite{lipton2004approximately,saha2010new,asadpour2008santa,feige2008allocations,brams1996fair,Foley:first,brams,Bezacova:first,Steinhaus:first}. 

In the early studies, the resource was assumed to be a single heterogeneous divisible cake. This case has been mostly considered by mathematicians and economists under the title of ``Cake Cutting".
 In contrast,  recent studies often focus on more practical cases where the resource is less divisible or indivisible. Such instances arise in many real-world scenarios, e.g., dividing the inherited wealth among heirs, divorce settlements, border disputes, etc \cite{10.1287/opre.2016.1544, 10.5555/3033138, 10.5555/3180776}.
In the indivisible case, the resource is a set $\goods$ of $m$ indivisible goods that must be divided among a set of $n$ agents. Each agent $i$ has a valuation function $\valu_i : 2^\goods \rightarrow \mathbb{R}$ and the goal is to allocate the goods to the agents while satisfying a certain fairness objective. 

Envy-freeness is one of the most well-established fairness notions studied in the literature. An allocation is envy-free if each agent prefers her share over other agents' share.  Formally, let $X = \langle X_1,X_2,\ldots,X_n\rangle$ be an allocation that allocates bundle $X_i$ to agent $i$. We say agent $i$ envies agent $j$, if $
\valu_i(X_i) < \valu_i(X_j).
$ An allocation is envy-free, if no agent envies another agent.

Perhaps one of the reasons that envy-freeness is widely accepted among economists is that, despite strict conditions, there are strong guarantees for this notion in the divisible setting. For example, there always exists an envy-free allocation of the cake such that each agent receives a connected piece \cite{stromquist1980cut}. However, beyond divisibility when dealing with a set of indivisible goods, this notion is too strong to be satisfied. For example, consider an instance with two agents and one good; the  agent that receives no good envies the other agent.
Such barriers have led to natural relaxations of envy-freeness that are more suitable for the case of indivisible goods. Envy-freeness up to one good ($\textsf{EF1}$) and envy-freeness up to any good ($\EFX$) are among the most prominent relaxations of envy-freeness for indivisible goods. The idea behind these two notions is to allow a limited amount of envy among the agents. Formally, given allocation $X = \langle X_1,X_2,\ldots,X_n\rangle$ that allocates set $X_i$ of goods to agent $i$, we say $X$ is
\begin{itemize}
	\item $\textsf{EF1}$ (Budish~\cite{budish2011combinatorial}): if for every agent $i$ and $j$, there exists a good  
	$
	 g \in X_j$ such that $\valu_i(X_i) \geq \valu_i(X_j \setminus\{g\}).
	$
	\item $\EFX$  (Caragiannis \etal~\cite{caragiannis2019unreasonable}): if for every agent $i$ and $j$, and every good 
	$g \in X_j$ we have $\valu_i(X_i) \geq \valu_i(X_j \setminus\{g\}).$\footnote{The original definition of $\EFX$ \cite{caragiannis2019unreasonable}, assumes that the removed good $g$ has an additional property that $v_i(X_j \setminus \{g\}) < v_i(X_j)$. In Section \ref{OPTvsEFX}, we denote this more restricted definition by $\EFXp$.
	However, our existential results in Sections \ref{ef2x} and \ref{efxn2} work for the more general case where $v_i(X_j \setminus \{g\}) \leq v_i(X_j)$.}
\end{itemize} 

By definition, every envy-free allocation is also $\EFX$, and every $\EFX$ allocation is also $\EFO$. So far, we know that when the valuation functions are monotone, $\EFO$ allocations always exist and can be found in polynomial time \cite{lipton2004approximately}. In sharp contrast, it turns out that $\EFX$ notion is much more challenging. For example, the only currently known positive result for  $\EFX$  is that for 3 agents with additive valuations an $\EFX$ allocation always exists \cite{chaudhury2020efx}. For example, the only currently known positive results on existence of $\EFX$ allocations are when the valuations are monotone and identical \cite{plaut2020almost}, $n=2$ \cite{plaut2020almost}, or $n=3$ and agents have additive valuations \cite{chaudhury2020efx}.

Recent findings on $\EFX$ suggest that avoiding a subset of goods may result in strong $\EFX$ guarantees. This subject (also known as $\EFX$ with charity)  is pioneered by the work of Caragiannis \textit{et al.} \cite{caragiannis2019envy} wherein the authors show that 
there exists a procedure that discards a subset of goods and finds an $\EFX$ allocation for the rest of the goods such that the Nash welfare\footnote{The Nash welfare of an allocation is the geometric mean of the values of agents.} of the allocation is at least half of the optimal Nash welfare.
Several follow-up works have reduced the number and the total value of the discarded goods \cite{charity, almost-EFX-4}. 

In this paper, we focus on the $\EFX$ notion and its relaxations (including $\EFX$ with charity) when the valuation functions are restricted additive. Restricted additive setting is an important subclass of additive valuations that has gained popularity in allocation problems during the past decade \cite{bansal2006santa,feige2008allocations,asadpour2008santa,cheng2019restricted,davies2020tale,khot2007approximation, Polacek2012QuasipolynomialLS, annamalai2015combinatorial, inproceedings, cheng2018integrality, 10.1145/1993636.1993718}. In Section \ref{restricted_additive} we discuss the restricted additive valuations. Also, we refer to Section \ref{contributions} for the results and the techniques used in this paper.

\subsection{Restricted Additive Valuations}\label{restricted_additive}
In the restricted additive setting, the assumption is that the valuation functions are additive, and furthermore, each good $\good$ has an inherent value $v(g)$ so that for any agent $\agent$, we have $\util_{\agent}(\good)$$ \in \{0,v(g)\}$. On the practical side, this setting captures many real-life scenarios. For example, when distributing food among people, the nutritious value received by each person only depends on the food. However, due to allergies, ethics, etc. people might have different diets and do not eat certain food. Thus, the nutritious value each person receives by each food $g$ is $0$ if the person refuses to eat it or is $v(g)$ which solely depends on the food.

On the theoretical side, the restricted additive class can be considered as a promising middle ground between the identical setting (all the valuations are similar) and the additive setting. For many allocation objectives, there is a considerable discrepancy between the results pertaining to these two settings. For example:
\begin{enumerate}
	\item Maximin-share fairness, or $\MMS$, can be guaranteed for the case of identical valuations, while it is proved that guaranteeing $\MMS$ for the additive setting is  not always possible \cite{kurokawa2018fair}. The best known approximation guarantee for the additive case is $3/4-\epsilon$~\cite{garg2021improved}. 
	\item Guaranteeing $\textsf{EFX}$ for identical additive valuations  is easy \cite{plaut2020almost}, while whether or not an $\textsf{EFX}$ allocation exists for the additive setting is unknown. The best known approximation guarantee for $\textsf{EFX}$ in the  additive setting is  $0.618$  \cite{amanatidis2020multiple,farhadi2020almost}.
	
	\item The best polynomial time approximation guarantee for the Nash welfare objective  with additive valuations is $1.45$ \cite{barman2018finding}, 
	while there is a greedy algorithm with a factor of $1.061$ \cite{barman2018greedy} and a PTAS \cite{nguyen2014minimizing} for the identical additive setting.
	
	\item The best polynomial time approximation guarantee for max-min fairnees in the additive setting is $O(n^\epsilon)$ \cite{chakrabarty2009allocating, bateni2009maxmin}, while there exists a PTAS for the case that the valuations are identical and additive \cite{woeginger1997polynomial}.
\end{enumerate}

Considering this gap, it is expedient to study restricted sub-classes of additive valuations that lie in between these two settings. What makes the restricted additive setting interesting is that despite the simple and handy appearance, its structure is fundamentally different from the identical setting. Indeed, it is believed that for some fairness objectives, the restricted additive setting is as hard as the general additive case \cite{khot2007approximation}. 
Except for max-min fairness  which has a very interesting and fruitful literature for the restricted additive setting (termed as the Santa Claus problem), very few studies consider this type of valuations for other notions. 
For max-min fairness, the best known approximation factor for the  additive setting is improved in series of works to $O(\sqrt n (\log n)^3)$\cite{asadpour2008santa}, $O(\sqrt{n \log n}/\log\log n)$ \cite{saha2010new}, and $O(n^\epsilon)$ \cite{chakrabarty2009allocating,bateni2009maxmin}. Extensive investigations  on the restricted additive valuations for max-min fairness has also resulted in constant factor approximation algorithms \cite{annamalai2015combinatorial,  asadpour2008santa,davies2020tale}. 
The first breakthrough for this setting is the work of Bansal and Sviridenko \cite{bansal2006santa}, who provide an $O(\log\log n/\log\log\log n)$-approximation solution based on rounding a certain linear program called configuration LP. 
Based on an impressive result of Haxel for hypergraph matchings \cite{haxell1995condition}, Asadpour \etal~\cite{asadpour2008santa}  prove that the integrality gap of Configuration LP is within a  factor $4$ for the restricted additive setting.  This bound is slightly improved to $3.84$ by Jansen and Rohwedder \cite{jansen2018compact} and Cheng and Mao~\cite{cheng2018integrality}. 
Recently, Bamas \etal~\cite{bamas2020submodular} introduced a submodular version of restricted valuations for the Santa Claus problem.

To the best of our knowledge, no previous work considers restricted additive valuations for other objectives such as maximin-share fairness, $\EFX$, and Nash social welfare. Recently, another class of valuations called bi-valued is studied for Nash welfare ‌in which the value of each good to each agent is either $p \in \mathbb{N}$ or $q \in \mathbb{N}$ \cite{akrami2021nash}. 

 In this paper, we initiate the study of $\EFX$ notation, when the valuations are restricted additive.  We refer to Section \ref{contributions} for the results and the techniques used in this paper.

\subsection{Our Contribution}\label{contributions}
We start by introducing  $\EFkX$ notion which is indeed a  relaxation of $\EFX$ that allows an amount of envy up to the value of $k$ least valuable goods of a bundle. 
Our main result is Algorithm \ref{alg2} that finds a complete $\EFXX$ allocation for restricted additive valuations. The algorithm consists of 3 updating rules plus an additional final step. As long as it is possible, we update the allocation using one of the updating rules. When none of the rules is applicable, we perform the additional step to obtain a complete allocation. 
 The rules are based on new concepts, namely configuration and envy-elimination which we describe in the following.

\paragraph{Configuration.} One important point of departure of our method from the existing techniques  is that alongside updating the allocation, we maintain a partitioning of the agents into several groups. This partitioning has the property that the value of the agents in the same group are close to each other.  We use the term configuration to refer to a pair of an allocation and a partition. The updating process at each step takes a configuration as input and updates both the allocation and the partition.  

Well-established concepts such as champion and champion-graph are also  revised in accordance with the definition of configuration.

\paragraph{Envy-elimination.} At the heart of our updating rules we exploit a process called envy-elimination. Envy-elimination is designed to circumvent deadlocks. At several points during the algorithm we might allocate a good to an agent that violates the $\EFX$ property. In such situations, we execute the envy-elimination process. 
This process restores the $\EFX$ property by merely eliminating goods from the bundles of agents. Therefore, at the end of this process, the value of each agent  for her bundle is  at most as much as her value beforehand. 

\paragraph{A new Potential Function.} Note that the fact that the social welfare strictly decreases after the envy-elimination process might question the termination of our algorithm.  This brings us to another challenge: we must show that the algorithm ends after a finite number of updates. 
To prove the termination of the algorithm, we introduce a potential function $\Phi$ which maps
a pair $\sigma = (X,R)$ of a partial allocation $X$  and a partition $R$ of agents to a vector $\Phi(\sigma)$ of rational numbers and show that after each update, $\Phi(\sigma)$ 
increases lexicographically. This indicates that the updating process terminates after a finite number of updates.  Note that for a given configuration $\sigma = (X,R)$, function $\Phi(\sigma)$  relies on both $X$ and $R$.

	 We later turn our attention beyond $\EFXX$ to see whether our new tools can be used to obtain better guarantees for the $\EFX$ notion. Our second result is Algorithm \ref{alg1} that finds a partial $\EFX$ allocation which discards at most $\floor{n/2}-1$ number of goods.  The currently best known  result in this direction is the work of Chaudhury \etal~\cite{charity} which proves the existence of a partial $\EFX$ allocation that discards at most $n-1$ goods. In comparison to them, our algorithm reduces the number of discarded goods for the restricted additive setting by a factor of 2.

  Finally, in Section \ref{OPTvsEFX}, we explore the possibility of achieving more efficient allocations satisfying $\EFkX$. We show that for any integer $k$ and any constant $c$, there are instances for which guaranteeing a $c$-approximation of $\EFkX$ and Pareto efficiency at the same time is not possible. 

\section{Related Work}
Fair allocation of indivisible goods is a central problem in several disciplines including computer science and economics \cite{brams1996fair}, and envy-freeness is a classic fairness notion studied in this context. A formal study of envy-freeness can be traced back to over 70 years ago \cite{Foley:first,Steinhaus:first}.  The notion we study in this paper is $\EFX$ which is a relaxation of envy-freeness for the case that the resource is a set of indivisible goods.

$\EFX$  is  among the most studied fairness notions in recent years and “Arguably, the best fairness analog of envy-freeness for indivisible items” \cite{caragiannis2019unreasonable}. This notion originates in the work of Caragiannis \textit{et al.}~\cite{caragiannis2019unreasonable} wherein the authors provide some initial results on $\EFX$ and its relation to other notions.
However, despite extensive investigations, the existence of a complete $\EFX$ allocation is only proved for very limited cases: when the number of agents is $2$ or $3$ \cite{plaut2020almost,chaudhury2020efx}, and 
when the valuations are either identical  \cite{plaut2020almost}, binary \cite{barman2018greedy}, or bi-valued \cite{amanatidis2021maximum}. 
Given this impenetrability of $\EFX$, a growing strand of research started considering its relaxations. These relaxations can be classified into three categories:
\begin{itemize}
	\item \textbf{Approximately $\EFX$ allocations}:  a natural approach is to find allocations that are approximately $\EFX$. The first result in this direction is a $1/2$-$\EFX$ allocation proposed by Plaut and Roughgarden \cite{plaut2020almost}. The approximation is later improved to $0.618$ by Amanitidis \textit{et al.} \cite{amanatidis2020multiple}. 
	
	\item \textbf{Weakening the fairness requirement}: recall that in an $\EFX$ allocation, any possible envy is removed by eliminating the least valuable good. Recently, Farhadi \textit{et al.} \cite{farhadi2020almost}  suggest a relaxed version of $\EFX$ called $\textsf{EFR}$ in which instead of eliminating the least valuable good to evaluate fairness, eliminates a good uniformly at  random.  They also show that a $0.74$-$\textsf{EFR}$ allocation always exists. Another example is envy-freeness up to one less preferred good $(\textsf{EFL})$ introduced by Barman \textit{et al.} \cite{barman2018groupwise} which limits the value of the eliminated good. The $\EFkX$ notion we introduced in Section \ref{contributions} also falls within this category.
		
	\item \textbf{Discarding a subset of goods}: another approach is to relax the assumption that the final allocation must allocate all the goods. This line is initiated by the work of Caragiannis \etal~\cite{caragiannis2019envy} wherein the authors prove the existence of a partial $\EFX$ allocation whose Nash welfare is at least half of the optimal Nash welfare. Following this work, Chaudhury \textit{et al.} \cite{charity} proved the existence of a partial $\EFX$ allocation that leaves at most $n-1$ goods unallocated.  Berger \textit{et al.} \cite{almost-EFX-4} decreased the number of discarded goods to $n-2$.
	Recently,  Chaudhury \textit{et al.} \cite{rainbow} presented a framework to obtain a partial $(1 - \epsilon)$-$\EFX$ allocation with sub-linear number of unallocated goods for any $\epsilon \in (0, \frac{1}{2}]$. Our second result in Section \ref{efxn2} falls 
	within this category.
\end{itemize}

Finally, we note that there are other fairness and efficiency criteria for the allocation of indivisible goods. The most prominent examples are maximin-share \cite{asadpour2008santa, bansal2006santa, bateni2009maxmin, Bezacova:first, chakrabarty2009allocating, cheng2018integrality, cheng2019restricted, davies2020tale, feige2008allocations, garg2021improved, ghodsi2018fair, khot2007approximation, kurokawa2018fair, Polacek2012QuasipolynomialLS} and its variants such as pairwise maximin-share \cite{amanatidis2020multiple} and group-wise maximin-share \cite{amanatidis2020multiple, barman2018groupwise}, envy-freeness up to one good \cite{barman2018finding, lipton2004approximately}, Nash welfare \cite{barman2018finding, caragiannis2019unreasonable, farhadi2020almost, akrami2021nash, amanatidis2021maximum, barman2018greedy}, and competitive equilibrium from equal incomes \cite{10.1287/opre.2016.1544}. We also refer the reader to \cite{10.5555/3033138,brams1996fair} for an overview on fair division, classic fairness notions, and related results.

\section{Preliminaries}
\label{preliminaries}

We denote the set of agents by $\agents = \{1, 2, \ldots, n\}$ and the set of goods by $\goods$. Each agent $i$ has a valuation function $\valu_i : 2^\goods \rightarrow \mathbb{R}^+$ which represents the value of that agent for each subset of the goods.
An allocation is a specification of how goods in $\goods$ are divided among the agents. We denote an allocation by $\allocs=\langle \alloc_1,\alloc_2,\ldots,\alloc_n \rangle$, where $\alloc_i$ is the bundle allocated to agent $i$. Allocation $X$ is complete if $\bigcup_i \alloc_i = \goods,$ and is partial otherwise. For a partial allocation $X$, we refer to the set of goods that are not allocated to any agent as the pool of unallocated goods and denote it by $\pool_{\allocs}$. When $X$ is clear from the context, we simply use $P$ instead of $\pool_{\allocs}$. 
We say a good $g$ is \textit{wasted}, if it is allocated to an agent that has zero value for $g$. Typically, non-wasteful allocations refer to allocations that admit no wasted good. 

In this paper, we are interested in allocations that satisfy certain fairness properties. In Section \ref{intro}, we defined the $\EFX$ notion.
Here we define  a more general form of $\EFX$, namely, envy-freeness up to any $k$ goods or $\EFkX$. 

\begin{definition}\label{efkxdef}
	An allocation $X = \langle \alloc_1, ... , \alloc_n \rangle$ is $\EFkX$, if for all $i \neq j$ and every collection of $\ell = \min(k, |X_j|)$ distinct goods $g_1,g_2,\ldots,g_\ell \in \alloc_j$  we have 
	$
	v_i(\alloc_i) \geq v_i(\alloc_j \setminus\{g_1,g_2,\ldots,g_\ell\}).
	$
\end{definition}

Roughly speaking, an allocation is  $\EFkX$ if for every  agents $i,j$, agent $i$ does not envy agent $j$ provided that $k$ arbitrary goods are removed from  the bundle of agent $j$.
In particular, an $\EFX$ allocation is $\EFkX$ with $k=1$.  Our main results are related to the cases where $k=1$ and $k=2$.

\paragraph{Restricted Additive Valuations.}
We consider a special case of valuation functions in which each good $\good$ has an inherent value $\valu(\good)$ and the contribution of $g$ to any set of goods for any agent is either $0$ or  $\valu(\good)$. Under this restriction, when the valuations are also additive, we call them restricted additive. 
\begin{definition}\label{rest-add-def}
	A set $ \{\valu_1,\valu_2,\ldots,\valu_n\}$ of valuation functions is restricted additive, if for every $1 \leq i \leq n$, $\valu_i$ is additive, and furthermore, for every good $\good \in \goods$, we have $\valu_i(\good)\footnote{For ease of notation, we use $\util_i(\good)$ instead of $\util_i(\{\good\})$.} \in \{0,v(\good)\}$ where $v(g)>0$.
\end{definition}
For brevity, for a set $S$ of goods we define $v(S)$ as $\sum_{g \in S} v(g)$. One desirable property of restricted additive valuations is stated in Observation \ref{non-waste-good}. 

\begin{observation}\label{non-waste-good}
	Let $X$ be a partial $\EFX$ allocation, let $g$ be an unallocated good, and let $i$ and $j$ be two different agents. If there exists two different goods $g_1, g_2 \in X_j \cup \{g\}$ such that $v_i((X_j \cup \{g\}) \setminus \{g_1, g_2\}) > v_i(X_i)$, then $v_i(g) \neq 0$. 
\end{observation}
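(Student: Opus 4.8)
The plan is to argue by contradiction: suppose $v_i(g) = 0$ and yet there are two distinct goods $g_1, g_2 \in X_j \cup \{g\}$ with $v_i((X_j \cup \{g\}) \setminus \{g_1, g_2\}) > v_i(X_i)$. Since $v_i(g) = 0$, the good $g$ contributes nothing to agent $i$'s value of any set, so for any set $S$ we have $v_i(S \cup \{g\}) = v_i(S \setminus \{g\}) = v_i(S)$ (using additivity and that $g$'s inherent-value contribution to $i$ is $0$). The idea is to use this to replace the pair $\{g_1, g_2\}$ by a pair of goods lying entirely inside $X_j$ without increasing the removed value, thereby contradicting the hypothesis that $X$ is $\EFX$.

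First I would handle the case $g \notin \{g_1, g_2\}$, i.e. $g_1, g_2 \in X_j$. Then $(X_j \cup \{g\}) \setminus \{g_1, g_2\} = (X_j \setminus \{g_1, g_2\}) \cup \{g\}$, and since $v_i(g) = 0$ this has the same $i$-value as $X_j \setminus \{g_1, g_2\}$. Hence $v_i(X_j \setminus \{g_1, g_2\}) > v_i(X_i)$, which together with Definition~\ref{efkxdef} (for $k = 2$) contradicts $X$ being $\EFX$ — more directly, one can even find a single good $g' \in X_j$ (e.g. $g_1$) with $v_i(X_j \setminus \{g'\}) \geq v_i(X_j \setminus \{g_1, g_2\}) > v_i(X_i)$, contradicting $\EFX$. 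For the remaining case, say $g_2 = g$ (so $g_1 \in X_j$), we have $(X_j \cup \{g\}) \setminus \{g_1, g\} = X_j \setminus \{g_1\}$, so the hypothesis reads $v_i(X_j \setminus \{g_1\}) > v_i(X_i)$, which again directly contradicts the $\EFX$ property of $X$ witnessed by removing $g_1$ from $X_j$.

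The only subtlety — and the step I would be most careful about — is the corner case where $X_j \cup \{g\}$ has only the two goods $g_1, g_2$ (or fewer), so that the set after removal is empty; but then $v_i(\emptyset) = 0 \le v_i(X_i)$ since valuations are nonnegative, so the strict inequality in the hypothesis fails and there is nothing to prove. Likewise one should check that when I pass from removing a pair in $X_j$ to removing a single good, the set $X_j$ genuinely contains the good being removed, which holds by construction in each case. Collecting these cases yields the claim: the assumption $v_i(g) = 0$ is untenable, so $v_i(g) \neq 0$. I do not expect any real obstacle here — the statement is essentially a direct consequence of additivity together with the $\EFX$ hypothesis on $X$, and the restricted-additive structure enters only through the fact that $v_i(g) = 0$ makes $g$ entirely inert for agent $i$.
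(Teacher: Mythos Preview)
Your proof is correct and follows essentially the same approach as the paper: argue by contradiction from $v_i(g)=0$, use that $g$ contributes nothing to $i$'s value, and reduce to a violation of $\EFX$ via a single good $g_1 \in X_j$. The only cosmetic difference is that the paper collapses your two cases into one by noting that $g_1 \neq g_2$ implies (without loss of generality) $g \neq g_1$, then writes the single chain $v_i((X_j \cup \{g\}) \setminus \{g_1,g_2\}) = v_i(X_j \setminus \{g_1,g_2\}) \le v_i(X_j \setminus \{g_1\}) \le v_i(X_i)$; your explicit case split and the corner-case discussion are fine but not needed.
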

\begin{proof}
	Assume $v_i(g) = 0$. Since $g_1 \neq g_2$, we have $g \neq g_1$ or $g \neq g_2$.
	Without loss of generality, assume $g \neq g_1$. We have
	\begin{align*}
		v_i((X_j \cup \{g\}) \setminus \{g_1, g_2\}) &= v_i(X_j \setminus \{g_1, g_2\}) \\ &\leq v_i(X_j \setminus \{g_1\}) \\
		&\leq v_i(X_i), &\hbox{$X$ is $\EFX$}
	\end{align*}
	which is a contradiction. Therefore, $v_i(g) \neq 0$. Since $v_i$ is restricted additive, we have $v_i(g) = v(g)$.
\end{proof}

\subsection{Configurations}
In this work, we represent the status of our algorithm at each step via two ingredients: a partial allocation $\alloc$ and a \emph{partition}\footnote{{A partition of a set is a grouping of its elements into non-empty subsets, in such a way that every element is included in exactly one subset.}} $R$ of the agents. 
For a partition $R$, we denote the $i$'th group of $R$ by $R_i$. Also, we denote by $|R|$ the number of groups in $R$.  

For brevity, we use the term configuration to refer to a pair of an allocation and a partition. Therefore, at each step during our algorithms we take a configuration $(X,R)$ as input and return another configuration $(X',R')$. Our algorithms are developed in such a way that the configurations satisfy three structural properties. The first and the most important property is Property \ref{prop1} which requires that the allocation satisfies the $\EFX$ criteria. 

\begin{property}[EFX]\label{prop1}
	A configuration $\sigma = (X,R)$ is $\EFX$, if allocation $X$ is $\EFX$.
\end{property}

Note that Property \ref{prop1} is independent of the partition.
The second property we consider for configurations is envy-compatibility which we bring in the following. This property incorporates both the allocation and the configuration. 

\begin{property}[Envy-compatibility]\label{def-envy-conf}
	A configuration $(X,R)$ is envy-compatible, if for all $1 \leq \ell \leq |R|$ we have
	$$
	\forall_{i \in R_\ell, \good \in X_i} \qquad \valu_i(X_i \setminus \{\good\}) \leq \valu_{r_\ell}(X_{r_\ell}),
	$$
	where 
	\begin{equation}\label{sl}
	r_\ell = \arg \min_{j \in R_\ell} \valu_j(X_j).
	\end{equation} 
\end{property}
For an envy-compatible configuration  $(X,R)$, for every $1 \leq \ell \leq |R|$ we define $r_\ell$ as in Equation \eqref{sl} and call $r_\ell$ the representative of group $R_\ell$. Furthermore, we suppose that the groups in $R$ are sorted according to the utility of their representatives, i.e.,
$$
\valu_{r_1}(\alloc_{r_1}) \leq \valu_{r_2}(\alloc_{r_2}) \leq \ldots \leq \valu_{r_{|R|}}(\alloc_{r_{|R|}}).
$$
Note that for an allocation $X$, there might be several different partitions $R$ such that configuration $(X,R)$ is envy-compatible. For example, one trivial such partitions is to put each agent into a separate group. However, 
our interest is in configurations that are more specific. 
In addition to Properties \ref{prop1} and \ref{def-envy-conf} our desired configurations admit another important property. Before we introduce this property, we must define champion and champion-graph. These two concepts were first introduced by Chaudhury \etal~\cite{chaudhury2020efx}. In this paper, we revise their definition to incorporate configurations.

\begin{definition}\label{e-champion}
	Let $(X,R)$ be a configuration  satisfying Property \ref{def-envy-conf} and let $i \in R_\ell$ be an agent. Then, for every subset $S$ such that $v_i(S) > v_{r_\ell}(X_{r_\ell})$, we define $\champset{S}{i}$ to be a  subset of $S$ with minimum number of elements such that $v_i\left(\champset{S}{i}\right)>v_{r_\ell}(X_{r_\ell})$. In case of multiple options for $\champset{S}{i}$, we pick one arbitrarily.
\end{definition}
\begin{observation}\label{S-nonwasteful}
	Let $(X,R)$ be a configuration satisfying Property \ref{def-envy-conf}, $S$ be a subset of goods and $i \in R_\ell$ be an agent such that $v_i(S) > v_{r_\ell}(X_{r_\ell})$. Then, we have $v_i(\champset{S}{i}) = v(\champset{S}{i})$.
\end{observation}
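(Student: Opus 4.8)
The plan is to show that $\champset{S}{i}$ is a non-wasteful bundle for agent $i$, i.e., agent $i$ values every good in $\champset{S}{i}$ positively, which by restricted additivity means $v_i(g) = v(g)$ for each $g \in \champset{S}{i}$, and hence $v_i(\champset{S}{i}) = v(\champset{S}{i})$ by additivity. First I would argue by contradiction: suppose some good $g \in \champset{S}{i}$ has $v_i(g) = 0$. Then removing $g$ from $\champset{S}{i}$ does not change its value for agent $i$, since $v_i$ is additive.

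Next I would exploit the minimality in Definition \ref{e-champion}. By definition, $\champset{S}{i}$ is a subset of $S$ with the \emph{minimum} number of elements satisfying $v_i(\champset{S}{i}) > v_{r_\ell}(X_{r_\ell})$. Consider the set $\champset{S}{i} \setminus \{g\}$. It is still a subset of $S$, it has strictly fewer elements than $\champset{S}{i}$, and since $v_i(g) = 0$ we have $v_i(\champset{S}{i} \setminus \{g\}) = v_i(\champset{S}{i}) > v_{r_\ell}(X_{r_\ell})$. This contradicts the minimality of $\champset{S}{i}$.

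Therefore every good $g \in \champset{S}{i}$ satisfies $v_i(g) \neq 0$, and since the valuations are restricted additive (Definition \ref{rest-add-def}), $v_i(g) = v(g)$ for all such $g$. Summing over $g \in \champset{S}{i}$ using additivity of $v_i$ and the definition $v(\champset{S}{i}) = \sum_{g \in \champset{S}{i}} v(g)$ gives $v_i(\champset{S}{i}) = v(\champset{S}{i})$, as desired. I do not anticipate a genuine obstacle here; the only point requiring care is to make sure the argument handles the degenerate case where $\champset{S}{i}$ is a singleton — but in that case $v_i(\champset{S}{i}) = v_i(g) > v_{r_\ell}(X_{r_\ell}) \geq 0$ already forces $v_i(g) \neq 0$, so the conclusion holds trivially, and indeed $\champset{S}{i}$ can never be empty since $v_i(\emptyset) = 0 \not> v_{r_\ell}(X_{r_\ell})$.
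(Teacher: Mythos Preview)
Your proposal is correct and follows essentially the same approach as the paper: both arguments use the minimality (in cardinality) of $\champset{S}{i}$ from Definition~\ref{e-champion} to conclude that no good in $\champset{S}{i}$ can have value zero to agent $i$, and then invoke restricted additivity. The paper phrases it directly (minimality gives $v_i(\champset{S}{i}\setminus\{g\}) \le v_{r_\ell}(X_{r_\ell}) < v_i(\champset{S}{i})$, hence $v_i(g)>0$), while you phrase it as a contradiction, but the content is identical; your extra remarks on the singleton and empty cases are harmless but unnecessary, as the main argument already covers them.
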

\begin{proof}
	By minimality of $S$, for any good $g \in \champset{S}{i}$, we have $v_i(\champset{S}{i} \setminus\{g\}) \leq v_{r_\ell}(X_{r_\ell})$. Since by definition, we have $v_i(\champset{S}{i})>v_{r_\ell}(X_{r_\ell})$, no good in $\champset{S}{i}$ has value $0$ to agent $i$.
\end{proof}

We now define champion as following.
\begin{definition}[Champion and champion-graph]\label{extended-champ}
	Given a configuration $\sigma = (X, R)$ satisfying Property \ref{def-envy-conf}. We say $i \in R_\ell$ is a champion of $S$ if $v_i(S) > v_{r_\ell}(X_{r_\ell})$ and for every agent $j \neq i$ with $j \in R_k$ and $g \in \champset{S}{i}$ we have 
	$v_{r_k}(X_{r_k}) \geq v_j(\champset{S}{i} \setminus \{g\}).$
	We also define the champion-graph of $\sigma$, denoted by $H_\sigma$ as follows: for every agent $i$ there is a vertex in $H_\sigma$. Edges of $H_\sigma$ are of two types:
	\begin{itemize}
		\item Regular edges: for every pair of agent $i,j $ with $i \in R_\ell$, there is an edge from $i$ to $j$, if $v_i(X_j) > v_{r_\ell}(X_{r_\ell})$. 
		\item Champion edges: for every pair of  agents $i,j$ and every unallocated good $g$, there is a directed edge from $i$ to $j$ with label $g$, if $i$ is a champion of $X_j \cup \{g\}$. 
	\end{itemize}
\end{definition}

The last property we consider for configurations is based on the champion-graph. Note that Property \ref{good-alloc} is only defined on the configurations that 
are envy-compatible.
\begin{property}[Admissibility]\label{good-alloc}
	A configuration $\sigma = (X,R)$ satisfying Property \ref{def-envy-conf} is admissible, if the following properties hold:
	\begin{itemize}
		\item For all $1 \leq \ell \leq |R|$ and every $i \in R_\ell$, there is a path from $r_\ell$ to $i$ in $H_\sigma$ using only regular edges.
		\item For every agents $i \in R_\ell$ and $j \in R_k$ such that $\ell < k$, $v_{r_\ell}(X_{r_\ell}) \geq v_i(X_j)$. I.e., there is no regular edge from $R_\ell$ to $R_k$.
		\item For all $i \notin \{r_1, \ldots, r_{|R|}\}$, we have $v_i(X_i) = v(X_i).$      
	\end{itemize}
\end{property}
We conclude this section by proving Lemma \ref{good-efx}. 

\begin{lemma} \label{good-efx}
	Let $X$ be a non-wasteful allocation, and 
	let $(X,R)$ be a configuration satisfying Properties \ref{def-envy-conf} and \ref{good-alloc}. Then, $X$ is $\EFX$.
\end{lemma}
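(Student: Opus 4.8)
The goal is to show that any non-wasteful allocation $X$ that sits inside a configuration $(X,R)$ satisfying envy-compatibility (Property~\ref{def-envy-conf}) and admissibility (Property~\ref{good-alloc}) is automatically $\EFX$. The plan is to fix two agents $i$ and $j$, say $i \in R_\ell$ and $j \in R_k$, and an arbitrary good $g \in X_j$, and argue $v_i(X_i) \ge v_i(X_j \setminus \{g\})$ by splitting into cases according to how $\ell$ and $k$ compare. First I would dispose of the degenerate case $j = r_k$ versus $j \ne r_k$, and the case $v_i(g) = 0$: when $v_i(g)=0$ we have $v_i(X_j\setminus\{g\}) = v_i(X_j)$, so it suffices to bound $v_i(X_j)$, which is a (non-strict) statement that should follow from the ``no regular edge'' conditions.

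The main case split is on $\ell$ versus $k$. If $\ell < k$: the second bullet of Property~\ref{good-alloc} gives $v_{r_\ell}(X_{r_\ell}) \ge v_i(X_j)$ directly, and since $i$ is in group $R_\ell$, envy-compatibility (applied to $i$, or rather the fact that $r_\ell$ minimizes utility in its group) gives $v_i(X_i) \ge v_{r_\ell}(X_{r_\ell}) \ge v_i(X_j) \ge v_i(X_j \setminus \{g\})$, which is even stronger than $\EFX$. If $\ell > k$: here I would use that $X$ is non-wasteful and $j$ sits in a ``lower'' group. The key sub-point is that $j$, being in group $R_k$ with $k < \ell$, has every good in $X_j$ of positive value \emph{to~$j$}; but the claim is about agent~$i$'s valuation of $X_j$, so I need to control $v_i(X_j \setminus \{g\})$. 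I expect to invoke the absence of a regular edge from $i$ to $j$ (which would require $\ell$ and $k$ to be comparable the right way) together with the definition of a regular edge: if $v_i(X_j) > v_{r_\ell}(X_{r_\ell})$ then there is a regular edge $i \to j$, and then I want to derive a contradiction with admissibility or push the argument to the champion structure. If $\ell = k$ (same group): then either $i = r_\ell$ or $i$ is reached from $r_\ell$ by regular edges, but in any case envy-compatibility says $v_i(X_i\setminus\{g'\}) \le v_{r_\ell}(X_{r_\ell}) \le v_i(X_i)$ only controls $i$'s own bundle; for $i$ versus $j$ within the same group I would again argue that a regular edge $i \to j$ cannot create $\EFX$-violation because $v_i(X_j\setminus\{g\})$ after removing the minimum-value good drops to at most $v_{r_\ell}(X_{r_\ell})$ using Observation~\ref{S-nonwasteful}-style minimality, and $v_i(X_i) \ge v_{r_\ell}(X_{r_\ell})$ since $r_\ell$ is the group minimizer.

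Let me streamline: the cleanest route is probably (i) if $v_i(g) = 0$, reduce to bounding $v_i(X_j)$; (ii) if $\ell < k$, use bullet two of admissibility plus $v_i(X_i) \ge v_{r_\ell}(X_{r_\ell})$; (iii) if $\ell \ge k$, suppose for contradiction $v_i(X_j \setminus \{g\}) > v_i(X_i) \ge v_{r_\ell}(X_{r_\ell})$, so in particular $v_i(X_j) > v_{r_\ell}(X_{r_\ell})$, giving a regular edge $i \to j$; now I would use that removing a single good $g$ from $X_j$ still leaves value $> v_{r_\ell}(X_{r_\ell})$, and that since $X$ is non-wasteful $v_i$ restricted to $X_j$ agrees with $v$ on the goods $i$ values — then the minimality definition of $\champset{X_j}{i}$ and the fact that $r_\ell$ is the minimizer of its group force a contradiction, because $i$ would be able to ``afford'' $X_j$ minus its cheapest good, contradicting that $r_\ell$ (hence $i$, via $v_i(X_i) \ge v_{r_\ell}(X_{r_\ell})$) has the minimum value in the group and admissibility forbids regular edges into the right places.

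**The main obstacle.** The delicate part is the case $\ell \ge k$ (an agent in a higher-indexed group potentially envying, up to a good, an agent in the same or a lower-indexed group). Here admissibility's ``no regular edge from $R_\ell$ to $R_k$'' is stated for $\ell < k$, i.e., it directly forbids edges pointing ``downward in index'', so I need to make sure I am using it in the correct direction and that the non-wastefulness hypothesis (rather than the weaker third bullet of Property~\ref{good-alloc}, which only covers non-representatives) is what rules out the residual envy. I expect the argument to hinge on combining non-wastefulness of $X$ (so that for agent $i$, $v_i(X_j\setminus\{g\})$ is a sum of genuine values $v(\cdot)$ over goods $i$ actually values) with the envy-compatibility inequality $v_i(X_i) \ge v_{r_\ell}(X_{r_\ell})$ and a minimality/threshold argument in the spirit of Observation~\ref{S-nonwasteful}; getting the quantifiers and the strict-versus-weak inequalities exactly right in that step is where the real care is needed.
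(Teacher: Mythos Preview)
Your case $\ell < k$ is correct and matches the paper exactly. The gap is in the case $\ell \ge k$ (the envier $i$ sits in the same or a higher-indexed group than the envied $j$). Your plan there is to derive a regular edge $i \to j$ from the hypothetical strong envy and then reach a contradiction via admissibility or via the minimality of $\champset{X_j}{i}$. But the ``no regular edge'' clause of Property~\ref{good-alloc} only forbids edges from a \emph{lower} group to a \emph{higher} one; a regular edge $i \to j$ with $\ell \ge k$ is perfectly consistent with admissibility (indeed such edges are exactly what the first bullet of Property~\ref{good-alloc} uses to build paths inside a group). And the minimality of $\champset{X_j}{i}$ gives you nothing here either: all it says is that removing a good drops $v_i$ below $v_{r_\ell}(X_{r_\ell})$, but the good you remove to witness that need not be the given $g$, so you cannot conclude $v_i(X_j\setminus\{g\}) \le v_{r_\ell}(X_{r_\ell})$ from it. As written, the argument for $\ell \ge k$ does not close.

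The idea you are missing is to apply envy-compatibility to the \emph{envied} agent $j$, not the envier $i$. Since $j \in R_k$, Property~\ref{def-envy-conf} gives $v_j(X_j\setminus\{g\}) \le v_{r_k}(X_{r_k})$ for every $g \in X_j$. Non-wastefulness of $X$ (this is where the hypothesis is actually used) turns the left side into $v(X_j\setminus\{g\})$, which upper-bounds $v_i(X_j\setminus\{g\})$ for any agent $i$. Then $v_{r_k}(X_{r_k}) \le v_{r_\ell}(X_{r_\ell}) \le v_i(X_i)$ finishes the chain. This single chain handles both $\ell = k$ and $\ell > k$ at once, with no contradiction argument and no appeal to champions.
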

\begin{proof}
	Consider two agents $i$ and $j$ such that $i \in R_{\ell}$ and $j \in R_k$ for $1 \leq \ell \leq k \leq |R|$. For all $g \in X_i$ we have,
	\begin{align*}
		v_j(X_j) &\geq v_{r_k}(X_{r_k}) &\hbox{$j \in R_k$} \\
		&\geq v_{r_\ell}(X_{r_\ell}) &\hbox{$k \geq \ell$} \\
		&\geq v_i(X_i \setminus \{g\}) &\hbox{$i \in R_\ell$} \\
		&= v(X_i \setminus \{g\}) &\hbox{$X$ is non-wasteful} \\
		&\geq v_j((X_i \setminus \{g\}).
	\end{align*}
	Therefore, $j$ does not strongly envy $i$. 
	If $\ell = k$, with a similar argument, $i$ does not strongly envy $j$ either. 
	
	Now assume $\ell < k$. We have
	\begin{align*}
	v_i(X_i) &\geq v_{r_\ell}(X_{r_\ell}) &\hbox{$i \in R_{\ell}$} \\
	&\geq v_i(X_j). &\hbox{$(X,R)$ is admissible}
	\end{align*}
	Thus, $i$ does not strongly envy $j$. Hence, $X$ is $\EFX$.
\end{proof}

\section{Envy-elimination}
One important part of our algorithm is an auxiliary process which we call \emph{envy-elimination}. This process is designed to bypass possible deadlocks in the updating processes.  This process takes an allocation $X$ as input and returns a configuration $(X', R')$ satisfying Properties \ref{prop1}, \ref{def-envy-conf} and \ref{good-alloc}. To do so, the process first makes the given allocation non-wasteful by simply retaking all the wasted goods. Then, the process chooses an agent with the minimum valuation, i.e.,
$$ \arg \min \valu_i(X_i).$$
 This agent is selected as the representative of set $R_1$ (thus we call her $r_1$). Next, as long as there exists an agent $j \in R_1$ and an agent $i \in N \setminus R_1$ such that $v_{j}(X_i) >  v_{r_1}(X_{r_1})$, we modify the bundle of agent $i$ to $\champset{X_i}{j}$ and add agent $i$ to $R_1$. Note that, the goods in $X_i \setminus \champset{X_i}{j}$ are returned to the pool. When no more agent could be added to $R_1$, we repeat the same process on the agents in $N \setminus R_1$ to construct $R_2$, $R_3$ and so on. Finally, when all the agents are added to some partition $R_i$ we terminate the process.  
 
  Algorithm \ref{envy-elim-alg} shows a pseudocode of the envy-elimination process.  We denote the allocation returned by this process by $X'$ and denote the partition by $R' = \langle R_1,R_2,\ldots, R_{|R'|}\rangle$. Note that, by the way we construct  $X'$, after the process, the valuation of each agent for her bundle  is at most her valuation before the process. However, as we show in Lemma \ref{extended-props}, the output of this process $(X', R')$ satisfies Properties \ref{prop1}, \ref{def-envy-conf} and \ref{good-alloc}.

\begin{algorithm} 
	\caption{Envy-elimination} \label{envy-elim-alg}
	Input :  instance $(N, M, (\util_1, \ldots , \util_n))$, allocation $\allocs$
	
	Output: configuration $(X', R')$ 
	
	\begin{algorithmic}[1]
		\While {$\exists i \in N, g \in X_i$ \suchthat $v_i(g) = 0$}
		\State $X_i \leftarrow X_i \setminus \{g\}$
		\EndWhile
		\State $\ell \leftarrow 1$
		\While {$N \neq \emptyset$}
		\State $r_\ell \leftarrow \argmin_{i \in N}v_i(\alloc_i)$
		\State $X'_{r_\ell} \leftarrow X_{r_\ell}$
		\State $R_\ell \leftarrow \{r_\ell\}$
		\State $N \leftarrow N \setminus \{r_\ell\}$
		\While {$\exists j \in R_\ell, i \in N$ \suchthat $v_j(X_i) > v_{r_\ell}(X_{r_\ell})$}
		\State $X'_i \leftarrow \champset{X_i}{j}$
		\State $R_\ell \leftarrow R_\ell \cup \{i\}$
		\State $N \setminus \{i\}$
		\EndWhile
		\State $\ell \leftarrow \ell+1$
		\EndWhile
		\State Return $(X', \langle R_1, R_2, \ldots, R_{\ell-1} \rangle)$
	\end{algorithmic}	
\end{algorithm}

\begin{lemma}\label{extended-props}
	Let $X$ be an allocation and  let $\sigma = (X', R')$ be the result of running the envy-elimination process on $X$. Then $\sigma$ satisfies Properties \ref{prop1}, \ref{def-envy-conf} and \ref{good-alloc}.
\end{lemma}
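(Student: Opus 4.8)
The plan is to verify the three properties in the order they appear, tracking the invariants maintained by the envy-elimination process. The key structural fact, established by inspecting Algorithm \ref{envy-elim-alg}, is the following: after the initial while-loop the allocation is non-wasteful, and this is preserved, since by Observation \ref{S-nonwasteful} each reassignment $X'_i \leftarrow \champset{X_i}{j}$ gives agent $i$ a bundle on which her value equals $v(\champset{X_i}{j})$; also the representatives $r_\ell$ keep their original bundles $X_{r_\ell}$, which were made non-wasteful in the first loop. Moreover, by the choice $r_\ell = \argmin_{i \in N} v_i(X_i)$ at the moment group $\ell$ is opened, and because every subsequently added agent $i$ gets a bundle $\champset{X_i}{j}$ with $v_i(\champset{X_i}{j}) > v_{r_\ell}(X_{r_\ell})$, each $r_\ell$ is indeed a minimizer of $v_\cdot(X'_\cdot)$ within its group $R_\ell$, so the notation $r_\ell$ from Equation \eqref{sl} is consistent; and the groups come out sorted by $v_{r_1}(X'_{r_1}) \le v_{r_2}(X'_{r_2}) \le \cdots$ because when group $\ell+1$ opens, its representative is a minimizer over the agents left in $N$, all of which have value at least $v_{r_\ell}(X'_{r_\ell})$ at that point (none of them could be pulled into $R_\ell$).

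For Property \ref{def-envy-conf} (envy-compatibility), I would fix $\ell$ and an agent $i \in R_\ell$ with a good $g \in X'_i$, and show $v_i(X'_i \setminus \{g\}) \le v_{r_\ell}(X'_{r_\ell})$. If $i = r_\ell$ this is the defining minimality of champsets applied at the time $r_\ell$'s own bundle... actually $r_\ell$ keeps $X_{r_\ell}$ unchanged, so here I need the stronger observation that $v_{r_\ell}(X_{r_\ell} \setminus \{g\}) \le v_{r_\ell}(X_{r_\ell})$ trivially, but that is not $\le v_{r_\ell}(X'_{r_\ell})$ — wait, it is exactly equality's slack; since valuations are nonnegative, $v_{r_\ell}(X'_{r_\ell} \setminus \{g\}) \le v_{r_\ell}(X'_{r_\ell})$ holds automatically. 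If $i \ne r_\ell$, then $X'_i = \champset{X_i}{j}$ for the agent $j \in R_\ell$ that triggered $i$'s inclusion, and by minimality of the champset, $v_i(\champset{X_i}{j} \setminus \{g\}) \le v_{r_\ell}(X_{r_\ell}) = v_{r_\ell}(X'_{r_\ell})$, which is exactly what is needed. The one subtlety is that later additions to $R_\ell$ do not change $X'_i$ for already-included $i$, nor do they change $X'_{r_\ell}$, so the inequality is not disturbed.

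For Property \ref{good-alloc} (admissibility), I would check the three bullets. The first bullet — a regular-edge path from $r_\ell$ to each $i \in R_\ell$ — follows by induction on the order in which agents join $R_\ell$: when $i$ joins, it is because some $j \in R_\ell$ (already reachable from $r_\ell$ by regular edges) has $v_j(X_i) > v_{r_\ell}(X_{r_\ell})$; after reassignment I must argue this becomes a regular edge $j \to i$ in $H_\sigma$, i.e. that $v_j(X'_i) > v_{r_\ell}(X'_{r_\ell})$ — but here is a gap to handle carefully, since the edge condition in Definition \ref{extended-champ} compares $v_j(X'_i)$ against $v_{r_{\ell'}}(X'_{r_{\ell'}})$ where $\ell'$ is $j$'s group, which is $\ell$, and $X'_i = \champset{X_i}{j}$ may have smaller value for $j$ than $X_i$ did; I expect the resolution is that $\champset{X_i}{j}$ is chosen so that $v_j(\champset{X_i}{j}) > v_{r_\ell}(X_{r_\ell})$ by its very definition (that $j$ is a champion / that the champset exceeds the threshold), and $v_{r_\ell}(X'_{r_\ell}) = v_{r_\ell}(X_{r_\ell})$, so the regular edge survives. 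The second bullet — no regular edge from $R_\ell$ to $R_k$ with $\ell < k$ — follows because when group $k$ was built, every $i \in N$ at that stage (including those later placed in $R_k$) had already been passed over for $R_\ell$, meaning no $j \in R_\ell$ had $v_j(X_i) > v_{r_\ell}(X_{r_\ell})$; combined with the fact that bundles of $R_k$-agents only shrink afterward and $X'_{r_\ell}$ is unchanged, we get $v_j(X'_i) \le v_{r_\ell}(X'_{r_\ell})$ for all $j \in R_\ell$, $i \in R_k$. The third bullet — $v_i(X'_i) = v(X'_i)$ for non-representatives — is immediate from Observation \ref{S-nonwasteful} since every non-representative's final bundle is a champset. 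Finally, Property \ref{prop1} ($X'$ is $\EFX$) follows from Lemma \ref{good-efx}, since $X'$ is non-wasteful and $(X', R')$ satisfies Properties \ref{def-envy-conf} and \ref{good-alloc}.

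The main obstacle I anticipate is the bookkeeping in the first bullet of admissibility: ensuring that the edges used to pull an agent into a group genuinely persist as regular champion-graph edges under the revised Definition \ref{extended-champ}, given that reassigning bundles to champsets lowers values and that the threshold against which edges are measured is the (group-dependent) representative value. Pinning down that $v_j(\champset{X_i}{j}) > v_{r_\ell}(X_{r_\ell})$ by construction — and that this threshold equals the final $v_{r_\ell}(X'_{r_\ell})$ — is the crux; everything else is a matter of carefully observing which quantities the algorithm leaves invariant.
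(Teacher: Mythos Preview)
Your plan is correct and follows essentially the same route as the paper: establish envy-compatibility from the minimality of champsets, check the three admissibility bullets via induction on insertion order, termination of the inner while-loop, and Observation \ref{S-nonwasteful}, and then deduce $\EFX$ from Lemma \ref{good-efx}. The one spot to tighten is the Property \ref{def-envy-conf} argument for $i \ne r_\ell$: minimality of $\champset{X_i}{j}$ directly yields $v_j(\champset{X_i}{j}\setminus\{g\}) \le v_{r_\ell}(X_{r_\ell})$, not $v_i$; the paper closes this by observing that $v_i$ and $v_j$ coincide on $\champset{X_i}{j}$ (both equal $v$ there, by non-wastefulness of the cleaned allocation and Observation \ref{S-nonwasteful}), a fact you already recorded in your first paragraph but should invoke explicitly at that step.
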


\begin{proof} 
	Let $\widetilde{X}$ be the result of removing wasted goods from allocation $X$. Hence, $\widetilde{X}$ is non-wasteful.
	We prove each one of the properties separately.
\paragraph*{Property \ref{def-envy-conf} (Envy-compatibility).} Note that when agent $i$ is added to $R_\ell$, there exists an agent $j$ such that $X'_i = \champset{\widetilde{X}_i}{j}$. 
By Definition \ref{e-champion}, 
\begin{align}
	v_j(X'_i) > v_{r_\ell}(X'_{r_\ell}) \geq v_j(X'_i \setminus \{g\}). \label{eq11}
\end{align}
Moreover,
\begin{align*}
	v_j(X'_i) &= v(X'_i) &\hbox{Observation \ref{S-nonwasteful}}\\
	&= v_i(X'_i) &\hbox{$X'_i \subseteq \widetilde{X}_i$ and $\widetilde{X}$ is non-wasteful}.
\end{align*}
Thus by  Equation \eqref{eq11} we have 
$v_i(X'_i) > v_{r_\ell}(X'_{r_\ell}) \geq v_i(X'_i \setminus \{g\}).$ 
Since we have $v_{r_1}(X'_{r_1}) \leq v_{r_2}(X'_{r_2}) \leq \ldots \leq v_{r_{|R'|}}(X'_{r_{|R'|}})$, $\sigma$ is envy-compatible.

\paragraph{Property \ref{good-alloc} (Admissibility).}
First, we show that for all $1 \leq \ell \leq |R'|$ and all $i \in R_\ell$, there is a path from $r_\ell$ to $i$ in $H_\sigma$ using only regular edges. The proof is by induction on the order that agents are added to $R_\ell$. For $i = r_\ell$, the property holds trivially. Now assume the property holds just before agent $i$ is added to $R_\ell$ meaning that there is a path from $r_\ell$ to $j$. We have
\begin{align*}
	v_j(X'_i) &= v_j(\champset{\widetilde{X}_i}{j}) &\hbox{$X'_i = \champset{\widetilde{X}_i}{j}$}\\
	&> v_{r_\ell}(X'_{r_\ell}) &\hbox{Definition \ref{e-champion}}.
\end{align*}
Therefore, the regular edge $j \rightarrow i$ exists in $H_{\sigma}$ and there is a path from $r_\ell$ to $i$. 

We must also show that for every agents $i \in R_\ell$ and $j \in R_k$ such that $\ell < k$, $v_{r_\ell}(X'_{r_\ell}) \geq v_i(X'_j)$. Assume otherwise. Then there exist $1 \leq \ell < k \leq |R'|$ and agents $i \in R_\ell$ and $j \in R_k$, such that $v_i(X'_j) > v_{r_\ell}(X'_{r_\ell})$.
Thus, before $R_k$ is created, $j$ must be added to $R_\ell$ which is a contradiction. Also, since $\widetilde{X}$ is non-wasteful and for all agents $i$, $X'_i \subseteq \widetilde{X}_i$, $X'$ is also non-wasteful. In particular, for all $i \notin \{r'_1, \ldots r'_{|R'|}\}$, $v_i(X'_i) = v(X'_i)$.
Hence, all properties in Definition \ref{good-alloc} hold and $\sigma$ is admissible.

\paragraph{Property \ref{prop1} ($\EFX$):} Currently, we know that after the envy-elimination process, the resulting configuration $(X',R')$ satisfies Properties \ref{def-envy-conf} and \ref{good-alloc}. Since $X'$ is  non-wasteful, by Lemma \ref{good-efx} we conclude that $X'$ is also $\EFX$.

\end{proof}

\section{An $\EFXX$ Allocation Algorithm}\label{ef2x}
In this section, we present our algorithm for finding a complete $\EFXX$ allocation. Our algorithm consists of four updating rules $\mathsf{U}_0$, $\mathsf{U}_1$, $\mathsf{U}_2$, and $\mathsf{U}_3$.  We start with configuration $\sigma = (X,R)$, where $X$ is an empty allocation and $R = \langle \{1\}, \{2\}, \ldots, \{n\} \rangle$.  At each step, we take the current configuration $\sigma = (X,R)$ as input and update it using one of these rules. These rules are designed in a way that they always (lexicographically) increase the value of $\Phi(\sigma)$, where
$$\Phi(\sigma) = \bigg[ v_{r_1}(X_{r_1}), v_{r_2}(X_{r_2}), \ldots, v_{r_{|R|}}(X_{r_{|R|}}),+\infty, \frac{1}{\sum_{i \notin \{r_1,r_2,\ldots,r_{|R|}\}} |\alloc_{i}|}, \sum_{i \in \{r_1,r_2,\ldots,r_{|R|}\}}  |\alloc_{i}| \bigg].\footnote{ Note that, Since we guarantee that $\sigma$  is envy-compatible, $\Phi(\sigma)$ is well-defined. Recall that $r_1,r_2,\ldots,r_{|R|}$ are the representative agents of partition $R$.}$$

Therefore, after a finite number of updates we obtain a configuration to which none of the updating rules is applicable. Afterwards,
we show that  the remaining goods have a special property that allows us to allocate them to the agents without violating the $\EFXX$ property. In other words, if for an allocation $X$ during the algorithm none of the rules is applicable and goods $ \good_1, \ldots, \good_{|P|}$ are not allocated, then we can find $|P|$ different agents $i_1, i_2, \ldots, i_{|P|}$ such that allocation $\alloc'$ with 
\begin{align*}
\alloc'_{i_\ell} &= \alloc_{i_\ell} \cup \{\good_\ell\} &\mbox{for all $1 \leq \ell \leq |P|$} \\
\alloc'_{j} &= \alloc_{j} &\mbox{for all $j \notin \{i_1, i_2, \ldots, i_{|P|}\}$ }
\end{align*}
is $\EFXX$. 
Algorithm \ref{alg2} shows a pseudocode of our algorithm. In the rest of this section, we first describe the updating rules. For each rule, we prove that the resulting configuration satisfies Properties \ref{prop1}, \ref{def-envy-conf}, and \ref{good-alloc}. Also, we prove that all the rules are $\Phi$-improving. Finally, we describe how we can allocate the remaining goods to maintain the $\EFXX$ property.

Throughout this section, we denote the configuration before and after each update respectively by $\sigma = (X, R)$ and $\sigma' = (X', R')$. In what follows we describe the updating rules and their properties.

\begin{algorithm} 
	\caption{Complete $\EFXX$ allocation} \label{alg2}
	Input : instance $(N, M, (\util_1, \ldots , \util_n))$ 
	
	Output: allocation $X$
	
	\begin{algorithmic}[1]
		\State $\allocs \leftarrow \langle\emptyset, \emptyset, \ldots, \emptyset\rangle$
		\State $R \leftarrow \langle \{1\}, \{2\}, \ldots, \{n\} \rangle$
		\While {$\mathsf{U}_0$ or $\mathsf{U}_1$ or $\mathsf{U}_2$ or $\mathsf{U}_3$ is applicable}
		\State Let $i$ be the minimum index s.t. $\mathsf{U}_i$ is applicable
		\State Update $(X, R)$ via Rule $\mathsf{U}_i$ 
		\EndWhile
		\For {$t \leftarrow |R| \text{ to } 1$}
		\While {$\exists i \in R_t \cap N$ \suchthat $v_i(P)> 0$}
		\State Choose such $i$ with maximum distance from $R_t$
		\State $g \leftarrow \argmax_{h \in P} v_i(h)$
		\State $X_i \leftarrow X_i \cup \{g\}$
		\State $N \leftarrow N \setminus \{i\}$
		\EndWhile
		\EndFor
		\State Let $P_X = \{g_1, g_2, \ldots, g_{|P_X|}\}$
		\State Let $i_1, i_2, \ldots, i_{|P_X|}$ be $|P_X|$ different agents in $N$
		\For {$\ell \leftarrow 1 \text{ to } |P_X|$}
		\State $X_{i_\ell} \leftarrow X_{i_\ell} \cup \{g_\ell\}$
		\EndFor
		\State Return $X$
	\end{algorithmic}	
\end{algorithm}

\subsubsection*{Rule $\mathsf{U}_0$} 
We start with rule $\textsf{U}_0$ which is our most basic rule. This rule allocates an unallocated good $g$ to a representative agent, such that the resulting configuration satisfies Properties  \ref{prop1}, \ref{def-envy-conf} and \ref{good-alloc}.
In the following, you can find a schematic view of Rule $\mathsf{U}_0$. 
\vspace{0.2cm}
\begin{tcolorbox}[colback=blue!5,coltitle=blue!50!black,colframe=blue!25,
title= \textbf{Rule $\mathsf{U}_0$}]
	
	{Preconditions:}\label{pre1}
	\begin{itemize}
		\item $\sigma = (X, R)$ satisfies Properties \ref{prop1}, \ref{def-envy-conf}  and \ref{good-alloc}.
		\item There exists an unallocated good $ \good \in P$ and an index $\ell$ such that $v_{r_\ell}(g) = 0$ and for every $1 \leq k \leq |R|$, 	$$\forall_{i \in R_k} \quad v_i(X_{r_\ell} \cup \{g\}) \leq v_{r_{k}}(X_{r_{k}}).$$ 
	\end{itemize}
	\hrulefill
	
	{Process:}
	\begin{itemize}
		\item Allocate $\good$ to $r_\ell$. 
		\item Set $R' = R$.
	\end{itemize}
	\hrulefill
	
	{Guarantees:}
	\begin{itemize}
		\item $\sigma' = (X', R')$ satisfies Properties \ref{prop1}, \ref{def-envy-conf}, and \ref{good-alloc}. 
		\item $\Phi(\sigma') {\lexlarger} \Phi(\sigma)$.	
	\end{itemize}
\end{tcolorbox}

\begin{observation}\label{same-util}
	Let $\sigma' = (X', R')$ be the result of updating $\sigma = (X, R)$ via Rule $\textsf{U}_0$. Then for every agent $i$ we have $v_i(X'_i) = v_i(X_i)$.
\end{observation}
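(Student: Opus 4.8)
The plan is to simply unwind what Rule $\mathsf{U}_0$ does. First I would note that the process of Rule $\mathsf{U}_0$ modifies exactly one bundle: it allocates the single unallocated good $g$ to the representative $r_\ell$, and it sets $R' = R$ while leaving every other bundle untouched. Consequently $X'_i = X_i$ for every agent $i \neq r_\ell$, and for those agents the equality $v_i(X'_i) = v_i(X_i)$ holds trivially.

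The remaining case is $i = r_\ell$, where $X'_{r_\ell} = X_{r_\ell} \cup \{g\}$. Since $g$ lies in the pool of unallocated goods of $\sigma$, it does not belong to $X_{r_\ell}$, so this is a genuine disjoint union; and the precondition of Rule $\mathsf{U}_0$ guarantees $v_{r_\ell}(g) = 0$. Using additivity of the restricted additive valuation $v_{r_\ell}$, we obtain $v_{r_\ell}(X'_{r_\ell}) = v_{r_\ell}(X_{r_\ell}) + v_{r_\ell}(g) = v_{r_\ell}(X_{r_\ell})$, which establishes the claim for all agents. There is essentially no obstacle here: the only point worth checking is that $g \notin X_{r_\ell}$ so that additivity applies cleanly, and this is immediate from $g$ being in the pool.
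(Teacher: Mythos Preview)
Your argument is correct and is exactly the intended reasoning: the paper states this as an observation without proof, since it follows immediately from the process of Rule $\mathsf{U}_0$ (only $r_\ell$'s bundle changes, by adding a good $g$ with $v_{r_\ell}(g)=0$). There is nothing to add.
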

\begin{lemma}\label{good-alloc-0}
	Let $\sigma' = (X', R')$ be the result of updating $\sigma = (X, R)$ via Rule $\textsf{U}_0$. Then, $\sigma'$ satisfies Properties \ref{prop1}, \ref{def-envy-conf} and \ref{good-alloc}. 
\end{lemma}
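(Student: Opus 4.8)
The plan is to verify the three target properties for $\sigma' = (X',R')$ one at a time, leaning heavily on the fact (Observation \ref{same-util}) that Rule $\mathsf{U}_0$ does not change any agent's value for her own bundle: $v_i(X'_i) = v_i(X_i)$ for all $i$, and on the fact that $R' = R$ so the grouping, the representatives $r_\ell$, and the sorting of the groups are all unchanged. In particular, since the only change is that good $g$ (with $v_{r_\ell}(g)=0$) is added to $X_{r_\ell}$, the allocation $X'$ differs from $X$ only in that $r_\ell$'s bundle has one more wasted good.

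First I would handle Property \ref{def-envy-conf} (envy-compatibility). Because utilities of own bundles are unchanged and $R'=R$, each $r_\ell$ is still the min-utility agent of its group, so the representatives and their order are preserved. For the inequality $v_i(X'_i \setminus \{h\}) \le v_{r_\ell}(X'_{r_\ell})$: for $i \neq r_\ell$ this is literally the old inequality since $X'_i = X_i$ and $v_{r_\ell}(X'_{r_\ell}) = v_{r_\ell}(X_{r_\ell})$; for $i = r_\ell$ and any $h \in X'_{r_\ell} = X_{r_\ell}\cup\{g\}$, since $v_{r_\ell}(g)=0$ we get $v_{r_\ell}(X'_{r_\ell}\setminus\{h\}) \le v_{r_\ell}(X'_{r_\ell}) = v_{r_\ell}(X_{r_\ell})$, which suffices. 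So $\sigma'$ is envy-compatible.

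Next I would verify Property \ref{good-alloc} (admissibility), checking its three bullets. For the first bullet (regular-edge path from $r_\ell$ to every $i$ in its group): regular edges out of agents are determined by $v_i(X_j) > v_{r_\ell}(X_{r_\ell})$; the only bundle that changed is $X_{r_\ell}$, and it only gained a $0$-valued good, so $v_i(X'_{r_\ell}) = v_i(X_{r_\ell}) + v_i(g)$ could only \emph{increase} — meaning no regular edge is destroyed, and all old paths survive. For the second bullet (no regular edge from a lower-indexed group to a higher one): I must rule out that the new good on $X_{r_\ell}$'s bundle creates a forbidden edge \emph{into} $r_\ell$ from some lower group, and this is exactly where the precondition bites — the precondition says $v_i(X_{r_\ell}\cup\{g\}) \le v_{r_k}(X_{r_k})$ for every $i \in R_k$, all $k$; in particular for $k < \ell$ this says no agent in a lower-indexed group envies $X'_{r_\ell}$ enough to form a regular edge, and for agents looking at other bundles nothing changed. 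The third bullet (non-representatives have non-wasteful bundles) is immediate: non-representative bundles are untouched, and $r_\ell$ is a representative so adding a wasted good to it is harmless. Finally, Property \ref{prop1} ($\EFX$): the cleanest route is to observe $X'$ need not be non-wasteful (we just added a wasted good), so I cannot directly invoke Lemma \ref{good-efx}; instead I argue directly that removing any single good from any bundle restores the non-envy, redoing the computation of Lemma \ref{good-efx} but noting that for the bundle $X'_{r_\ell}$, $v_j(X'_{r_\ell}\setminus\{h\}) \le v_j(X'_{r_\ell}) = v_j(X_{r_\ell}) + v_j(g)$, and this is bounded using the precondition $v_j(X_{r_\ell}\cup\{g\}) \le v_{r_k}(X_{r_k}) \le v_j(X_j)$ for $j \in R_k$, while for all other bundles the old $\EFX$ argument (or Lemma \ref{good-efx} applied to the pairs not involving $r_\ell$) goes through verbatim.

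The main obstacle is the $\EFX$ verification for the single modified bundle $X'_{r_\ell}$: since the modification makes the allocation wasteful, Lemma \ref{good-efx} does not apply off the shelf, and one has to isolate exactly the pairs $(j, r_\ell)$ and show no agent $j$ strongly envies $X'_{r_\ell}$, which is precisely what the second clause of the precondition — the uniform bound $v_i(X_{r_\ell}\cup\{g\}) \le v_{r_k}(X_{r_k})$ over all groups $R_k$ — is designed to give; everything else is bookkeeping that follows from $R'=R$ and Observation \ref{same-util}.
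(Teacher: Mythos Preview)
Your proposal is correct and follows essentially the same approach as the paper's proof: both verify the three properties separately using Observation \ref{same-util} and $R'=R$, argue that only $r_\ell$'s bundle changes so regular edges can only be gained (preserving paths), use the precondition $v_i(X_{r_\ell}\cup\{g\}) \le v_{r_k}(X_{r_k})$ both to rule out new cross-group regular edges and to show directly that no agent (strongly) envies $X'_{r_\ell}$, and note that non-representative bundles are untouched. Your ordering differs (the paper does $\EFX$ first) and you are more explicit about the $i=r_\ell$ case in envy-compatibility and about why Lemma \ref{good-efx} cannot be invoked, but the substance is the same.
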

\begin{proof} 
	We prove each one of the properties separately. 
	\paragraph{Property \ref{prop1} ($\EFX$):}
	In order to prove that $X'$ is $\EFX$, it suffices to prove that no agent strongly envies $X'_{r_\ell}$. For all $1 \leq k \leq |R|$ and $i \in R_k$, 
	\begin{align*}
		v_i(X'_i) &= v_i(X_i) &\hbox{Observation \ref{same-util}}\\
		&\geq v_{r_k}(X_{r_k}) &\hbox{$i \in R_k$}\\
		&\geq v_i(X_{r_\ell} \cup \{g\}). &\hbox{Precondition of Rule $\mathsf{U}_0$}
	\end{align*}

	Thus, no agent envies $r_\ell$. 
	
	\paragraph{Property \ref{def-envy-conf} (Envy-compatibility):}
	By Observation \ref{same-util}, for all $1 \leq k \leq |R|$, 
	$$r_k = \arg \min_{j \in R_k} \valu_j(X'_j).$$
	Also, for all $1 \leq k \leq |R|, i \in R_k, i \neq r_\ell$ and $g \in X_i$ we have
	\begin{align*}
		v_i(X'_i \setminus \{\good\}) &= v_i(X_i \setminus \{\good\}) &\hbox{$X'_i = X_i$}\\
		&\leq v_{r_k}(X_{r_k}) &\hbox{$\sigma$ is envy-compatible}\\
		&= v_{r_k}(X'_{r_k}) &\hbox{Observation \ref{same-util}}.
	\end{align*}

	\paragraph*{Property \ref{good-alloc} (Admissibility):} 
	Note that after the update, agent $r_\ell$ has received one additional good and all other bundles are untouched. Since $v_{r_\ell}(X'_{r_\ell}) = v_{r_\ell}(X_{r_\ell})$, the only difference between $H_{\sigma'}$ and $H_\sigma$ is that $r_\ell$ might have additional incoming edges. Also, since $\sigma$ is admissible, for all $1 \leq \ell \leq |R|$ and all $i \in R_\ell$, there is a path from $r_\ell$ to $i$ in $H_{\sigma}$ using only regular edges. Hence, for all $1 \leq \ell \leq |R|$ and all $i \in R_\ell$, there is a path from $r_\ell$ to $i$ in $H_{\sigma'}$ which uses only regular edges.
	
	Now we prove that for every agents $i \in R_k$ and $j \in R_{k'}$ such that $k < k'$, $v_{r_k}(X'_{r_k}) \geq v_i(X'_j)$. If $j = r_\ell$, we have
	\begin{align*}
		v_{r_k}(X'_{r_k}) &= v_{r_k}(X_{r_k}) &\hbox{Observation \ref{same-util}}\\
		&\geq v_i(X_{r_\ell} \cup \{g\}) &\hbox{Precondition of Rule $\mathsf{U}_0$}\\
		&= v_i(X'_j). &\hbox{$j = r_\ell$}
	\end{align*} 
	Also, if $j \neq r_\ell$, we have
	\begin{align*}
		v_{r_k}(X'_{r_k}) &= v_{r_k}(X_{r_k}) &\hbox{Observation \ref{same-util}}\\
		&\geq v_i(X_j) &\hbox{$\sigma$ is admissible}\\
		&= v_i(X'_j). &\hbox{$X'_j = X_j$}
	\end{align*} 
	Therefore, for every agents $i \in R_k$ and $j \in R_{k'}$ such that $k < k'$, $v_{r_k}(X'_{r_k}) \geq v_i(X'_j)$. Also, for all $i \notin \{r_1, \ldots, r_{|R|}\}$, $X'_i = X_i$. By admissibility of $X$, $v_i(X_i) = v(X_i)$ and hence, $v_i(X'_i) = v(X'_i)$. Thus, Property \ref{good-alloc} holds.
\end{proof}
\begin{lemma}\label{phi-imp-0}
	Rule $\textsf{U}_0$ is $\Phi$-improving.
\end{lemma}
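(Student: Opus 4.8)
The plan is to show that applying Rule $\mathsf{U}_0$ strictly increases $\Phi(\sigma)$ in the lexicographic order. Recall that
$$\Phi(\sigma) = \bigg[ v_{r_1}(X_{r_1}), \ldots, v_{r_{|R|}}(X_{r_{|R|}}),+\infty, \frac{1}{\sum_{i \notin \{r_1,\ldots,r_{|R|}\}} |\alloc_{i}|}, \sum_{i \in \{r_1,\ldots,r_{|R|}\}}  |\alloc_{i}| \bigg].$$
By Observation \ref{same-util}, $v_i(X'_i) = v_i(X_i)$ for every agent $i$, and in particular $R' = R$ and each representative $r_k$ stays the same with the same value $v_{r_k}(X'_{r_k}) = v_{r_k}(X_{r_k})$. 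Hence the first $|R|$ coordinates of $\Phi(\sigma')$ and $\Phi(\sigma)$ coincide, and so does the $(|R|+1)$-st coordinate $+\infty$. Therefore $\Phi$ is determined in the first point of difference by the last two coordinates.

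Next I would observe that the good $g$ allocated by Rule $\mathsf{U}_0$ was previously unallocated (it lies in the pool $P$), and it is allocated to the representative $r_\ell \in \{r_1,\ldots,r_{|R|}\}$, while every other bundle is untouched. Consequently the multiset $\{|X'_i| : i \notin \{r_1,\ldots,r_{|R|}\}\}$ equals $\{|X_i| : i \notin \{r_1,\ldots,r_{|R|}\}\}$ (the non-representative bundles do not change), so the denominator $\sum_{i \notin \{r_1,\ldots,r_{|R|}\}} |X'_{i}| = \sum_{i \notin \{r_1,\ldots,r_{|R|}\}} |X_{i}|$, and the $(|R|+2)$-nd coordinate of $\Phi$ is also unchanged. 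Finally, the last coordinate satisfies
$$\sum_{i \in \{r_1,\ldots,r_{|R|}\}} |X'_{i}| = \Big(\sum_{i \in \{r_1,\ldots,r_{|R|}\}} |X_{i}|\Big) + 1,$$
since exactly one good is added to $X_{r_\ell}$ and $r_\ell$ is a representative. Thus $\Phi(\sigma')$ agrees with $\Phi(\sigma)$ in all coordinates except the last, where it is strictly larger, so $\Phi(\sigma') \lexlarger \Phi(\sigma)$.

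There is essentially no obstacle here: the only thing to be slightly careful about is that the denominator in the $(|R|+2)$-nd coordinate is nonzero (so the coordinate is well-defined) — but if that sum were $0$, then the coordinate would be $+\infty$ both before and after the update, and the argument still goes through unchanged via the last coordinate. One should also note that $g$ cannot have been allocated to a non-representative before (it is unallocated by hypothesis), which is what guarantees the non-representative sum is untouched; this is immediate from the precondition $g \in P$.
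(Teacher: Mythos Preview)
Your proof is correct and follows essentially the same approach as the paper: both use Observation~\ref{same-util} to see that the representative values (and hence the first $|R|{+}1$ coordinates) are unchanged, that the non-representative bundle sizes are untouched, and that the final coordinate increases by exactly one since $g$ is added to the representative $r_\ell$. Your extra remarks about the denominator possibly being zero and about $g\in P$ are sound clarifications but not needed for the core argument.
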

\begin{proof}
Let $\{r_1, r_2, \ldots, r_{|R|}\}$ be the set of representatives before the update. 
	We have
	\begin{align*}
		\Phi(\sigma') &= \bigg[ v_{r_1}(X'_{r_1}), \ldots, v_{r_{|R|}}(X'_{r_{|R|}}),+\infty, \frac{1}{\sum_{i \notin A} |X'_{i}|}, \sum_{i \in A}  |X'_{i}| \bigg] \\
		&= \bigg[ v_{r_1}(X_{r_1}), \ldots, v_{r_{|R|}}(X_{r_{|R|}}),+\infty, \frac{1}{\sum_{i \notin A} |X'_{i}|}, \sum_{i \in A}  |X'_{i}| \bigg] &\hbox{Observation \ref{same-util}} \\
	\intertext{Since for all $i \neq r_\ell$ we have $X'_i = X_i$  and $X'_{r_\ell} = X_{r_\ell} \cup \{g\}$, we have}
		\Phi(\sigma') &= \bigg[ v_{r_1}(X_{r_1}), \ldots, v_{r_{|R|}}(X_{r_{|R|}}),+\infty, \frac{1}{\sum_{i \notin A} |X_{i}|}, \sum_{i \in A}  |X_{i}|+1 \bigg]\\
		&\lexlarger \bigg[ v_{r_1}(X_{r_1}), \ldots, v_{r_{|R|}}(X_{r_{|R|}}),+\infty, \frac{1}{\sum_{i \notin A} |\alloc_{i}|}, \sum_{i \in A}  |\alloc_{i}| \bigg] \\
		&= \Phi(\sigma).
	\end{align*}
\end{proof}
\subsubsection*{Rule $\mathsf{U}_1$}

Our second updating rule updates the  configuration using a special type of cycle in the champion-graph called champion cycle. 

\begin{definition}\label{extended-champion cycle}
	For a configuration $\sigma = (X, R)$, we call a cycle $C$ of $H_\sigma$ a \textit{champion cycle}, if for every champion edge $i \xrightarrow[]{g} j$ in $C$, $j \in \{r_1, r_2, \ldots, r_{|R|}\}$.
	Additionally, for every two champion edges of $C$ with labels $g$ and $g'$, we have that $g\neq g'$.
\end{definition}
In this rule, we first find a \textit{champion cycle} in $H_\sigma$ and update the allocation through this cycle. Next, we apply envy-elimination.
In the following, you can find a schematic view of Rule $\mathsf{U}_1$.

\vspace{0.5cm}
\begin{tcolorbox}[colback=blue!5,coltitle=blue!50!black,colframe=blue!25,
	title= \textbf{Rule $\mathsf{U}_1$}]
	
	{Preconditions:}
	\begin{itemize}
		\item $\sigma = (X, R)$ satisfies Properties \ref{prop1}, \ref{def-envy-conf}  and \ref{good-alloc}.
		\item There exists a \textit{champion cycle} $C$ in $H_\sigma$.
	\end{itemize}
	\hrulefill
	
	{Process:}
	\begin{itemize}
		\item For every edge $i \rightarrow j\in C$, set $\widetilde{X}_i = X_j$.
		\item For every champion edge $i \xrightarrow[]{g} j \in C$, set $\widetilde{X}_i = \champset{X_j \cup \{g\}}{i}$.
		\item For every agent $j \notin C$, set $\widetilde{X}_j = X_j$.
		\item Apply envy-elimination on $\widetilde{X}$.
	\end{itemize}
	\hrulefill

	{Guarantees:}
	\begin{itemize}
		\item $\sigma' = (X', R')$ satisfies Properties \ref{prop1}, \ref{def-envy-conf}, and \ref{good-alloc}.
		\item $\Phi(\sigma') {\lexlarger} \Phi(\sigma)$.
	\end{itemize}
	
\end{tcolorbox}
\vspace{0.5cm}

Since we apply envy-elimination in the last step of Rule $\mathsf{U}_1$, Lemma \ref{extended-props} implies that the resulting configuration satisfies Properties \ref{prop1}, \ref{def-envy-conf}, and \ref{good-alloc}.

\begin{corollary}[of Lemma \ref{extended-props}]\label{good-alloc-2}
	Let $\sigma' = (X', R')$ be the result of updating $\sigma = (X, R)$ via Rule $\textsf{U}_1$. Then, $\sigma' = (X', R')$ satisfies Properties \ref{prop1}, \ref{def-envy-conf}, and \ref{good-alloc}.
\end{corollary}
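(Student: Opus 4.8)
The plan is to obtain this as an immediate consequence of Lemma \ref{extended-props}. The last step of Rule $\mathsf{U}_1$ applies the envy-elimination process to the allocation $\widetilde{X}$, and Lemma \ref{extended-props} asserts that the output of envy-elimination on \emph{any} allocation satisfies Properties \ref{prop1}, \ref{def-envy-conf}, and \ref{good-alloc}. So the only point that genuinely requires checking is that $\widetilde{X}$, as built from the champion cycle $C$, is a well-defined partial allocation — that is, that the process of Rule $\mathsf{U}_1$ never hands the same good to two distinct agents, and that every $\champset{X_j\cup\{g\}}{i}$ invoked in the process is defined.

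First I would verify that $\widetilde{X}$ is well-defined. In the cycle $C$, every vertex has exactly one outgoing and one incoming edge within $C$. Hence for each agent $j$ lying on $C$, the bundle $X_j$ is handed — in whole along a regular incoming edge, or as a subset $\champset{X_j\cup\{g\}}{i}\subseteq X_j\cup\{g\}$ along a champion incoming edge — to exactly one agent, namely the predecessor of $j$ on $C$; agents off $C$ retain their own bundle. Thus every already-allocated good is assigned to at most one agent. The only other goods appearing in $\widetilde{X}$ are the labels $g$ of the champion edges of $C$: each such $g$ is unallocated in $X$, so it does not clash with any $X_j$, and by Definition \ref{extended-champion cycle} distinct champion edges of $C$ carry distinct labels, so no label good is assigned twice. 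Moreover, for a champion edge $i\xrightarrow[]{g}j$ with $i\in R_\ell$, the set $\champset{X_j\cup\{g\}}{i}$ is defined because $i$ being a champion of $X_j\cup\{g\}$ entails $v_i(X_j\cup\{g\}) > v_{r_\ell}(X_{r_\ell})$, which is exactly the hypothesis needed in Definition \ref{e-champion}. Therefore $\widetilde{X}$ is a valid partial allocation.

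Then I would simply apply Lemma \ref{extended-props} with input $\widetilde{X}$: the envy-elimination process run on $\widetilde{X}$ returns a configuration $\sigma' = (X',R')$ satisfying Properties \ref{prop1}, \ref{def-envy-conf}, and \ref{good-alloc}, which is precisely the statement of the corollary.

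I do not expect any real obstacle; all the substance lives in Lemma \ref{extended-props}, and the remaining argument is bookkeeping about the cycle structure. If there is any subtlety at all, it is in ensuring the champion-cycle update cannot collide a label good with a bundle good or with another label good, and this is handled by the unallocatedness of labels together with the distinct-labels clause of Definition \ref{extended-champion cycle}.
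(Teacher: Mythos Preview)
Your proposal is correct and follows exactly the paper's approach: the paper treats this as an immediate corollary of Lemma \ref{extended-props}, noting only that envy-elimination is the last step of Rule $\mathsf{U}_1$. Your additional bookkeeping that $\widetilde{X}$ is a well-defined partial allocation is a welcome sanity check that the paper leaves implicit.
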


\begin{lemma}\label{phi-imp-2}
	Rule $\textsf{U}_1$ is $\Phi$-improving.
\end{lemma}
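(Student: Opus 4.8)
The plan is to track how the potential vector $\Phi(\sigma)$ changes when we push allocation along a champion cycle $C$ and then apply envy-elimination. Write $\sigma = (X,R)$ for the configuration before the update, with representatives $r_1,\ldots,r_{|R|}$ sorted by $v_{r_\ell}(X_{r_\ell})$, and let $\sigma' = (X',R')$ be the output. Since envy-elimination never increases any agent's value for her own bundle, the only way $\Phi$ can increase is if \emph{some} agent on the cycle strictly gains, and that gain propagates favourably through the lexicographic order; so first I would pin down exactly which agents the cycle touches and show each of them weakly gains, with at least one strictly gaining.

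The key structural observation is that a champion cycle, by Definition~\ref{extended-champion cycle}, alternates in a controlled way: every champion edge $i \xrightarrow{g} j$ in $C$ has its head $j$ equal to some representative $r_k$. For a regular edge $i \to j$ in $C$ we set $\widetilde X_i = X_j$, and by definition of a regular edge $v_i(X_j) > v_{r_{\ell(i)}}(X_{r_{\ell(i)}})$ where $\ell(i)$ is the group index of $i$; for a champion edge $i \xrightarrow{g} j$ we set $\widetilde X_i = \champset{X_j \cup \{g\}}{i}$, and by Definition~\ref{e-champion} (champion) $v_i(\widetilde X_i) > v_{r_{\ell(i)}}(X_{r_{\ell(i)}})$ as well. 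So in $\widetilde X$ every agent $i$ on the cycle has $v_i(\widetilde X_i) > v_{r_{\ell(i)}}(X_{r_{\ell(i)}})$, i.e. strictly more than the \emph{old} representative value of her own group. Meanwhile the goods that leave an agent's bundle along the cycle either move to another cycle agent or (for the minimal champion sets) return to the pool, and agents off the cycle are untouched. I would then argue that after envy-elimination the representative values can only go up coordinate-by-coordinate in the crucial prefix: the new representative $r'_1$ is the global minimizer of $v_i(X'_i)$, and I must show $v_{r'_1}(X'_{r'_1}) \ge v_{r_1}(X_{r_1})$, with the first strict inequality appearing at some coordinate. The cleanest way is: no agent's value drops below the smallest old representative value $v_{r_1}(X_{r_1})$ that a member of her group could sink to — cycle agents are strictly above their own old group representative, which is at least $v_{r_1}(X_{r_1})$; off-cycle agents keep their old value. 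But that only gives $\ge v_{r_1}$ for \emph{everyone}, hence $v_{r'_1}(X'_{r'_1}) \ge v_{r_1}(X_{r_1})$; to get strictness somewhere one uses that the cycle is nontrivial, so at least one agent strictly increased, which forces either $v_{r'_1}(X'_{r'_1}) > v_{r_1}(X_{r_1})$ directly or, if that first coordinate is unchanged (the same min value is still attained), an analysis showing the tie is eventually broken further along the vector — and if all of $v_{r'_1},\ldots$ exactly match $v_{r_1},\ldots$, then no goods returned to the pool and the cycle merely permuted bundles among representatives while strictly increasing some of them, a contradiction. I would handle the two regimes (a good returns to the pool vs. none does) separately, exactly as in the analogous envy-cycle arguments; in the first regime some representative strictly gains or the pool shrinks (reflected in the last two coordinates), in the second regime the permutation-plus-strict-increase argument applies.

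A subtle point I need to nail down is the interaction with the tail coordinates $+\infty$, $1/\sum_{i \notin A}|X_i|$, $\sum_{i \in A}|X_i|$ of $\Phi$: the $+\infty$ separator guarantees that if the representative-value prefix strictly increases (or strictly dominates lexicographically) we are done immediately regardless of the tail, so the only case where the tail matters is when the entire prefix $[v_{r'_1}(X'_{r'_1}),\ldots,v_{r'_{|R'|}}(X'_{r'_{|R'|}})]$ equals $[v_{r_1}(X_{r_1}),\ldots,v_{r_{|R|}}(X_{r_{|R|}})]$ (in particular $|R'| = |R|$). In that equality case I would show that the cycle update, followed by envy-elimination, strictly increased $\sum_{i \in A}|X_i|$ or strictly decreased $\sum_{i \notin A}|X_i|$ — intuitively because pushing the champion sets around either moves a good from a non-representative to a representative or shrinks the pool while keeping representative values frozen.

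The main obstacle I anticipate is precisely this last, degenerate case: proving that when all representative values are preserved exactly, the group structure $R'$ must coincide with $R$ and the net effect on the bundle-size coordinates is strictly favourable. This requires carefully understanding what envy-elimination does when nothing drops below $v_{r_1}(X_{r_1})$ — essentially that it reproduces the same groups in the same order — and then a counting argument on how many goods the champion operation $\champset{X_j \cup \{g\}}{i}$ discards versus how many it absorbs from the pool. I expect to lean on Observation~\ref{S-nonwasteful} (champion sets are non-wasteful) and on the minimality in Definition~\ref{e-champion} to control these counts, and on the fact that the cycle contains at least one champion edge bringing a fresh pool good $g$ into circulation, so the pool strictly shrinks in that frozen-prefix scenario.
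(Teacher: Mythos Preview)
Your plan has the right local observation — every cycle agent $i$ ends up with $v_i(\widetilde X_i) > v_{r_{\ell(i)}}(X_{r_{\ell(i)}})$, and off-cycle agents are untouched — but the global assembly is wrong in two places.

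First, your fallback for the ``frozen-prefix'' case rests on the claim that the cycle \emph{must} contain a champion edge absorbing a pool good. That is false: Definition~\ref{extended-champion cycle} places constraints only on the champion edges that \emph{do} appear; a cycle consisting solely of regular edges is a perfectly valid champion cycle. In that case no pool good enters, bundles are merely permuted, and your tail-coordinate argument ($\sum|X_i|$ over representatives vs.\ non-representatives) collapses: neither sum changes. So the degenerate case you single out as the ``main obstacle'' is not handled.

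Second, and more structurally, the bound $v_{r'_1}(X'_{r'_1}) \ge v_{r_1}(X_{r_1})$ is far too weak to drive a lexicographic argument through an arbitrary number of coordinates, and your tie-breaking sketch (``eventually broken further along the vector'') is not an argument. The paper avoids this entirely by fixing a single pivot index: let $\ell$ be the smallest index with $r_\ell \in C$ and (after a harmless reordering of ties) arrange that every $r_k$ with $k>\ell$ and $r_k\notin C$ satisfies $v_{r_k}(X_{r_k})>v_{r_\ell}(X_{r_\ell})$. Then one checks the two hypotheses of Lemma~\ref{main-Phi-improving}: agents in $R_1\cup\cdots\cup R_{\ell-1}$ keep their bundles, and \emph{every} agent in $R_\ell\cup\cdots\cup R_{|R|}$ — whether on or off the cycle — satisfies $v_i(\widetilde X_i)>v_{r_\ell}(X_{r_\ell})$. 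Lemma~\ref{main-Phi-improving} (via Lemma~\ref{lex3}) then delivers $\Phi(\sigma')\lexlarger\Phi(\sigma)$ directly from the representative-value prefix; the tail coordinates are never touched, and no ``two regimes'' split is needed. What you are missing is precisely this pivot-index reduction to Lemma~\ref{main-Phi-improving}; once you have it, the proof is three short case checks rather than a delicate global bookkeeping.
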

\begin{proof}
	We prove that the conditions of Lemma \ref{main-Phi-improving} hold.
	Let $\ell$ be the smallest index such that $r_\ell \in C$ and $\widetilde{X}$ be the allocation just before applying envy-elimination. 
	Then we have, 
	\begin{align*}
		\hbox{$\forall i \in R_1 \cup \ldots \cup R_{\ell-1}, \quad \widetilde{X}_i = X_i$.}
	\end{align*}
	Therefore, Condition \ref{cond1} of Lemma \ref{main-Phi-improving} holds.
	
	Without loss of generality, we can assume for all $k > \ell$ such that $r_k \notin C$, $v_{r_k}(X_{r_k}) > v_{r_\ell}(X_{r_\ell})$.\footnote{Otherwise, we can simply reorder the groups in $R$ which their representative agents have value $v_{r_\ell}(X_{r_\ell})$.} Assume $i \in R_k$ and $k \geq \ell$.
	If $i \notin C$, we have
	\begin{align*}
		v_i(\widetilde{X}_i) &= v_i(X_i) &\hbox{$\widetilde{X}_i = X_i$}\\ 
		&\geq v_{r_k}(X_{r_k}) &\hbox{$i \in R_k$}\\
		&> v_{r_\ell}(X_{r_\ell}).
	\end{align*}  
	If $i \rightarrow j \in C$, we have
	\begin{align*}
		v_i(\widetilde{X}_i) &= v_i(X_j) &\hbox{$\widetilde{X}_i = 	X_j$}\\
		&> v_{r_k}(X_{r_k}) &\hbox{Regular edge from $i$ to $j$}\\
		&> v_{r_\ell}(X_{r_\ell}).
	\end{align*}
	If $i \xrightarrow[]{g} j \in C$, we have
	\begin{align*}
		v_i(\widetilde{X}_i) &= v_i(\champset{X_j \cup \{g\}}{i}) &\hbox{$\widetilde{X}_i = 	\champset{X_j \cup \{g\}}{i}$}\\
		&> v_{r_k}(X_{r_k}) &\hbox{Definition \ref{e-champion}}\\
		&> v_{r_\ell}(X_{r_\ell}).
	\end{align*}
	 Therefore, we have 
	\begin{align*}
		\forall i \in R_\ell \cup \ldots \cup R_{|R|}, \quad v_i(\widetilde{X}_i) > v_{r_\ell}(X_{r_\ell}).
	\end{align*}
	Thus, Condition \ref{cond2} of Lemma \ref{main-Phi-improving} holds too and therefore, $\Phi(\sigma') \lexlarger \Phi(\sigma)$.
\end{proof}

In Lemmas \ref{champ-edge-exists} and \ref{small-pool} we prove two properties of every admissible configuration $\sigma$ on which neither of the Rules $\mathsf{U}_0$ or $\mathsf{U}_1$ is applicable. We use these properties in Section \ref{correctness-2} to obtain a complete $\EFXX$ allocation.
\begin{lemma}\label{champ-edge-exists}
	For an admissible configuration $\sigma = (X, R)$, if neither of the Rules $\textsf{U}_0$ and $\textsf{U}_1$ is applicable, then for every $r_\ell \in \{r_1, r_2, \ldots, r_{|R|}\}$ and every unallocated good $g$ there exists an agent $j \notin R_\ell$ such that $j$ is a champion of $X_{r_\ell} \cup \{g\}$.
\end{lemma}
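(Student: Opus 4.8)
The plan is to fix $r_\ell$ and the unallocated good $g$, set $S:=X_{r_\ell}\cup\{g\}$, and argue by contradiction: assuming no agent outside $R_\ell$ is a champion of $S$, I would show that Rule $\mathsf{U}_0$ or Rule $\mathsf{U}_1$ must be applicable to $\sigma$. Throughout, say that an agent $i\in R_k$ \emph{covets} a set $T$ if $v_i(T)>v_{r_k}(X_{r_k})$. First I would check that \emph{some} agent covets $S$. If $v_{r_\ell}(g)>0$ then $v_{r_\ell}(S)=v_{r_\ell}(X_{r_\ell})+v_{r_\ell}(g)>v_{r_\ell}(X_{r_\ell})$, so $r_\ell$ covets $S$. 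Otherwise $v_{r_\ell}(g)=0$, so the first bullet of the precondition of Rule $\mathsf{U}_0$ holds for the pair $(\ell,g)$; since $\mathsf{U}_0$ is not applicable, its second bullet must fail, which is exactly the existence of an index $k$ and an agent $i\in R_k$ with $v_i(X_{r_\ell}\cup\{g\})>v_{r_k}(X_{r_k})$.

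Next I would show that whenever $S$ is coveted, a champion of $S$ exists, by the standard minimality argument. Take an agent $i$ coveting $S$; if $i$ is not a champion of $S$, then by Definition~\ref{extended-champ} there are an agent $j\neq i$, with $j\in R_k$, and a good $h\in\champset{S}{i}$ such that $v_j(\champset{S}{i}\setminus\{h\})>v_{r_k}(X_{r_k})$. Since $\champset{S}{i}\setminus\{h\}\subseteq S$, agent $j$ also covets $S$, and $\champset{\champset{S}{i}\setminus\{h\}}{j}$ is a subset of $S$ coveted by $j$ of size at most $|\champset{S}{i}|-1$, so $|\champset{S}{j}|\le|\champset{S}{i}|-1$. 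Thus along this replacement process the size of the distinguished champion-set strictly decreases, so it terminates at a champion of $S$.

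Finally I would derive the contradiction. Let $j$ be a champion of $S$; by the standing assumption $j\in R_\ell$. Since $j$ is a champion of $X_{r_\ell}\cup\{g\}$ and $g$ is unallocated, $H_\sigma$ contains the champion edge $j\xrightarrow{g}r_\ell$. If $j=r_\ell$, this self-loop is by itself a champion cycle (its unique champion edge ends at the representative $r_\ell$, and there is no second champion edge to clash in label), so $\mathsf{U}_1$ is applicable. If $j\neq r_\ell$, admissibility (Property~\ref{good-alloc}) provides a simple path of regular edges from $r_\ell$ to $j$ in $H_\sigma$, and appending $j\xrightarrow{g}r_\ell$ closes it into a simple cycle $C$ whose only champion edge lands on the representative $r_\ell$; hence $C$ is a champion cycle (Definition~\ref{extended-champion cycle}) and $\mathsf{U}_1$ is applicable. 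Either way this contradicts the non-applicability of $\mathsf{U}_1$, so no champion of $S$ lies in $R_\ell$, and combined with the previous step we get a champion of $X_{r_\ell}\cup\{g\}$ outside $R_\ell$. The step I expect to be the main obstacle is this last one: confirming that the cycle produced genuinely satisfies Definition~\ref{extended-champion cycle}, and cleanly handling the degenerate case $j=r_\ell$, where one must either accept a single self-loop as a champion cycle (the $\mathsf{U}_1$ update is still well-defined and $\Phi$-improving there) or separately rule out $r_\ell$ being a champion of its own bundle together with $g$.
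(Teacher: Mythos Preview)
Your proof is correct and follows essentially the same approach as the paper's: both assume no champion of $X_{r_\ell}\cup\{g\}$ lies outside $R_\ell$ and derive that $\mathsf{U}_0$ or $\mathsf{U}_1$ must then be applicable. Your version is in fact more careful than the paper's in the case $v_{r_\ell}(g)=0$ (you correctly argue that non-applicability of $\mathsf{U}_0$ produces a coveting agent, whereas the paper simply asserts $\mathsf{U}_0$ applies), and you spell out the champion-existence minimality argument and the regular-path-plus-champion-edge cycle construction that the paper leaves implicit.
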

\begin{proof}
	Assume otherwise.
	If $v_{r_\ell}(g) = 0$, then Rule $\mathsf{U}_0$ is applicable which is a contradiction. Otherwise, $v_{r_\ell}(g) = v(g)$ and $v_{r_\ell}(X_{r_\ell} \cup \{g\}) > v_{r_\ell}(X_{r_\ell})$. If for an agent $j \in R_\ell$, $j$ is a champion of $X_{r_\ell} \cup \{g\}$, then $H_\sigma$ has a champion cycle and Rule $\mathsf{U}_1$ is applicable. If for no other agent $j \neq r_\ell$, $j$ is a champion of $X_{r_\ell} \cup \{g\}$, then $r_\ell$ is a champion of $X_{r_\ell} \cup \{g\}$ and again Rule $\mathsf{U}_1$ is applicable. Therefore, there always exists an agent $j \notin R_\ell$ such that the $j$ is a champion of $X_{r_\ell} \cup \{g\}$. 
\end{proof}

\begin{lemma}\label{small-pool}
	For an admissible configuration $\sigma = (X, R)$, if neither of the Rules $\textsf{U}_0$ and $\textsf{U}_1$ is applicable, then $|P_X| < |R|$.
\end{lemma}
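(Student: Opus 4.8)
The plan is to argue by contradiction: assuming $|P_X| \ge |R|$, I will exhibit a champion cycle in $H_\sigma$, which would make Rule~$\mathsf{U}_1$ applicable. Write $k = |R|$ and let $r_1, \dots, r_k$ be the representatives. Since $|P_X| \ge k$, fix $k$ pairwise distinct unallocated goods $g_1, \dots, g_k \in P_X$. For each $\ell$, apply Lemma~\ref{champ-edge-exists} to $r_\ell$ and $g_\ell$ to obtain an agent $c_\ell \notin R_\ell$ that is a champion of $X_{r_\ell} \cup \{g_\ell\}$; let $m(\ell)$ denote the index of the group containing $c_\ell$, so $m(\ell) \ne \ell$. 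By Definition~\ref{extended-champ} the champion edge $c_\ell \xrightarrow{g_\ell} r_\ell$ belongs to $H_\sigma$, and by the first clause of admissibility (Property~\ref{good-alloc}) there is a regular path from $r_{m(\ell)}$ to $c_\ell$; concatenating it with the champion edge yields a walk $W_\ell$ from $r_{m(\ell)}$ to $r_\ell$ whose unique champion edge has label $g_\ell$ and head $r_\ell$.

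Next I would use the fact that the fixed-point-free map $m \colon \{1,\dots,k\} \to \{1,\dots,k\}$ must contain a cycle: iterating $m$ from any index eventually repeats a value, yielding distinct indices $\ell_1, \dots, \ell_t$ with $t \ge 2$ and $m(\ell_s) = \ell_{s+1}$ (indices mod $t$). Since $W_{\ell_s}$ runs from $r_{\ell_{s+1}}$ to $r_{\ell_s}$, chaining $W_{\ell_t}, W_{\ell_{t-1}}, \dots, W_{\ell_1}$ gives a closed walk $r_{\ell_1} \rightsquigarrow r_{\ell_t} \rightsquigarrow \cdots \rightsquigarrow r_{\ell_2} \rightsquigarrow r_{\ell_1}$ in $H_\sigma$. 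Every champion edge on this walk is one of the $c_{\ell_s} \xrightarrow{g_{\ell_s}} r_{\ell_s}$, hence has a representative as head, and the labels $g_{\ell_1}, \dots, g_{\ell_t}$ are pairwise distinct because the $\ell_s$ are distinct and $g_1, \dots, g_k$ were chosen distinct. Extracting from this nonempty closed walk a simple directed cycle $C$ in the usual way, the edge set of $C$ is a subset of that of the walk, so every champion edge of $C$ still has a representative head and the champion-edge labels of $C$ are still pairwise distinct; by Definition~\ref{extended-champion cycle}, $C$ is a champion cycle. This makes Rule~$\mathsf{U}_1$ applicable, a contradiction, and therefore $|P_X| < |R|$.

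I expect the only delicate point to be the passage from a closed walk to a genuine simple cycle while preserving the two champion-cycle conditions; this is harmless because both conditions in Definition~\ref{extended-champion cycle} are hereditary under passing to a subset of the edges (vacuously so for the empty subset, i.e.\ a purely regular cycle is still a champion cycle), and $H_\sigma$ has no self-loops, so some cycle genuinely sits inside any nonempty closed walk. The remaining steps are routine bookkeeping: checking that the invoked champion edges and regular paths actually lie in $H_\sigma$ (immediate from Definitions~\ref{e-champion} and~\ref{extended-champ} and Property~\ref{good-alloc}), and that $t \ge 2$ so the constructed walk has positive length.
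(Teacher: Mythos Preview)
Your proof is correct and follows essentially the same strategy as the paper: both invoke Lemma~\ref{champ-edge-exists} to attach to each representative--good pair a champion lying in a different group, and then argue that having $|P_X|\ge|R|$ forces a champion cycle (contradicting that $\mathsf{U}_1$ is inapplicable). The paper packages this as a sequentially built chain $t_1,t_2,\ldots,t_{|P|+1}$ of group indices that must stay injective, whereas you package it as a fixed-point-free map $m\colon[k]\to[k]$ whose functional digraph must contain a cycle; the underlying combinatorics is the same, and your care about extracting a simple cycle from the closed walk is a detail the paper leaves implicit.
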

\begin{proof}
	Let $g_1$ be an arbitrary good in the pool and let $j_1 \in R_{t_1}$ be an arbitrary agent in $\{r_1, r_2, \ldots, r_{|R|}\}$. By Lemma \ref{champ-edge-exists}, there exists an agent $j'_1 \in R_{t_2} \neq R_{t_1}$ such that the edge $j'_1 \xrightarrow[]{g} j_1$ exists in $H_\sigma$. Let $j_2$ be the agent in $R_{t_2} \cap \{r_1, r_2, \ldots, r_{|R|}\}$. Let $g_2 \neq g_1$ be another good in the pool and let $j'_2 \in R_{t_3}$ for $t_3 \notin \{t_1, t_2\}$ be the agent such that the edge $j'_2 \xrightarrow[]{g_2} j_2$ exists in $H_\sigma$. 
	We continue repeating this process to obtain the sequence $j_1, j_2, \ldots, j_{|P|}$ such that for all $1 \leq \ell \leq |P|$, $j'_\ell \in R_{t_{\ell+1}}$. Note that since Rule $\mathsf{U}_1$ is not applicable, there is no champion cycle in $H_\sigma$ and hence for $\ell \neq \ell'$, we have $t_\ell \neq t_{\ell'}$. Therefore, $R$ should partition agents into at least $|P|+1$ different non-empty parts. Hence, $|P|<|R|$.
\end{proof}
\subsubsection*{Rule $\mathsf{U}_2$} 
In this updating rule, we check if there exists a representative agent such that her value for at least one of the remaining goods is non-zero. If so, we simply allocate it to that representative agent and apply envy-elimination. 
In the following, you can find a schematic view of Rule $\mathsf{U}_2$.

\vspace{0.5cm}
\begin{tcolorbox}[colback=blue!5,coltitle=blue!50!black,colframe=blue!25,
	title= \textbf{Rule $\mathsf{U}_2$}]
	
		Preconditions:\label{pre4}
	\begin{itemize}
		\item $\sigma = (X, R)$ satisfies Properties \ref{prop1}, \ref{def-envy-conf}  and \ref{good-alloc}.
		\item There exists a representative agent $r_\ell$ and an unallocated good $g$ such that $v_{r_\ell}(g) = v(g)$.
		\item Rules $\mathsf{U}_0$ and $\mathsf{U}_1,$ are not applicable.
	\end{itemize}
	\hrulefill
	
	{Process:}
	\begin{itemize}
		\item Allocate $g$ to $r_\ell$.
		\item Apply envy-elimination.
	\end{itemize}
	\hrulefill
	
	{Guarantees:}
	\begin{itemize}
		\item $\sigma' = (X', R')$ satisfies Properties \ref{prop1}, \ref{def-envy-conf}, and \ref{good-alloc}.
		\item $\Phi(\sigma') {\lexlarger} \Phi(\sigma)$.	
	\end{itemize}
\end{tcolorbox}
\vspace{0.5cm}

Since we apply envy-elimination in the last step of Rule $\mathsf{U}_2$, Lemma \ref{extended-props} implies that the resulting configuration satisfies Properties \ref{prop1}, \ref{def-envy-conf}, and \ref{good-alloc}.

\begin{corollary}[of Lemma \ref{extended-props}]\label{good-alloc-3}
	Let $\sigma' = (X', R')$ be the result of updating $\sigma = (X, R)$ via Rule $\textsf{U}_2$. Then, $\sigma' = (X', R')$ satisfies Properties \ref{prop1}, \ref{def-envy-conf}, and \ref{good-alloc}.
\end{corollary}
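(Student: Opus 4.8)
The plan is to recognize that this corollary is just Lemma~\ref{extended-props} specialized to the allocation that Rule $\mathsf{U}_2$ feeds into the envy-elimination process. First I would name the intermediate object: let $\widetilde{X}$ be the allocation obtained from $X$ by moving the unallocated good $g$ (which exists by the second precondition of $\mathsf{U}_2$, with $v_{r_\ell}(g)=v(g)$) out of the pool $P_X$ and into the bundle of the representative $r_\ell$. Since this merely reassigns a single good from the pool to an agent, $\widetilde{X}$ is again a (partial) allocation on the same instance $(N,M,(\util_1,\dots,\util_n))$. By the description of the Process of Rule $\mathsf{U}_2$, the returned configuration $\sigma' = (X',R')$ is exactly the configuration produced by running the envy-elimination process of Algorithm~\ref{envy-elim-alg} on input $\widetilde{X}$.

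Next I would invoke Lemma~\ref{extended-props} directly: it states that for \emph{any} allocation fed to the envy-elimination process, the resulting configuration satisfies Properties~\ref{prop1}, \ref{def-envy-conf}, and \ref{good-alloc}. Applying it with the input allocation $\widetilde{X}$ gives that $\sigma'=(X',R')$ satisfies all three properties, which is precisely the statement of the corollary. No further argument is needed.

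The key point to flag — and the reason there is essentially no obstacle here — is that the hypothesis of Lemma~\ref{extended-props} is only ``$X$ is an allocation,'' with no structural requirement on it. In particular we do not need any of the preconditions of Rule $\mathsf{U}_2$ (that $\sigma$ is admissible, or that Rules $\mathsf{U}_0$ and $\mathsf{U}_1$ are inapplicable) for this claim; those preconditions are used only when arguing that $\Phi$ strictly increases, not when verifying the three configuration properties. Hence the proof is a one-line reduction to Lemma~\ref{extended-props}, mirroring the identical argument already given for Rule $\mathsf{U}_1$ in Corollary~\ref{good-alloc-2}.
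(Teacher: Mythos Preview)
Your proposal is correct and matches the paper's approach exactly: the paper simply observes that since Rule~$\mathsf{U}_2$ ends by applying envy-elimination, Lemma~\ref{extended-props} immediately yields Properties~\ref{prop1}, \ref{def-envy-conf}, and \ref{good-alloc} for $\sigma'$. Your additional remark that none of the preconditions of $\mathsf{U}_2$ are needed for this part is accurate and worth noting.
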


\begin{lemma}\label{phi-imp-3}
	Rule $\textsf{U}_2$ is $\Phi$-improving.
\end{lemma}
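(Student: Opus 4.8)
\textbf{Proof proposal for Lemma \ref{phi-imp-3}.}
The plan is to mirror the structure used for Rule $\mathsf{U}_1$, reducing everything to the (presumably forthcoming) auxiliary Lemma \ref{main-Phi-improving}, which states that the composition of an arbitrary allocation change with envy-elimination is $\Phi$-improving provided two conditions hold: the bundles of the agents in the groups $R_1,\ldots,R_{\ell-1}$ below some threshold index $\ell$ are untouched, and every agent in $R_\ell \cup \cdots \cup R_{|R|}$ has value strictly above $v_{r_\ell}(X_{r_\ell})$ in the intermediate allocation $\widetilde{X}$ fed into envy-elimination. Here $\widetilde{X}$ is the allocation obtained after handing good $g$ to $r_\ell$ but before running envy-elimination; we must exhibit the threshold index and verify the two conditions.

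First I would take the threshold to be $\ell$, the index of the representative group that receives $g$. For Condition \ref{cond1}: since only the bundle of $r_\ell$ changes in passing from $X$ to $\widetilde{X}$, and $r_\ell \in R_\ell$, every agent in $R_1 \cup \cdots \cup R_{\ell-1}$ keeps the same bundle, so $\widetilde{X}_i = X_i$ there. (As in the $\mathsf{U}_1$ proof, I may first reorder groups whose representatives share the value $v_{r_\ell}(X_{r_\ell})$ so that $\ell$ is genuinely the smallest such index, which is harmless.) For Condition \ref{cond2}: take any $i \in R_k$ with $k \geq \ell$. If $i \neq r_\ell$ then $\widetilde{X}_i = X_i$ and envy-compatibility together with the sorting of representatives gives $v_i(X_i) \geq v_{r_k}(X_{r_k}) \geq v_{r_\ell}(X_{r_\ell})$; I need this to be \emph{strict}, which I get from the same reordering trick — for $k > \ell$ with $r_k$ not equal in value we have strict inequality, and for the agents inside $R_\ell$ other than $r_\ell$, envy-compatibility of $\sigma$ forces $v_i(X_i) > v_{r_\ell}(X_{r_\ell})$ since $r_\ell$ is the unique-by-choice minimizer (or, if there were a tie, $i$ could be moved out, contradicting admissibility's representative structure — I will lean on the precise statement of Lemma \ref{main-Phi-improving} here, which only needs ``$\geq$'' for the representative itself). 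If $i = r_\ell$, then $\widetilde{X}_i = X_{r_\ell} \cup \{g\}$ with $v_{r_\ell}(g) = v(g) > 0$ by the precondition of Rule $\mathsf{U}_2$, so $v_{r_\ell}(\widetilde{X}_{r_\ell}) = v_{r_\ell}(X_{r_\ell}) + v(g) > v_{r_\ell}(X_{r_\ell})$, which is exactly the strict inequality required.

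Having verified both conditions, Lemma \ref{main-Phi-improving} yields $\Phi(\sigma') \lexlarger \Phi(\sigma)$ directly, completing the proof.

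The main obstacle I anticipate is the strictness bookkeeping in Condition \ref{cond2} for the non-representative agents of $R_\ell$ and for representatives of later groups that happen to tie in value with $r_\ell$ — this is precisely why the $\mathsf{U}_1$ proof invoked the footnote-level reordering remark, and I would reuse that device verbatim. Everything else is a routine unwinding of envy-compatibility, the admissibility-induced sorting $v_{r_1}(X_{r_1}) \leq \cdots \leq v_{r_{|R|}}(X_{r_{|R|}})$, and the defining property of Rule $\mathsf{U}_2$'s precondition that $g$ has positive value to $r_\ell$.
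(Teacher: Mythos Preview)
Your proposal is correct and follows essentially the same route as the paper: verify the two hypotheses of Lemma~\ref{main-Phi-improving} with threshold index $\ell$, using the reordering footnote to secure strict inequality for representatives $r_k$ with $k>\ell$, envy-compatibility for non-representatives in $R_\ell$, and the precondition $v_{r_\ell}(g)=v(g)>0$ for $r_\ell$ itself. The one wobble---your parenthetical that Lemma~\ref{main-Phi-improving} ``only needs $\geq$ for the representative itself''---is not accurate (it requires strict inequality for every agent in $R_\ell\cup\cdots\cup R_{|R|}$), but you then supply the strict inequality for $r_\ell$ anyway, so the argument goes through.
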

\begin{proof}
	We prove that the conditions of Lemma \ref{main-Phi-improving} hold.
	We have, 
	\begin{align*}
	\hbox{$\forall i \in R_1 \cup \ldots \cup R_{\ell-1}, \quad \widetilde{X}_i = X_i$.}
	\end{align*}
	Therefore, Condition \ref{cond1} of Lemma \ref{main-Phi-improving} holds.
	
	Without loss of generality, we can assume for all $k > \ell$, $v_{r_k}(X_{r_k}) > v_{r_\ell}(X_{r_\ell})$.\footnote{Otherwise, we can simply reorder the groups in $R$ which their representative agents have value $v_{r_\ell}(X_{r_\ell})$.} 
	For all $k > \ell$ and $i \in R_k$, we have 
	\begin{align*}
		v_i(\widetilde{X}_i)& = v_i(X_i) &\hbox{$\widetilde{X}_i = X_i$}\\
		&\geq v_{r_k}(X_{r_k}) &\hbox{$i \in R_k$} \\
		&> v_{r_\ell}(X_{r_\ell}).
	\end{align*}
	For all $i \in R_\ell \setminus r_\ell$, we have 
	\begin{align*}
		v_i(\widetilde{X}_i) &= v_i(X_i) &\hbox{$\widetilde{X}_i = X_i$}\\
		&> v_{r_\ell}(X_{r_\ell}). &\hbox{$i \in R_\ell \setminus \{r_\ell\}$}
	\end{align*}
	Finally we have 
	\begin{align*}
		v_{r_\ell}(X'_{r_\ell}) &= v_{r_\ell}(X_{r_\ell}) + v_{r_\ell}(g) &\hbox{$X'_{r_\ell} = X_{r_\ell} \cup \{g\}$} \\
		&> v_{r_\ell}(X_{r_\ell}). &\hbox{$v_{r_\ell}(g) = v(g) > 0$}
	\end{align*}
	Therefore,
	\begin{align*}
	\hbox{$\forall i \in R_\ell \cup \ldots \cup R_{|R|}, \quad v_i(\widetilde{X}_i) > v_{r_\ell}(X_{r_\ell}).$}
	\end{align*}
	Thus, Condition \ref{cond2} of Lemma \ref{main-Phi-improving} holds too and hence, $\Phi(\sigma') \lexlarger \Phi(\sigma)$.
\end{proof}

\subsubsection*{Rule $\mathsf{U}_3$} 
This updating rule is structurally different from Rules $\mathsf{U}_0$, $\mathsf{U}_1$, and $\mathsf{U}_2$. Unlike the previous rules, in Rule $\mathsf{U}_3$ the agent to whom we allocate a good is not a representative in $\sigma$. Here we find an unallocated good $g$ and a non-representative agent $i$ such that $v_i(g) = v(g)$ and allocating $g$ to agent $i$ violates the $\EFXX$ property. Moreover, we require that if $i \in R_\ell$, some agent $i' \in R_{\ell'}$ for $\ell' \leq \ell$ envies $X_i \cup \{g\}$ even after removal of two goods. 
Basically, $i' \in R_{\ell'}$ for $\ell' \leq \ell$ is a reason for the new allocation to not be $\EFXX$. Then, we allocate $g$ to $i$ and perform envy-elimination on $X$. 
In the following, you can find a schematic view of Rule $\mathsf{U}_3$.

\vspace{0.5cm}
\begin{tcolorbox}[colback=blue!5,coltitle=blue!50!black,colframe=blue!25,
	title= \textbf{Rule $\mathsf{U}_3$}]
	
	{Preconditions:}\label{pre1}
	\begin{itemize}
		\item $\sigma = (X, R)$ satisfies Properties \ref{prop1}, \ref{def-envy-conf}  and \ref{good-alloc}.
		\item Rules $\mathsf{U}_0$, $\mathsf{U}_1$, and $\mathsf{U}_2$ are not  applicable.
		\item There exists $i \in R_\ell \setminus \{r_\ell\}$ with distance $d$ from $r_\ell$ in $H_\sigma$ and another $i' \in R_{\ell'}$ with distance $d'$ from $r_{\ell'}$ in $H_\sigma$ such that $(\ell, d) \lexlargereq (\ell', d')$ and there exists $g \in P$ such that $v_i(g) = v(g)$ and $i'$ envies $X_i \cup \{g\}$ even after removal of $2$ goods.
	\end{itemize}
	\hrulefill
	
	{Process:}
	\begin{itemize}
		\item Allocate $g$ to $i$.
		\item Apply envy-elimination on $X$.
	\end{itemize}
	\hrulefill
	
	{Guarantees:}
	\begin{itemize}
		\item $\sigma' = (X', R')$ satisfies Properties \ref{prop1}, \ref{def-envy-conf}, and \ref{good-alloc}.
		\item $\Phi(\sigma') {\lexlarger} \Phi(\sigma)$.	
	\end{itemize}
	
\end{tcolorbox}
\vspace{0.5cm}

For the rest of this section, assume $(\ell', d')$ is lexicographically smallest such that for some $i' \in R_{\ell'}$ with distance $d'$ from $r_{\ell'}$, $i'$ envies $X_i \cup \{g\}$ even after removal of $2$ goods.

Since we apply envy-elimination in the last step of Rule $\mathsf{U}_3$, Lemma \ref{extended-props} implies that the resulting configuration satisfies Properties \ref{prop1}, \ref{def-envy-conf}, and \ref{good-alloc}.

\begin{corollary}[of Lemma \ref{extended-props}]\label{good-alloc-4}
	Let $\sigma' = (X', R')$ be the result of updating $\sigma = (X, R)$ via Rule $\textsf{U}_3$. Then, $\sigma' = (X', R')$ satisfies Properties \ref{prop1}, \ref{def-envy-conf}, and \ref{good-alloc}.
\end{corollary}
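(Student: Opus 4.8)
The plan is to invoke Lemma \ref{extended-props} directly, exactly as in the proofs of Corollaries \ref{good-alloc-2} and \ref{good-alloc-3}. Rule $\mathsf{U}_3$ proceeds in two steps: it first allocates the good $g$ to agent $i$, obtaining some intermediate allocation $\widehat{X}$, and then it runs the envy-elimination process on $\widehat{X}$. By construction, the configuration $\sigma' = (X', R')$ returned by the rule is precisely the output of envy-elimination on $\widehat{X}$.

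Lemma \ref{extended-props} asserts that for \emph{every} allocation, the configuration returned by the envy-elimination process satisfies Properties \ref{prop1}, \ref{def-envy-conf}, and \ref{good-alloc}. The hypothesis of that lemma places no restriction on its input allocation --- in particular the input need not be $\EFX$ or non-wasteful, since envy-elimination begins by retaking all wasted goods and then rebuilds the partition from scratch. Hence we may apply it with input $\widehat{X}$ and conclude that $\sigma'$ satisfies the three desired properties.

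There is no genuine obstacle here. The only point worth emphasizing is that, although allocating $g$ to the non-representative agent $i$ may destroy the $\EFX$ property that $\sigma$ enjoyed --- which is exactly why the precondition of $\mathsf{U}_3$ is phrased in terms of some $i'$ envying $X_i \cup \{g\}$ even after removal of two goods --- this is harmless for the present statement: envy-elimination restores Property \ref{prop1} along with Properties \ref{def-envy-conf} and \ref{good-alloc}, regardless of how far $\widehat{X}$ is from being $\EFX$. The substantive work for Rule $\mathsf{U}_3$, namely showing that it is $\Phi$-improving and that once its preconditions (together with those of $\mathsf{U}_0,\mathsf{U}_1,\mathsf{U}_2$) fail the remaining goods can be distributed so as to obtain a complete $\EFXX$ allocation, is carried out in the lemmas that follow.
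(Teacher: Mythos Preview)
Your argument is correct and matches the paper's own justification: since the final step of Rule $\mathsf{U}_3$ is the envy-elimination process, Lemma \ref{extended-props} applies directly to the intermediate allocation and yields all three properties. The paper states exactly this in one sentence before the corollary, and your additional commentary about why the intermediate allocation need not be $\EFX$ is accurate but not required.
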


In order to prove that Rule $\mathsf{U}_3$ is $\Phi$-improving, first we prove Lemma \ref{subset-sources-4}.

\begin{lemma}\label{subset-sources-4}
	Let $\sigma' = (X', R')$ be the result of updating $\sigma = (X, R)$ via Rule $\textsf{U}_3$. Then, for every $1 \leq k \leq |R|$ and every agent $j \in R_k$, 
	\begin{itemize}
		\item if $j \in R'_t$, then $v_{r'_t}(X'_{r'_t}) \leq v_{r_k}(X_{r_k})$, and
		\item if $j \notin \{r_1, r_2, \ldots, r_{|R|}\}$, then $j \notin \{r'_1, r'_2, \ldots, r'_{|R'|}\}$. In other words, $\{r'_1, r'_2, \ldots, r'_{|R'|}\} \subseteq \{r_1, r_2, \ldots, r_{|R|}\}$.
	\end{itemize}
\end{lemma}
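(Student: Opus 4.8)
The plan is to trace through the envy-elimination process applied in Rule $\mathsf{U}_3$, keeping track of how the new representatives $r'_1, \ldots, r'_{|R'|}$ relate to the old ones. Recall that Rule $\mathsf{U}_3$ first allocates $g$ to a non-representative agent $i \in R_\ell$ and then runs envy-elimination on the resulting allocation $X$. So before envy-elimination, only agent $i$'s bundle has changed (it gained $g$), and its value went up; every other bundle is identical to the old $X$, hence every non-$i$ agent has exactly her old value. During envy-elimination, every agent's value can only decrease (this is the standing property of the process, noted right before Algorithm~\ref{envy-elim-alg}).

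For the first bullet, fix $j \in R_k$ and suppose $j$ ends up in $R'_t$. The key claim is that $v_{r'_t}(X'_{r'_t}) \le v_{r_k}(X_{r_k})$. I would argue this by following the order in which envy-elimination creates its groups. When $R'_t$ is formed, its representative $r'_t$ is chosen as the agent of minimum remaining value among the agents not yet placed; in particular $v_{r'_t}(X'_{r'_t}) \le v_j(\widehat{X}_j)$ where $\widehat{X}$ denotes the allocation at that moment. Now I split on whether $j = i$ or not. If $j \neq i$, then $\widehat{X}_j$ is a subset of the original $X_j$ (envy-elimination only removes goods from $j$'s bundle, never the newly added $g$ since $g$ went to $i$), so $v_j(\widehat{X}_j) \le v_j(X_j)$; and since $X$ was envy-compatible with $j \in R_k$, we have $v_j(X_j) \ge v_{r_k}(X_{r_k})$ — wait, that's the wrong direction. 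Actually I want an upper bound on $v_{r'_t}(X'_{r'_t})$. Here the cleaner route: $v_{r'_t}(X'_{r'_t}) \le v_j(\widehat{X}_j) \le v_j(X_j)$, and I need $v_j(X_j) \le v_{r_k}(X_{r_k})$? That's false in general since $r_k$ is the minimum. Let me reconsider: I should instead bound via $v_{r'_t}(X'_{r'_t}) \le v_{r_k}(\widehat{X}_{r_k})$ if $r_k$ is among the agents not yet placed when $R'_t$ is formed — but $r_k$ might already be placed. The right approach is: $r_k$ lands in some $R'_s$; if $s \le t$ then the agents of $R_k$ with value $\le v_{r_k}(X_{r_k})$ (originally only $r_k$, but after possible value decreases, maybe more) would have been grabbed; the honest way is to observe that when $R'_t$ is formed, $r'_t$ has the \emph{minimum} value among remaining agents, and $j$ is remaining, so $v_{r'_t}(X'_{r'_t}) \le v_j(\widehat X_j) \le v_j(X_j)$ only when $j \ne i$ — but we need this $\le v_{r_k}(X_{r_k})$, which requires that $r_k$ was \emph{not} the sole minimum. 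The resolution: since $r_k$ itself, if still unplaced, is a candidate, and if already placed into $R'_s$, then by the sorting of $R'$ and $s$ being "earlier or equal," $v_{r'_t}(X'_{r'_t}) \ge v_{r'_s}(X'_{r'_s})$... I will organize this as: \emph{let $s$ be the index with $r_k \in R'_s$; show $s \le t$ is impossible to rule out directly, so instead show $v_{r'_t}(X'_{r'_t}) \le v_{r_k}(X'_{r_k}) \le v_{r_k}(X_{r_k})$ by noting $r_k$'s value never increased and $r_k$ was available (or $r'_t$ was preferred to it) at the relevant time.} This is the one subtle point and I expect it to be the main obstacle.

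For the second bullet, I would argue directly: if $j \notin \{r_1, \ldots, r_{|R|}\}$ then by admissibility of $\sigma$ (Property~\ref{good-alloc}, third bullet) $v_j(X_j) = v(X_j)$, i.e. $j$'s bundle is non-wasteful, and moreover $j \ne i$ forces — no wait, $i$ itself is non-representative, so $j$ could equal $i$. Handle $j = i$ and $j \ne i$ separately. For $j \ne i$ non-representative: within its group $R_k$ there is a path of regular edges from $r_k$ to $j$, and $v_j(X_j) = v(X_j) > v_{r_k}(X_{r_k})$ unless $j = r_k$ — actually the point is $v_j(X_j) \ge v_{r_k}(X_{r_k})$, possibly with equality, but admissibility's first bullet gives a regular edge into $j$ meaning strict inequality $v_{(\cdot)}(X_j) > v_{r_k}(X_{r_k})$. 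After envy-elimination, when the group $R'_t \ni j$ is formed, because envy-elimination only adds $j$ to a group via a champion witness (the inner while-loop adds $i$ with $v_j(X_i) > v_{r_\ell}$), $j$ gets a strict inequality witness and is never chosen as the $\arg\min$ that becomes a representative — so $j \notin \{r'_1, \ldots\}$. I need to verify $j$ is indeed added by the inner loop rather than being picked as an $\arg\min$; this follows because the regular-edge path from the old representative survives (values only dropped, and the champion-set bundles keep the strict champion inequality by Observation~\ref{S-nonwasteful} and Definition~\ref{e-champion}), so $j$ is "reachable" and gets absorbed before it could be the minimum. For $j = i$: $i$ gained a good it values, so $v_i(\widehat X_i) > v_{r_\ell}(X_{r_\ell}) \ge v_{r_\ell}(\widehat X_{r_\ell})$, hence $i$ is likewise absorbed into some group rather than founding one. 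The containment $\{r'_1, \ldots\} \subseteq \{r_1, \ldots\}$ is then immediate. I would present the two bullets in this order, doing the second one first if it turns out to streamline the first (it essentially says non-representatives stay non-representatives, which helps locate where each $r'_t$ can come from).
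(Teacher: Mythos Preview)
Your sketch has the right intuition but is missing the organizing idea that makes the argument go through: a simultaneous induction on the pair $(k,d)$, where $d$ is the distance of $j$ from $r_k$ along regular edges in $H_\sigma$. This is exactly the ``one subtle point'' you flagged and did not resolve.

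For $j\neq i$ with $j\neq r_k$, the direct bound $v_{r'_t}(X'_{r'_t})\le v_j(\widehat X_j)$ is useless because $v_j(X_j)$ can exceed $v_{r_k}(X_{r_k})$. What actually works is to look at the predecessor $j'$ of $j$ on a shortest regular-edge path from $r_k$; by induction on $(k,d)$, $j'$ already sits in some $R'_t$ with $v_{r'_t}(X'_{r'_t})\le v_{r_k}(X_{r_k})$, and since $v_{j'}(X_j)=v(X_j)>v_{r_k}(X_{r_k})\ge v_{r'_t}(X'_{r'_t})$, the inner while-loop of envy-elimination absorbs $j$ into $R'_t$ (or an earlier group). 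Your ``the regular-edge path survives'' remark is morally this, but without the $(k,d)$-induction you have no guarantee that $j'$ is placed \emph{before} $j$ with the required bound.

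For $j=i$, your argument only shows $v_i(\widehat X_i)>v_{r_\ell}(\widehat X_{r_\ell})$, i.e.\ that $i$ is not the \emph{first} representative; it does not prevent $i$ from becoming a later $r'_t$ with $v_{r'_t}(X'_{r'_t})>v_{r_\ell}(X_{r_\ell})$, which would break both bullets. Here you must use the precondition of Rule~$\mathsf{U}_3$: there is an agent $i'$ with $(\ell',d')\preceq_{\mathrm{lex}}(\ell,d)$ that envies $X_i\cup\{g\}$ even after removing two goods. The lexicographic inequality is exactly what lets the $(k,d)$-induction apply to $i'$ first, placing $i'$ in some $R'_{t'}$ with $v_{r'_{t'}}(X'_{r'_{t'}})\le v_{r_{\ell'}}(X_{r_{\ell'}})\le v_{r_\ell}(X_{r_\ell})$; then $v_{i'}(X_i\cup\{g\})>v_{i'}(X_{i'})\ge v_{r'_{t'}}(X'_{r'_{t'}})$ forces $i$ to be absorbed rather than founding a group. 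You never invoke $i'$ or the constraint $(\ell',d')\preceq(\ell,d)$, and without them the case $j=i$ does not close.
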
 
\begin{proof} 
	The proof is by induction on $(k, d)$ where $d$ is the distance of $j$ from $r_k$ for $j \in R_k$. We have $(k_1, d_1) < (k_2, d_2)$ if either $k_1 < k_2$ or $k_1 = k_2$ and $d_1 < d_2$. 
	
	\begin{description}\setlength{\itemsep}{0pt}
		\item [Case 1 $j \neq i$:] If $j = r_k$, then either $j \in \{r'_1, \ldots, r'_{|R'|}\}$ and $v_{r'_t}(X'_{r'_t}) = v_{r_k}(X_{r_k})$ or, $v_{r'_t}(X'_{r'_t}) < v_{r_k}(X'_{r_k}) \leq v_{r_k}(X_{r_k})$.
		
		Now assume $j \neq r_k$, and $j'$ is the preceding node of $j$ in a shortest path from $r_k$ to $j$ in $H_\sigma$. By induction assumption, $j' \in R'_t$ for $v_{r'_t}(X'_{r'_t}) \leq v_{r_k}(X_{r_k})$. 
		The bundle of $j$ is not changed before the envy-elimination. Also since $X$ is admissible and $j$ is not representative, $X_j$ has no wasted good. Therefore, the bundle of $j$ is not changed even after the removal of wasted goods in the envy-elimination process. We have $v_j(X_j) > v_{r_k}(X_{r_k}) \geq v_{r'_t}(X'_{r'_t})$ and $v_{j'}(X_j) = v(X_j)$. Therefore, either $j$ is already added to some $R'_{t'}$ for $t' \leq t$ or $j$ gets added to $R'_t$.
		\item [Case 2 $j = i$:] There is an agent in $\{R_1 \cup \ldots \cup R_\ell\}$ namely $i' \in R'_{t'}$ such that 
		\begin{align*}
			v_{i'}(X'_i) &> v_{i'}(X_{i'}) &\hbox{$i'$ strongly envies $X'_i$}\\
			&\geq v_{i'}(X'_{i'}) &\hbox{$X'_{i'} \subseteq X_{i'}$} \\
			&\geq v_{r'_{t'}}(X_{r'_{t'}}). &\hbox{$i' \in R'_{t'}$}
		\end{align*}
		Now consider the moment $i'$ is added to $R'_{t'}$. We know, 
		\begin{align*}
			v_i(X_i \cup \{g\}) &= v(X_i \cup \{g\}) &\hbox{$X$ is admissible and $v_i(g) = v(g)$}\\
			&\geq v_{i'}(X_i \cup \{g\}) &\hbox{$v_{i'}(g) > 0$ by Lemma \ref{non-waste-good}}\\
			&> v_{i'}(X_{i'}). 
		\end{align*}
		Therefore, $i$ could not be considered for being a representative before $i'$ is added to $R'_{t'}$. Thus, either $i$ is already added to some $R'_t$ for $t \leq t'$ or $i$ gets added to $R'_{t'}$ and $t = t'$. Hence, we have
		\begin{align*}
			v_{r'_t}(X'_{r'_t}) &\leq v_{r'_{t'}}(X'_{r'_{t'}}) &\hbox{$t \leq t'$}\\
			&\leq v_{r_{\ell'}}(X_{r_{\ell'}}) &\hbox{Induction hypothesis}\\
			&\leq v_{r_\ell}(X_{r_\ell}). &\hbox{$\ell' \leq \ell$}
		\end{align*} 
	\end{description}
\end{proof}
\begin{lemma}\label{phi-imp-4}
	Rule $\textsf{U}_3$ is $\Phi$-improving.
\end{lemma}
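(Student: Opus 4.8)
The goal is to show that after updating via Rule $\mathsf{U}_3$ we have $\Phi(\sigma') \lexlarger \Phi(\sigma)$. As in the proofs of Lemmas \ref{phi-imp-2} and \ref{phi-imp-3}, the natural route is to invoke Lemma \ref{main-Phi-improving}, so the plan is to verify its two hypotheses (Condition \ref{cond1} and Condition \ref{cond2}) for the allocation $\widetilde X$ obtained just before the envy-elimination step of Rule $\mathsf{U}_3$. Here $\widetilde X$ differs from $X$ only in that $g$ is added to agent $i \in R_\ell$, so $\widetilde X_i = X_i \cup \{g\}$ and $\widetilde X_j = X_j$ for all $j \neq i$. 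The key structural input is Lemma \ref{subset-sources-4}, which tells us how representatives and their values behave under the update, together with the fact that $(\ell',d') \lexlargereq$-minimality makes $i'$ the ``lowest'' witness of non-$\EFXX$-ness.

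\textbf{Condition \ref{cond1}.} I would first identify the index $\ell'$ (the group of the minimal witness $i'$) as the relevant threshold index, exactly as $\ell$ played that role in Lemma \ref{phi-imp-3}. For every agent $j \in R_1 \cup \ldots \cup R_{\ell'-1}$ we have $j \neq i$ (since $i' \in R_{\ell'}$ with $\ell' \le \ell$ forces $i \notin R_1 \cup \cdots \cup R_{\ell'-1}$), hence $\widetilde X_j = X_j$; this gives the first condition of Lemma \ref{main-Phi-improving} (the prefix of bundles indexed by groups below $\ell'$ is untouched). One should double-check the boundary case $\ell' = \ell$: even there, if $j \in R_{\ell'-1}$ then $j\neq i$, so the argument still goes through.

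\textbf{Condition \ref{cond2}.} This is the substantive part, and I expect it to be the main obstacle. After (harmlessly) reordering groups with representative value equal to $v_{r_{\ell'}}(X_{r_{\ell'}})$ so that $v_{r_k}(X_{r_k}) > v_{r_{\ell'}}(X_{r_{\ell'}})$ for the relevant $k>\ell'$, I need: for every $j \in R_{\ell'} \cup \cdots \cup R_{|R|}$, $v_j(\widetilde X_j) > v_{r_{\ell'}}(X_{r_{\ell'}})$. For $j \neq i$ this is immediate — $v_j(\widetilde X_j) = v_j(X_j) \ge v_{r_k}(X_{r_k}) > v_{r_{\ell'}}(X_{r_{\ell'}})$ when $j\in R_k$ with $k>\ell'$, and $v_j(X_j) > v_{r_{\ell'}}(X_{r_{\ell'}})$ when $j \in R_{\ell'}\setminus\{r_{\ell'}\}$ (envy-compatibility), while $j = r_{\ell'}$ is excluded because $r_{\ell'}$ itself is a representative and the representative of $R_{\ell'}$ is not envied past two removed goods — actually here one uses that $i'\in R_{\ell'}$ is the witness, not $r_{\ell'}$, but in any case $j=r_{\ell'}$ need not be in the set if we only need strict inequality for the \emph{non-}representatives; I will sort out this bookkeeping carefully. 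For $j = i$ itself: $\widetilde X_i = X_i \cup \{g\}$ with $v_i(g) = v(g) > 0$, so $v_i(\widetilde X_i) = v_i(X_i) + v(g) > v_i(X_i) \ge v_{r_\ell}(X_{r_\ell}) \ge v_{r_{\ell'}}(X_{r_{\ell'}})$, the last step because $\ell' \le \ell$ and groups are sorted by representative value. Combining these, Condition \ref{cond2} holds, and Lemma \ref{main-Phi-improving} yields $\Phi(\sigma') \lexlarger \Phi(\sigma)$.

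\textbf{Where the difficulty lies.} The delicate point is making sure the threshold index is $\ell'$ (not $\ell$) and that Lemma \ref{main-Phi-improving} is applied with this choice — because $i$ may lie strictly below the group $R_\ell$ after envy-elimination. This is precisely why Lemma \ref{subset-sources-4} was proved: it guarantees that no agent $j \in R_1\cup\cdots\cup R_{\ell'}$ moves to a group whose representative has value exceeding $v_{r_{\ell'}}(X_{r_{\ell'}})$, and that $i$ itself ends up in a group at index $\le$ that of $i'$ with value at most $v_{r_\ell}(X_{r_\ell})$. I would therefore cite Lemma \ref{subset-sources-4} at the two places above (ensuring the prefix is preserved, and controlling where $i$ lands), then close by invoking Lemma \ref{main-Phi-improving}. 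No heavy computation is needed beyond the additivity identity $v_i(X_i \cup \{g\}) = v_i(X_i) + v(g)$ and the sortedness of representative values.
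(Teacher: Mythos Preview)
Your plan has a genuine gap: Lemma \ref{main-Phi-improving} cannot be applied here. Its Condition \ref{cond2} requires $v_j(\widetilde X_j) > v_{r_{\ell'}}(X_{r_{\ell'}})$ for \emph{every} $j \in R_{\ell'} \cup \cdots \cup R_{|R|}$, and in particular for $j = r_{\ell'}$ itself. But in Rule $\mathsf{U}_3$ the good $g$ is given to the non-representative $i$, so $\widetilde X_{r_{\ell'}} = X_{r_{\ell'}}$ and hence $v_{r_{\ell'}}(\widetilde X_{r_{\ell'}}) = v_{r_{\ell'}}(X_{r_{\ell'}})$, which is not strictly greater. You noticed this and wrote ``$j=r_{\ell'}$ need not be in the set if we only need strict inequality for the non-representatives; I will sort out this bookkeeping carefully'' --- but there is nothing to sort out: the hypothesis of Lemma \ref{main-Phi-improving} is stated for all agents, representatives included, and its proof (via Lemma \ref{lex3}) genuinely uses this. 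This is precisely what distinguishes $\mathsf{U}_3$ from $\mathsf{U}_1$, $\mathsf{U}_2$, $\mathsf{U}_4$: in those rules some representative's value strictly increases, whereas in $\mathsf{U}_3$ no representative's value increases at all, so the improvement in $\Phi$ cannot come solely from the leading coordinates.

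The paper's argument therefore takes a different route. From Lemma \ref{subset-sources-4} one gets $A' := \{r'_1,\ldots,r'_{|R'|}\} \subseteq A := \{r_1,\ldots,r_{|R|}\}$ and $v_{r'_k}(X'_{r'_k}) \ge v_{r'_k}(X_{r'_k})$ for every $r'_k$. If $A' \subsetneq A$, Lemma \ref{lex1} immediately gives $\Phi(\sigma') \lexlarger \Phi(\sigma)$. If $A' = A$, then the first $|R|$ coordinates of $\Phi$ are unchanged, and one compares the trailing coordinate $1/\sum_{j\notin A}|X_j|$: since $i \notin A = A'$ and envy-elimination (driven by $i'$'s strong envy of $X_i\cup\{g\}$ after removing two goods) forces $|X'_i| \le |X_i\cup\{g\}|-2 < |X_i|$, while $|X'_j| \le |X_j|$ for all other $j$, this coordinate strictly increases. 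Your plan does not access these later coordinates of $\Phi$ at all, which is why it cannot close the case where the representative set is preserved.
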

\begin{proof}
	Let $A = \{r_1, r_2, \ldots, r_{|R|}\}$ and $A' = \{r'_1, r'_2, \ldots, r'_{|R'|}\}$.
	By Lemma \ref{subset-sources-4}, $A' \subseteq A$. Also note that for all the representative agents $r'_k$ in $\sigma'$, $X'_{r'_k} \geq X_{r'_k}$. If $A' \subsetneq A$, by Lemma \ref{lex1}, $\Phi(\sigma') \lexlarger \Phi(\sigma)$. Otherwise, $A' = A$. Note that 
	\begin{align*}
		|X_i| > |X_i \cup \{g\}|-2 \geq |X'_i|  ,
	\end{align*}
	and for all agents $j \neq i$, 
	\begin{align*}
	|X_j| \geq |X'_j|.
	\end{align*}
	Therefore, 
	\begin{align}
		\sum_{j \notin A} |X_j| > \sum_{j \notin A'} |X'_j|. \label{ineq2}
	\end{align}
	Hence,
	\begin{align*}
		\Phi(\sigma') &= \bigg[ v_{r'_1}(X'_{r'_1}), v_{r'_2}(X'_{r'_2}), \ldots, v_{r'_{|R'|}}(X'_{r'_{|R'|}}),+\infty, \frac{1}{\sum_{j \notin A'} |X'_j|}, \sum_{j \in A'} |X'_j| \bigg] \\
		&= \bigg[ v_{r_1}(X_{r_1}), v_{r_2}(X_{r_2}), \ldots, v_{r_{|R|}}(X_{r_{|R|}}),+\infty, \frac{1}{\sum_{j \notin A'} |X'_j|}, \sum_{j \in A'} |X'_j| \bigg] \\
		&\lexlargereq \bigg[ v_{r_1}(X_{r_1}), v_{r_2}(X_{r_2}), \ldots, v_{r_{|R|}}(X_{r_{|R|}}), +\infty, \frac{1}{\sum_{j \notin A} |X_j|}, \sum_{j \in A} |X_j| \bigg] &\hbox{Inequality \eqref{ineq2}}\\
		&= \Phi(\sigma).
	\end{align*}
\end{proof}

\subsection{Allocating the Remaining Goods}\label{correctness-2}

As we mentioned, our algorithm continues to update the allocation as long as one of the rules  is applicable. Note that, since there are finitely many possible allocations and the potential function $\Phi$ increases after each update, we eventually end up with a configuration $\sigma = (X, R)$ such that none of the rules can be applied on $\sigma$. 
At that moment, if allocation $X$ allocates all the goods, we are done. Otherwise, there are some goods remaining in the pool. 
In this section, we consider the latter case; we show how to allocate these goods so that the final allocation preserves the $\EFXX$ property. We denote the configuration at this step by $\sigma = (X, R)$ and the final complete allocation by $X'$.

Generally, our strategy is to find  $|P|$ different agents and allocate one good to each one of them. Recall that by Lemma \ref{small-pool}, we have $|N| > |P|$ and therefore selecting $|P|$ different agents from $N$ is possible. The process of allocating the remaining goods is illustrated in Algorithm \ref{alg2}. In this algorithm, 
we start by $\ell=|R|$ and as long as there is an agent $i \in R_\ell \cap N$ such that $v_i(P) > 0$, we give $i$ with maximum distance from $r_\ell$ her most desirable good from the pool\footnote{If there were multiple such goods, we select an arbitrary one} and remove $i$ from $N$. Then, we decrease $\ell$ by one and repeat the same process. The process goes on until either no good remains in the pool, or $\ell=0$. Let us call the set of agents that  receive one good in this process by $N_1$. 
Let us denote the set of unallocated goods at this stage by $M_2$.
In case that $\ell=0$ and the pool is still not empty (i.e., $M_2 \neq \emptyset$), we allocate each remaining good to an arbitrary agent in $N \setminus N_1$ (one good to each agent). We denote the agents that receive some good in $M_2$ by $N_2$. 

Let $X'$ be the allocation after this step. By definition, we know that $X'$ is complete. In the rest of this section,  we prove Theorem \ref{main-theorem-2} which shows that $X'$ is also $\EFXX$.

\begin{theorem}\label{main-theorem-2}
	Assuming that the valuation functions are restricted additive, Algorithm \ref{alg2} returns a complete $\EFXX$ allocation.
\end{theorem}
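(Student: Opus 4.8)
The plan is to first dispatch termination and completeness, and then establish the $\EFXX$ property of the returned allocation $X'$ by a short case analysis that isolates a single genuinely delicate case. For termination, each of the four rules is $\Phi$-improving (Lemmas~\ref{phi-imp-0}, \ref{phi-imp-2}, \ref{phi-imp-3}, \ref{phi-imp-4}) and $\Phi(\sigma)$ depends only on $\sigma$, while there are only finitely many configurations (each good is either in the pool or assigned to one of the $n$ agents, and $R$ ranges over the finitely many partitions of $N$), so no configuration repeats and the main loop halts at a configuration $\sigma = (X,R)$ satisfying Properties~\ref{prop1}, \ref{def-envy-conf} and \ref{good-alloc} on which none of $\mathsf{U}_0,\mathsf{U}_1,\mathsf{U}_2,\mathsf{U}_3$ is applicable; in particular $X$ is $\EFX$. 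Two facts about this terminal $\sigma$ will be used repeatedly: since $\mathsf{U}_2$ is inapplicable, every representative values every good of the pool $P$ at $0$, so no representative is ever chosen in the first post-processing loop (the ``give each eligible agent its favorite pool good'' loop); and by Lemma~\ref{small-pool}, $|P| < |R| \le |N|$, so after that loop hands out $|N_1|$ distinct pool goods the remaining $|P| - |N_1| < |N| - |N_1| = |N \setminus N_1|$ goods can be assigned to distinct agents of $N \setminus N_1$. Hence $X'$ is complete.

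For $\EFXX$, fix distinct agents $i$ and $j$; writing $\ell = \min(2,|X'_j|)$, I must show $v_i(X'_i) \ge v_i(X'_j \setminus \{g_1,\dots,g_\ell\})$ for all choices of $\ell$ distinct goods. If $|X'_j| \le 2$ the right-hand side is $v_i(\emptyset)=0$ and we are done, so assume $|X'_j| \ge 3$ and fix distinct $g_1,g_2 \in X'_j$. Across the two post-processing phases each agent gets at most one extra good, so $X'_i \supseteq X_i$ and thus $v_i(X'_i) \ge v_i(X_i)$; moreover, since $X$ is $\EFX$ and valuations are monotone, $v_i(X_i) \ge v_i(X_j \setminus \{g_1\}) \ge v_i(X_j \setminus \{g_1,g_2\})$. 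If $X'_j = X_j$, or if one of $g_1,g_2$ equals the good $h$ added to $X_j$ in post-processing (so $X'_j \setminus \{g_1,g_2\}$ is $X_j$ minus one good), the claim follows at once. The remaining case is $X'_j = X_j \cup \{h\}$ with $g_1,g_2 \in X_j$, where $v_i(X'_j \setminus \{g_1,g_2\}) = v_i(X_j \setminus \{g_1,g_2\}) + v_i(h)$; if $v_i(h)=0$ this is again immediate, so assume $v_i(h) = v(h) > 0$ (valuations are restricted additive).

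This last case is where the work lies: $\EFX$-ness of $X$ does not absorb the extra term $v_i(h)$, and it is the design of Rule $\mathsf{U}_3$ together with the order of the first post-processing loop that neutralizes it. Because $v_i(h) > 0$, agent $i$ is not a representative, so $i \in R_{\ell_i}$ has a well-defined distance $d_i$ from $r_{\ell_i}$ in $H_\sigma$. Suppose first that $h$ is handed out in the first loop; then it goes to $j$ (since $h \in X'_j$), so $j$ is a non-representative that picked $h$ as a favorite pool good, whence $v_j(h) = v(h)$; let $(\ell_j,d_j)$ be its group index and distance. If $(\ell_j,d_j) \lexlargereq (\ell_i,d_i)$, then inapplicability of $\mathsf{U}_3$ — with $j$ in the role of the agent receiving $h$ and $i$ in the role of the potential envier — yields $v_i(X_i) \ge v_i\big((X_j \cup \{h\}) \setminus \{g_1,g_2\}\big) = v_i(X'_j \setminus \{g_1,g_2\})$, and $v_i(X'_i) \ge v_i(X_i)$ closes the case. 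Otherwise $(\ell_i,d_i)$ is lexicographically larger than $(\ell_j,d_j)$; since the loop runs over groups from $R_{|R|}$ down to $R_1$ and within each group serves, at each step, an eligible agent of maximum distance from the representative, $i$ is served before $j$, and at that instant $h$ is still in the pool (it is handed to $j$, later), so $v_i(h) > 0$ forces $i$ to receive a favorite pool good $h_i$ with $v_i(h_i) \ge v_i(h)$, giving $v_i(X'_i) = v_i(X_i) + v_i(h_i) \ge v_i(X_i) + v_i(h) \ge v_i(X_j \setminus \{g_1,g_2\}) + v_i(h)$. Finally, if $h$ is not handed out in the first loop, then $h$ stays in the pool throughout it, and since $v_i(h)>0$ agent $i$ cannot remain unserved (its group's loop cannot end while $i$ is still eligible and the pool is nonempty); with $h$ still available when $i$ is served, its good $h_i$ again satisfies $v_i(h_i) \ge v_i(h)$ and the same chain applies. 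In every case $v_i(X'_i) \ge v_i(X'_j \setminus \{g_1,g_2\})$, so $X'$ is $\EFXX$.

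The main obstacle is precisely this last case: after post-processing we may have removed two old goods of $X_j$ while keeping the newly added good $h$, so the envier $i$ might ``see'' extra value $v_i(h)$; handling it needs either that $\mathsf{U}_3$ would have fired — so, since it did not, the needed bound already holds in $\sigma$ — or that $i$ sits early enough in the post-processing order to have itself grabbed something worth at least $v_i(h)$. The delicate bookkeeping is in pinning down who is served before whom in that loop, which good is still in the pool at that moment, the fact that representatives are never served there, and the status of the goods that overflow into the second post-processing phase; once these are in place, everything else reduces to $\EFX$-ness of $X$ and additivity.
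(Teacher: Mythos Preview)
Your proof is correct and follows essentially the same approach as the paper's own argument: termination via the potential $\Phi$, then a case analysis on the post-processing phase that reduces to either the inapplicability of $\mathsf{U}_3$ (when $(\ell_j,d_j)\lexlargereq(\ell_i,d_i)$) or to the fact that $i$ is served earlier than $j$ and hence picks a good worth at least $v_i(h)$. Your write-up is in fact more explicit than the paper's in two places --- you spell out why representatives are never served in the first loop (via inapplicability of $\mathsf{U}_2$) and why the serving order guarantees $i$ precedes $j$ --- whereas the paper simply asserts the latter. The only presentational quibble is that the displayed inequality $v_i(X_i)\ge v_i(X_j\setminus\{g_1\})$ is written before restricting to $g_1\in X_j$, though your subsequent case split uses it only where it is valid.
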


Note that
by Observation \ref{good-alloc-0} and Corollaries \ref{good-alloc-2}, \ref{good-alloc-3}, and \ref{good-alloc-4}, at the time when none of the updating rules is applicable, $X$ is $\EFX$. Now we prove that after allocating the remaining goods by Algorithm \ref{alg2}, the resulting allocation $X'$ is $\EFXX$. 

We start by a simple observation.
\begin{observation}\label{wasted-goods}
	For all $i \in N \setminus N_1$ and $g \in M_2$, we have $v_i(g) = 0$.
\end{observation}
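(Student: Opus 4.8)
The plan is to trace through the first allocation phase of Algorithm~\ref{alg2} and exploit the fact that the pool of unallocated goods only shrinks as that phase proceeds. Fix an agent $i \in N \setminus N_1$ and a good $g \in M_2$, and let $t$ be the index with $i \in R_t$. The first point I would establish is that during the first phase an agent leaves $N$ only by being selected in some inner while loop (and then receiving a good); since $i \notin N_1$ means exactly that $i$ is never selected in that phase, agent $i$ remains in $N$ throughout it, so $i \in R_t \cap N$ at every moment of the first phase.

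Next I would focus on the iteration of the outer loop that handles the index $t$. Its inner while loop keeps selecting agents of $R_t \cap N$ as long as at least one such agent has positive value for the current pool; hence, at the moment that inner loop terminates, every agent of $R_t \cap N$ has value $0$ for the pool at that point, and in particular $v_i(P') = 0$, where $P'$ is the pool right after the inner loop for index $t$ finishes. Then I would use that the pool is non-increasing throughout the first phase (goods are only ever removed from it) and that $g$ still lies in the pool at the very end of the first phase by the definition of $M_2$; consequently $g \in M_2 \subseteq P'$. Since $v_i$ is restricted additive, hence additive and non-negative, this gives $0 \le v_i(g) \le v_i(P') = 0$, so $v_i(g) = 0$, as required.

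I do not expect a genuine obstacle here; the statement is essentially a direct reading of the pseudocode. The only two points that deserve an explicit sentence are that $i$ is not deleted from $N$ before index $t$ is processed (so that $i$ is still a member of $R_t \cap N$ when the inner while loop for $t$ exits) and that the pool is monotonically non-increasing during the first phase (so that membership of $g$ in the final pool $M_2$ propagates back to $P'$), both of which are immediate from Algorithm~\ref{alg2}.
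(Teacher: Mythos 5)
Your argument is correct: since $i\notin N_1$, agent $i$ stays in $N$ through the first phase, so when the inner loop for its group terminates we have $v_i$ of the current pool equal to $0$, and monotone shrinkage of the pool plus additivity and non-negativity give $v_i(g)=0$ for every $g\in M_2$. The paper states this as a simple observation without proof, and your reasoning is exactly the intended direct reading of Algorithm~\ref{alg2}.
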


Since the partial allocation before the last step was $\EFX$, and no good is allocated to the agents in $N \setminus (N_1 \cup N_2)$ in the last step, to show that the final allocation is $\EFXX$ it is enough to prove that no agent envies any agent $j \in N_1 \cup N_2$ after removal of any two goods from the bundle of agent $j$. 

As a contradiction suppose that  for an agent $i$ and some agent $j \in N_1 \cup N_2$, there exists two goods $g_1, g_2 \in X'_j$ such that agent $i$ envies the bundle of agent $j$ after eliminating goods $g_1$ and $g_2$.
Let $g$ be the last good which is allocated to agent $j$ (via Algorithm \ref{alg2}).
By Observation \ref{non-waste-good}, we have $v_i(g) = v(g)$. 
Let $\ell$ and $\ell'$ be such that $i \in R_\ell$ and $j \in R_{\ell'}$. Let $d$ be the distance of $i$ from $r_\ell$ and $d'$ be the distance of $j$ from $r_{\ell'}$ in $H_\sigma$. If $j \in N_1$ and $j$ is a representative, then the preconditions of Rule $\mathsf{U}_2$ hold for $\sigma$. If $j \in N_1$ and $(\ell', d') \lexlargereq (\ell,d)$, then the preconditions of Rule $\mathsf{U}_3$ hold for $\sigma$. Therefore in either of these cases, Rule $\mathsf{U}_2$ or Rule $\mathsf{U}_3$ is applicable on $\sigma$ which is a contradiction. Thus, we have $(\ell, d) \lexlarger (\ell', d')$ or $j \in N_2$. This means that in the last stage of the algorithm, some good $g'$ should be allocated to $i$ before $g$ is allocated to $j$ with the property that
	\begin{align}
	v_i(g') &\geq v_i(g). \label{ineq1}
	\end{align}
	Without loss of generality, we can assume $g \neq g_1$.
	Thus, we have,
	\begin{align*}
	v_i(X'_j \setminus \{g_1, g_2\}) &= v_i((X_j \cup \{g\}) \setminus \{g_1, g_2\})  \\
	&= v_i(X_j \setminus \{g_1\}) + v_i(g) - v_i(g_2) \\
	&\leq v_i(X_j \setminus \{g_1\}) + v_i(g') - v_i(g_2) &\hbox{Inequality \eqref{ineq1}}\\
	&\leq v_i(X_i) + v_i(g') &\hbox{$X$ is $\EFX$}\\
	&=v_i(X'_i),
	\end{align*}
	which contradicts our assumption that agent $i$ envies $X'_j \setminus \{g_1, g_2\}$. This completes the proof.

\section{An $\EFX$ Allocation with at most $\floor{n/2}-1$ Discarded Goods}\label{efxn2}
In this section we present an algorithm that leaves aside at most $\floor{n/2}-1$ many goods and finds an $\EFX$ allocation for the rest of the goods. Algorithm \ref{alg1} shows a pseudocode of our allocation method. 
Similar to Algorithm \ref{alg2}, Algorithm \ref{alg1} starts with the empty allocation $X$ and partition $R = \langle \{1\}, \{2\}, \ldots, \{n\} \rangle$. Our algorithm consists of four updating rules $\mathsf{U}_0$, $\mathsf{U}_1$, $\mathsf{U}_2$ and $\mathsf{U}_4$. At each step, we take the current configuration as input and update it using one of these updating rules. 
Rule $\mathsf{U}_0$, $\mathsf{U}_1$, and $\mathsf{U}_2$ are already defined in Section \ref{ef2x}.
Similar to these rules, Rule $\mathsf{U}_4$ is designed in a way that the value of $\Phi(\sigma)$ increases after each update. When none of the rules $\mathsf{U}_0$, $\mathsf{U}_1$, $\mathsf{U}_2$ or $\mathsf{U}_4$ is applicable, the algorithm terminates. 

\begin{algorithm} 
	\caption{$\EFX$ allocation with $\leq \floor{n/2}-1$ discarded goods} \label{alg1}
	Input : instance $(N, M, (\util_1, \ldots , \util_n))$ 
	
	Output: allocation $X$
	
	\begin{algorithmic}[1]
		\State $\allocs \leftarrow \langle\emptyset, \emptyset, \ldots, \emptyset\rangle$
		\State $R \leftarrow \langle \{1\}, \{2\}, \ldots, \{n\} \rangle$
		\While {$\mathsf{U}_0$ or $\mathsf{U}_1$ or $\mathsf{U}_2$ or $\mathsf{U}_4$ is applicable}
		\State Let $i$ be the minimum index s.t. $\mathsf{U}_i$ is applicable
		\State Update $(X, R)$ via Rule $\mathsf{U}_i$ 
		\EndWhile
		\State Return $X$
	\end{algorithmic}	
\end{algorithm}

In the rest of this section, we  describe the new updating rule, namely $\mathsf{U}_4$ and its properties. Next, based on the properties of these rules we prove Theorem \ref{main-theorem} which states that the final allocation is $\EFX$ and discards less than $\floor{n/2}$ goods.

\subsubsection*{Rule $\mathsf{U}_4$} 
In this rule, we consider the whole set of the goods in the pool to update the input $\sigma = (X, R)$. This rule states that if the set of the goods in the pool has a champion $i \in R_\ell$, we update the allocation as follows: we find a path from $r_\ell$ to $i$ and update the allocation through this path. Next, we allocate a subset of the goods in the pool to $i$ and return all the goods that belonged to $r_\ell$ before the update to the pool. Then we apply envy-elimination to obtain $\sigma' = (X', R')$. In the following, you can find a schematic view of Rule $\mathsf{U}_4$.

\vspace{0.5cm}
\begin{tcolorbox}[colback=blue!5,coltitle=blue!50!black,colframe=blue!25,
	title= \textbf{Rule $\mathsf{U}_4$}]
	
	{Preconditions:}
	\begin{itemize}
		\item $\sigma = (X, R)$ satisfies Properties \ref{prop1}, \ref{def-envy-conf}  and \ref{good-alloc}.
		\item There exists an agent who envies $P_X$.
	\end{itemize}
	\hrulefill
	
	{Process:}
	Let  $i \in R_\ell$ be a champion of $P$ and let $r_\ell = j_1\rightarrow j_2 \dots\rightarrow j_p = i$ be a path in $H_\sigma$. 
	\begin{itemize}
		\item For all $1 \leq t \leq p-1$, set  $\widetilde{X}_{j_t} = \alloc_{j_{t+1}}$.
		\item Set $\widetilde{X}_i = \champset{P}{i}$. 
		\item For all $k \notin \{j_1, j_2, \ldots, j_p\}$, set $\widetilde{X}_k = X_k$.
		\item Apply envy-elimination on $\widetilde{X}$.
	\end{itemize}
	\hrulefill
	
	Guarantees:
	\begin{itemize}
		\item $\sigma' = (X', R')$ satisfies Properties \ref{prop1}, \ref{def-envy-conf}, and \ref{good-alloc}.
		\item $\Phi(\sigma') {\lexlarger} \Phi(\sigma)$.
	\end{itemize}
	
\end{tcolorbox}
\vspace{0.5cm}

Since we apply envy-elimination in the last step of Rule $\mathsf{U}_4$, Lemma \ref{extended-props} implies that the resulting configuration satisfies Properties \ref{prop1}, \ref{def-envy-conf}, and \ref{good-alloc}.

\begin{corollary}[of Lemma \ref{extended-props}]\label{good-alloc-1}
	Let $\sigma' = (X', R')$ be the result of updating $\sigma = (X, R)$ via Rule $\textsf{U}_4$. Then, $\sigma' = (X', R')$ satisfies Properties \ref{prop1}, \ref{def-envy-conf}, and \ref{good-alloc}.
\end{corollary}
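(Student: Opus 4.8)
\textbf{Proof proposal for Corollary \ref{good-alloc-1}.}

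The plan is to reduce this corollary directly to Lemma \ref{extended-props}, exactly in the same way that Corollaries \ref{good-alloc-2}, \ref{good-alloc-3}, and \ref{good-alloc-4} were obtained for Rules $\mathsf{U}_1$, $\mathsf{U}_2$, and $\mathsf{U}_3$. The key observation is that the last step of the process of Rule $\mathsf{U}_4$ is an application of envy-elimination to the intermediate allocation $\widetilde{X}$. Lemma \ref{extended-props} states that, \emph{for any} input allocation, the configuration produced by the envy-elimination process satisfies Properties \ref{prop1}, \ref{def-envy-conf}, and \ref{good-alloc}. In particular, this holds when the input allocation is the $\widetilde{X}$ constructed in the process of Rule $\mathsf{U}_4$ (obtained by rotating bundles along the path $r_\ell = j_1 \to j_2 \to \cdots \to j_p = i$, setting $\widetilde{X}_i = \champset{P}{i}$, and leaving all other bundles unchanged). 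Since the output $\sigma' = (X', R')$ of Rule $\mathsf{U}_4$ is by definition precisely the result of running envy-elimination on $\widetilde{X}$, the claim follows immediately.

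First I would note that $\widetilde{X}$ is a well-defined allocation: the bundle $\champset{P}{i}$ is well-defined because the precondition of Rule $\mathsf{U}_4$ guarantees that some agent $i \in R_\ell$ is a champion of $P$, so in particular $v_i(P) > v_{r_\ell}(X_{r_\ell})$ and Definition \ref{e-champion} applies; moreover the goods of $r_\ell$ that are vacated (i.e.\ $X_{r_\ell}$, since $\widetilde{X}_{j_1} = X_{j_2}$) together with $P \setminus \champset{P}{i}$ are returned to the pool, so every good is in exactly one place. Once $\widetilde{X}$ is seen to be a legitimate allocation, Lemma \ref{extended-props} applies verbatim with $X := \widetilde{X}$, and we conclude that $\sigma' = (X', R')$ satisfies Properties \ref{prop1}, \ref{def-envy-conf}, and \ref{good-alloc}.

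There is essentially no obstacle here: the entire content has been front-loaded into Lemma \ref{extended-props}, which was proved without any assumption on the structure of its input allocation. The only thing to be careful about is the bookkeeping of the pool --- checking that the construction of $\widetilde{X}$ genuinely produces a partition of the goods into allocated bundles plus a pool --- but this is routine. Hence the proof is a one-line invocation of Lemma \ref{extended-props}, identical in form to the proofs of Corollaries \ref{good-alloc-2}, \ref{good-alloc-3}, and \ref{good-alloc-4}.
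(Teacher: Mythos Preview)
Your proposal is correct and matches the paper's approach exactly: the paper does not give a separate proof of this corollary but simply observes (in the sentence preceding its statement) that since envy-elimination is applied in the last step of Rule $\mathsf{U}_4$, Lemma \ref{extended-props} immediately yields the claim. Your additional remarks about the well-definedness of $\widetilde{X}$ are sound but go slightly beyond what the paper bothers to spell out.
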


\begin{lemma}\label{phi-imp-1}
	Rule $\textsf{U}_4$ is $\Phi$-improving.
\end{lemma}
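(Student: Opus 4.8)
The plan is to reduce to the auxiliary lemma about envy-elimination and the potential function (the one referenced as Lemma \ref{main-Phi-improving} in the proofs of Rules $\mathsf{U}_1$ and $\mathsf{U}_2$), exactly as was done for those rules. Concretely, I would let $\widetilde{X}$ denote the allocation just before the envy-elimination step, and verify the two conditions of that lemma with respect to the index $\ell$ for which $i \in R_\ell$. First I would argue Condition \ref{cond1}: every agent $j$ in groups $R_1, \ldots, R_{\ell-1}$ has $\widetilde{X}_j = X_j$. This holds because the path $r_\ell = j_1 \to \cdots \to j_p = i$ lies entirely inside groups $R_\ell, R_{\ell+1}, \ldots$ — by admissibility (Property \ref{good-alloc}) there is no regular edge going from a higher-indexed group to a lower-indexed group, and every edge of the path is a regular edge out of some $j_t$, so no $j_t$ can sit in a group before $R_\ell$. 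The unallocated-good reallocations only touch the $j_t$'s, so all agents in the earlier groups are untouched.

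Next I would establish Condition \ref{cond2}: every agent $j \in R_\ell \cup \cdots \cup R_{|R|}$ has $v_j(\widetilde{X}_j) > v_{r_\ell}(X_{r_\ell})$, after a harmless reordering of groups whose representatives tie with $r_\ell$ so that for $k > \ell$ with $r_k$ not on the path we may assume $v_{r_k}(X_{r_k}) > v_{r_\ell}(X_{r_\ell})$. This splits into the usual cases. If $j \notin \{j_1, \ldots, j_p\}$, then $\widetilde{X}_j = X_j$ and $v_j(X_j) \ge v_{r_k}(X_{r_k}) \ge v_{r_\ell}(X_{r_\ell})$ with at least one strict inequality when $j$ is non-representative in a later group or non-representative in $R_\ell$; for a representative $r_k$ with $k>\ell$ strictness comes from the reordering assumption; for $r_\ell$ itself I must handle it as part of the path. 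If $j = j_t$ for some $t < p$, then $\widetilde{X}_j = X_{j_{t+1}}$ and the existence of the regular edge $j_t \to j_{t+1}$ in $H_\sigma$ gives $v_{j_t}(X_{j_{t+1}}) > v_{r(j_t)}(X_{r(j_t)}) \ge v_{r_\ell}(X_{r_\ell})$. If $j = i = j_p$, then $\widetilde{X}_i = \champset{P}{i}$ and since $i$ is a champion of $P$ we have $v_i(\champset{P}{i}) > v_{r_\ell}(X_{r_\ell})$ directly from the definition of champion (Definition \ref{extended-champ}) — this is exactly where the precondition "there exists an agent who envies $P_X$" is used, and the value of $r_\ell$ is the relevant threshold because $i \in R_\ell$. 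Having verified both conditions, Lemma \ref{main-Phi-improving} yields $\Phi(\sigma') \lexlarger \Phi(\sigma)$.

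The one subtle point — and the step I expect to require the most care — is the treatment of $r_\ell$ itself, since $r_\ell = j_1$ lies on the path and loses its bundle ($\widetilde{X}_{r_\ell} = X_{j_2}$) while also receiving nothing but possibly some pool goods, whereas $r_\ell$ is simultaneously the threshold agent against which Condition \ref{cond2} is stated. Here I would use that the first edge $r_\ell \to j_2$ of the path is a regular edge, so $v_{r_\ell}(X_{j_2}) > v_{r_\ell}(X_{r_\ell})$, giving strictness for $j = r_\ell$ as well; this is analogous to the handling in Lemma \ref{phi-imp-2}. A second mild subtlety is confirming that the reordering of tied groups is legitimate and does not disturb the path structure — this is the same footnoted maneuver used in Lemmas \ref{phi-imp-2} and \ref{phi-imp-3}, so I would simply invoke it. Everything else is a direct transcription of the argument pattern already established for Rules $\mathsf{U}_1$ and $\mathsf{U}_2$.
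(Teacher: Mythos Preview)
Your proposal is correct and follows essentially the same route as the paper: verify the two hypotheses of Lemma \ref{main-Phi-improving} for the index $\ell$ with $i\in R_\ell$, splitting Condition \ref{cond2} into the cases $j\notin\{j_1,\dots,j_p\}$, $j=j_t$ for $t<p$, and $j=i$, and handling $r_\ell$ via the first path edge. One small slip: your justification for Condition \ref{cond1} inverts the direction of the admissibility restriction. Property \ref{good-alloc} forbids regular edges from a \emph{lower}-indexed group to a \emph{higher}-indexed one, not the reverse. The correct argument is that, starting at $r_\ell\in R_\ell$ and following regular edges, one could only move to groups of index $\le \ell$; but since the endpoint $i$ is also in $R_\ell$ and there is no way to return once a lower-indexed group is entered, the entire path in fact stays inside $R_\ell$. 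This yields the same conclusion (and, incidentally, the stronger fact the paper tacitly uses, that no $j_t$ lies in any $R_k$ with $k>\ell$ either).
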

\begin{proof}
	We prove that the conditions of Lemma \ref{main-Phi-improving} hold.
	We have
	\begin{align*}
		\hbox{$\forall j \in R_1 \cup \ldots \cup R_{\ell-1}$, \quad $\widetilde{X}_j = X_j$.}
	\end{align*}
	Therefore, Condition \ref{cond1} of Lemma \ref{main-Phi-improving} holds.
	
	Without loss of generality, we can assume for all $k > \ell$, $v_{r_k}(X_{r_k}) > v_{r_\ell}(X_{r_\ell})$.\footnote{Otherwise, we can simply reorder the groups in $R$ which their representative agents have value $v_{r_\ell}(X_{r_\ell})$.} 
	For $k > \ell$ and $j \in R_k$ we have
	\begin{align*}
		v_j(\widetilde{X}_j) &= v_j(X_j) &\hbox{$\widetilde{X}_j = X_j$}\\ 
		&\geq v_{r_k}(X_{r_k}) &\hbox{$j \in R_k$}\\
		&> v_{r_\ell}(X_{r_\ell}).
	\end{align*} 
	Also if $j \in R_\ell \setminus \{j_1, j_2, \ldots, j_{t-1}\}$, we have
	\begin{align*}
	v_j(\widetilde{X}_j) &= v_j(X_j) &\hbox{$\widetilde{X}_j = X_j$}\\ 
	&> v_{r_\ell}(X_{r_\ell}).  &\hbox{$j \in R_\ell$}
	\end{align*} 
	Now assume $j=j_t$ for some $1 \leq t \leq p-1$. We have
	\begin{align*}
		v_{j_t}(\widetilde{X}_{j_t}) &= v_{j_t}(X_{j_{t+1}}) &\hbox{$\widetilde{X}_j = X_{j_{t+1}}$}\\
		&> v_{r_\ell}(X_{r_\ell}). &\hbox{Regular edge from $j_t$ to $j_{t+1}$}
	\end{align*}
	Finally if $j = i$, we have
	\begin{align*}
		v_i(\widetilde{X}_i) &= v_i(\champset{P}{i}) &\hbox{$\widetilde{X}_i = \champset{P}{i}$}\\
		&> v_{r_\ell}(X_{r_\ell}). &\hbox{Definition \ref{e-champion}}
	\end{align*}
	Therefore,
	\begin{align*}
		\hbox{$\forall j \in R_\ell \cup \ldots \cup R_{|R|}$, \quad $v_j(\widetilde{X}_j) > v_{r_\ell}(X_{r_\ell})$.}
	\end{align*}
	Thus, Condition \ref{cond2} of Lemma \ref{main-Phi-improving} holds too and hence, $\Phi(\sigma') \lexlarger \Phi(\sigma)$.
\end{proof}

\subsection{Bounding the Number of Discarded Goods}
As we mentioned, when none of the updating rules is applicable, we terminate the algorithm. As we proved in Lemma \ref{good-alloc-0} and Corollaries \ref{good-alloc-2}, \ref{good-alloc-3}, and \ref{good-alloc-1}, the allocation after each update is $\EFX$. Therefore, the final allocation preserves the $\EFX$ property as well. Also, since by Lemmas \ref{phi-imp-0}, \ref{phi-imp-2}, \ref{phi-imp-3}, and \ref{phi-imp-1} the value of $\Phi(\sigma)$ increases after each update, Algorithm \ref{alg1} terminates after a finite number of updates. It remains to prove that the number of  remaining goods in the pool is less than $\floor{n/2}$ and no agent envies the pool. We prove this in Lemma \ref{upper_bound}.

\begin{lemma}\label{upper_bound}
	Let $P$ be the pool of unallocated goods at the end of Algorithm \ref{alg1}. Then, we have $|P| < \floor{n/2}$. Also, no agent envies $P$.
\end{lemma}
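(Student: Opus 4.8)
The whole statement is extracted from a single fact: when Algorithm~\ref{alg1} halts, the current configuration $\sigma=(X,R)$ is admissible and \emph{none} of $\mathsf{U}_0,\mathsf{U}_1,\mathsf{U}_2,\mathsf{U}_4$ is applicable. In particular $\sigma$ is admissible and $\mathsf{U}_0,\mathsf{U}_1$ are inapplicable, so Lemmas~\ref{small-pool} and~\ref{champ-edge-exists} apply to $\sigma$, and I would feed the remaining inapplicabilities in by hand.

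If $P=\emptyset$ both claims are vacuous, so assume $P\neq\emptyset$. For ``no agent envies $P$'': $\sigma$ satisfies Properties~\ref{prop1},~\ref{def-envy-conf},~\ref{good-alloc}, hence the only way Rule~$\mathsf{U}_4$ can be inapplicable is that no agent envies $P$, which in particular gives $v_i(X_i)\ge v_i(P)\ge v_i(g)$ for every agent $i$ and every $g\in P$. That settles the second assertion.

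For the bound I would show $n\ge 2|P|+2$. First, since $\mathsf{U}_2$ is inapplicable, no representative values any good of $P$. This forces some group to be non-singleton: were $R$ entirely singletons, every agent would be a representative, so no agent would value any $g\in P$, and then for the top representative $r_{|R|}$ admissibility yields $v_{r_k}(X_{r_k})\ge v_{r_k}(X_{r_{|R|}})=v_{r_k}(X_{r_{|R|}}\cup\{g\})$ for all $k$, making Rule~$\mathsf{U}_0$ applicable at $(r_{|R|},g)$ --- a contradiction. Pick a non-singleton group and let $r$ be its representative, and run the construction in the proof of Lemma~\ref{small-pool} starting from $r$: it produces pairwise distinct goods $g_1,\dots,g_{|P|}\in P$, representatives $r=r_{t_1},\dots,r_{t_{|P|}}$, and agents $j'_\ell\in R_{t_{\ell+1}}$ with $j'_\ell$ a champion of $X_{r_{t_\ell}}\cup\{g_\ell\}$, the indices $t_1,\dots,t_{|P|+1}$ being pairwise distinct (no champion cycle, since $\mathsf{U}_1$ is excluded). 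The key strengthening of Lemma~\ref{small-pool} is that each of the $|P|+1$ distinct groups $R_{t_1},\dots,R_{t_{|P|+1}}$ has at least two members: $R_{t_1}$ by the choice of $r$, and for $1\le\ell\le|P|$ the group $R_{t_{\ell+1}}$ contains both its representative $r_{t_{\ell+1}}$ and the champion $j'_\ell$, with $j'_\ell\neq r_{t_{\ell+1}}$. Distinctness of the $t_\ell$ then exhibits at least $2(|P|+1)$ distinct agents, so $n\ge 2|P|+2$, i.e.\ (as $|P|$ is an integer) $|P|\le\floor{n/2}-1<\floor{n/2}$.

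The single delicate point is $j'_\ell\neq r_{t_{\ell+1}}$ --- a representative cannot be the champion that drags a pooled good into a weaker group. The route I have in mind: since the representative $r_{t_{\ell+1}}$ values $g_\ell$ at $0$, the minimal champion set $\champset{X_{r_{t_\ell}}\cup\{g_\ell\}}{r_{t_{\ell+1}}}$ does not use $g_\ell$ and hence is a subset of $X_{r_{t_\ell}}$, forcing $v_{r_{t_{\ell+1}}}(X_{r_{t_\ell}})>v_{r_{t_{\ell+1}}}(X_{r_{t_{\ell+1}}})$; but $X_{r_{t_\ell}}$ belongs to $r_{t_\ell}$, who --- once one rules out a representative's bundle carrying a good the owner values at $0$ while someone else values it, a statement I would derive from $\mathsf{U}_0$-inapplicability together with admissibility --- values all of $X_{r_{t_\ell}}$, so no agent can value $X_{r_{t_\ell}}$ strictly above $v_{r_{t_\ell}}(X_{r_{t_\ell}})$, a contradiction. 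Pinning down that ``no profitable waste in representative bundles'' claim is the main obstacle; the corner cases ($P=\emptyset$, all singletons), the pairwise distinctness of the $t_\ell$, and the final arithmetic are routine.
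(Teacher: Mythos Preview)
Your treatment of ``no agent envies $P$'' is fine; the trouble is entirely in the bound $|P|<\lfloor n/2\rfloor$, at the step $j'_\ell\neq r_{t_{\ell+1}}$.

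The justification you propose for this step does not work. The claim that a representative's bundle never carries a good the owner values at $0$ while some other agent values it positively is \emph{false} at termination of Algorithm~\ref{alg1}. Rule~$\mathsf{U}_0$ is precisely the rule that inserts such goods: its precondition is $v_{r_\ell}(g)=0$ together with $v_i(X_{r_\ell}\cup\{g\})\le v_{r_k}(X_{r_k})$ for all $i\in R_k$, and nothing in that precondition forbids $v_j(g)>0$ for some agent $j$. Furthermore, $\mathsf{U}_0$-inapplicability is a statement about goods currently in the pool; it says nothing about wasted goods already sitting in $X_{r_\ell}$ from earlier $\mathsf{U}_0$ applications (which are never cleaned up unless a later rule triggers envy-elimination). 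So one can perfectly well have $v_{r_{t_{\ell+1}}}(X_{r_{t_\ell}})>v_{r_{t_\ell}}(X_{r_{t_\ell}})$, and your intended contradiction does not materialise. What you \emph{can} extract from $j'_\ell=r_{t_{\ell+1}}$ and $v_{r_{t_{\ell+1}}}(g_\ell)=0$ is a regular edge $r_{t_{\ell+1}}\to r_{t_\ell}$, hence $t_{\ell+1}>t_\ell$ by admissibility; but this by itself neither produces a champion cycle nor forces the group $R_{t_{\ell+1}}$ to be non-singleton.

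The paper does not take your ``$|P|+1$ groups, each of size $\ge 2$'' route. It argues by contradiction: assume $|P|\ge\lfloor n/2\rfloor$, use Lemma~\ref{small-pool} to get $|R|>\lfloor n/2\rfloor$, introduce the index set $S=\{\ell:\exists\,i\in R_\ell,\ v_i(P)>0\}$, and then runs a pigeonhole argument on $S$ against the total number of agents to locate a singleton group $R_i=\{r_i\}$ with $i\in S$, whence $v_{r_i}(P)>0$; this makes one of the rules applicable and contradicts termination. So the paper's mechanism is a global count (groups versus agents, mediated by $S$) rather than a per-group size bound along the chain from Lemma~\ref{small-pool}. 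If you want to repair your argument, you would need an independent reason why the champion produced by Lemma~\ref{champ-edge-exists} can always be taken to be a non-representative --- and that does not follow from the ``no profitable waste'' claim.
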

\begin{proof}
		Since Rule $\textsf{U}_3$ is not applicable, for every agent $i \in R_\ell$ we have $\valu_i(\alloc_i) \geq v_{r_\ell}(X_{r_\ell}) \geq \valu_i(P)$.
	Now, we show that $|P| <\floor{n/2}$ also holds.  
	Towards a contradiction, assume that $|P| \geq \floor{n/2}$.
	If $|R| \leq \floor{n/2}$, by Lemma \ref{small-pool} at least one of Rules $\mathsf{U}_0$ or $\mathsf{U}_1$ is applicable which contradicts the termination of the algorithm. Therefore, we have $|R| > \floor{n/2}$.
	Let $S \subseteq \{1, 2, \ldots, |R|\}$ be the set with the following properties:
	\begin{itemize}
		\item for all agents $i$ with $v_i(P) > 0$, if $i \in R_\ell$ then $\ell \in S$, and
		\item $|S|$ is minimum.
	\end{itemize}
	Note that $R_i \cap R_j = \emptyset$ for all different $i$ and $j$ and $\cup_{\ell \in S} |R_{\ell}| \leq n$.
	Hence, by the Pigeonhole principle, if $|R| > \floor{n/2}$, there is an index $i \in S$ such that $|R_i| = 1$, that is, $R_i = \{r_i\}$. Since $|S|$ is minimum, we have $v_{r_i}(P) > 0$; otherwise we could remove $i$ from $S$ which contradicts the minimality of $S$. 
	But this means that $\mathsf{U}_4$ is applicable, which contradicts the termination assumption.
\end{proof}

Finally, Lemma \ref{upper_bound} together with the fact that the final allocation is $\EFX$ yields Theorem \ref{main-theorem}.
\begin{restatable}{theorem}{efxnt}\label{main-theorem}
	Assuming that the valuations are restricted additive, Algorithm \ref{alg1} returns an allocation $X$ and a pool $P$ of unallocated goods such that 		
	\begin{itemize}
		\item $\allocs$ is $\EFX$, and
		\item $\valu_i(\alloc_i) \geq \valu_i(P)$ for every agent $i$, and
		\item $|P| < \floor{n/2}$.
	\end{itemize}
\end{restatable}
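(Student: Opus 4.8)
The plan is to assemble the theorem from three already-established ingredients: (i) every configuration visited by Algorithm~\ref{alg1} has an $\EFX$ allocation; (ii) the algorithm halts; and (iii) Lemma~\ref{upper_bound}.

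First I would check the base case. The starting configuration $\sigma_0 = (X, R)$ with $X = \langle\emptyset, \ldots, \emptyset\rangle$ and $R = \langle\{1\}, \{2\}, \ldots, \{n\}\rangle$ satisfies Properties~\ref{prop1}, \ref{def-envy-conf}, and~\ref{good-alloc}: the empty allocation is vacuously $\EFX$; in each singleton group $R_\ell = \{\ell\}$ the representative is $\ell$ with $v_\ell(X_\ell) = 0$, so the envy-compatibility inequality is vacuous; and admissibility holds trivially, since the required regular-edge path from $r_\ell$ to $r_\ell$ has length $0$, no regular edge exists at all (all bundles are empty, hence $v_i(X_j) = 0 = v_{r_\ell}(X_{r_\ell})$), and there are no non-representative agents.

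Next, by induction on the number of performed updates, every configuration produced inside the while-loop again satisfies Properties~\ref{prop1}, \ref{def-envy-conf}, and~\ref{good-alloc}: this is exactly what Lemma~\ref{good-alloc-0} and Corollaries~\ref{good-alloc-2}, \ref{good-alloc-3}, and~\ref{good-alloc-1} assert for the four rules $\mathsf{U}_0, \mathsf{U}_1, \mathsf{U}_2, \mathsf{U}_4$ respectively, and the choice among applicable rules is irrelevant to this. In particular the allocation of every visited configuration is $\EFX$. For termination, there are only finitely many partial allocations of $\goods$ and finitely many partitions of $\agents$, hence finitely many configurations; by Lemmas~\ref{phi-imp-0}, \ref{phi-imp-2}, \ref{phi-imp-3}, and~\ref{phi-imp-1} every applicable rule strictly increases $\Phi(\sigma)$ in the lexicographic order, so no configuration is ever revisited and the loop ends after finitely many iterations, at a configuration $\sigma = (X,R)$ to which none of $\mathsf{U}_0, \mathsf{U}_1, \mathsf{U}_2, \mathsf{U}_4$ is applicable.

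It then remains only to read off the three bullets. The first, that $X$ is $\EFX$, has just been established. The other two --- that $v_i(X_i) \ge v_i(P)$ for every agent $i$ (no agent envies the pool) and that $|P| < \floor{n/2}$ --- are exactly Lemma~\ref{upper_bound}, which applies precisely because no updating rule can be applied to $\sigma$. This finishes the proof. The genuine content is packed into the lemmas being invoked, most notably the $|R| > \floor{n/2}$ case of Lemma~\ref{upper_bound}, where a pigeonhole argument on the groups intersecting the pool's support is needed to expose a situation that would have triggered a further update; the theorem itself is then a short assembly of these invariants.
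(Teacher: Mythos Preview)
Your proposal is correct and follows essentially the same approach as the paper: invoke Lemma~\ref{good-alloc-0} and Corollaries~\ref{good-alloc-2}, \ref{good-alloc-3}, \ref{good-alloc-1} for the $\EFX$ invariant, Lemmas~\ref{phi-imp-0}, \ref{phi-imp-2}, \ref{phi-imp-3}, \ref{phi-imp-1} for termination via $\Phi$, and Lemma~\ref{upper_bound} for the remaining two bullets. Your explicit verification of the base configuration is a small addition the paper leaves implicit, but otherwise the argument is the same assembly of already-proved pieces.
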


\section{Optimality of $\EFX$ Allocations} \label{OPTvsEFX}
In Section \ref{efxn2}, we presented an algorithm that finds a partial $\EFX$ allocation. When it comes to efficiency, it is not difficult to see that at the end of this algorithm only representative agents can own wasted goods and all other agents have no wasted good in their bundle. A natural question that arises is whether there exists $\EFkX$ allocations that are more efficient.
In this section, we address this question: can we find an allocation that is Pareto efficient and satisfies $\EFkX$?  

Our answer to this question is negative. 
We show that for some instances guaranteeing both Pareto efficiency and $\EFkX$ is not possible. Indeed, our result is stronger; we show that for any constant $c<1$ and any integer $k$, there are instances in which guaranteeing Pareto efficiency and a $c$-approximation of $\EFkX$ is not possible. We start by defining $c$-$\mathsf{EFkX}$ allocations. 

\begin{definition}\label{cefkxdef}
	An allocation $X = \langle \alloc_1, ... , \alloc_n \rangle$ is $c$-$\mathsf{EFkX}$, if for all $i \neq j$ and every collection of $\ell = min(k, |X_j|)$ distinct goods $g_1,g_2,\ldots,g_\ell \in \alloc_j$  we have 
	$
	v_i(\alloc_i) \geq c v_i(\alloc_j \setminus\{g_1,g_2,\ldots,g_\ell\}).
	$
\end{definition}

Roughgarden and Plaut \cite{plaut2020almost} gave an example in which no $\EFX$ and Pareto efficient allocation exists.
We generalize this example and for any integer $k$ and any constant $c \leq 1$, we present an instance which admits no $c$-$\mathsf{EFkX}$ and Pareto efficient allocation.	

\begin{example}
	Let $\agents = \{1, 2\}$ and $\goods = \{\good, \good_1, \dots , \good_k, \good'_1, \dots , \good'_k\}$. Also let $\util_1(\good) = \util_2(\good) = (\frac{1}{c}+1)^k$, $\util_1(\good_i) = \util_2(\good'_i) = (\frac{1}{c}+1)^{i-1}$ for all $i \in [k]$ and  $\util_1(\good'_i) = \util_2(\good_i) = 0$ (see Table \ref{efkxcounter}). In any Pareto efficient allocation, $\good_i$ must be allocated to agent $1$ and $\good'_i$ must be allocated to agent $2$ for all $i \in [k]$. Without loss of generality (by symmetry), assume agent $1$ receives $\good$. We have,
	\begin{align*}
	\util_2(\alloc_1) &= \util_2(\good) \\ 
	&= \util_2(\alloc_1 \backslash \{\good_1, \good_2, \dots , \good_k\}\}) \\ 
	&= (\frac{1}{c}+1)^k .
	\intertext{Also,}
	\util_2(\alloc_2) &= 1 + (\frac{1}{c}+1) + \dots + (\frac{1}{c}+1)^{k-1} \\
	&= c((\frac{1}{c}+1)^k-1). \\
	\intertext{Therefore,}
	c \util_2(\alloc_1 \backslash \{\good_1, \good_2, \dots , \good_k\}) &> \util_2(\alloc_2).
	\end{align*}
	Thus, the allocation is not $c$-$\mathsf{EFkX}$.
	\begin{table}
		\centering
		\begin{tabular}{ c|c c c c c c c c c c}
			&$\good$ & $\good_1$ & $\good_2$ & $\dots$ & $\good_k$ & $\good'_1$ & $\good'_2$ & $\dots$ & $\good'_k$ &\\
			\hline
			$\util_1$& $(\frac{1}{c}+1)^k$ & $1$ & $\frac{1}{c}+1$ & $\dots$ & $(\frac{1}{c}+1)^{k-1}$ & $0$ & $0$ & $\dots$ & $0$ &\\
			$\util_2$& $(\frac{1}{c}+1)^k$ & $0$ & $0$  & $\dots$ & $0$ & $1$ & $\frac{1}{c}+1$ & $\dots$ & $(\frac{1}{c}+1)^{k-1}$ &\\
		\end{tabular}
		\caption{Counter example for $c$-$\mathsf{EFkX}$ and Pareto efficiency}
		\label{efkxcounter}
	\end{table}
\end{example}

In the last part of this section, we define $\EFXp$ (which is in fact the original definition of $\EFX$ introduced by Caragiannis \etal \cite{caragiannis2019unreasonable}) which is a relaxation of $\EFX$ and present an algorithm that guarantees $\EFXp$ and Pareto efficiency at the same time when the valuations are restricted additive. 

\begin{definition}\label{efxpdef}
	An allocation $X = \langle \alloc_1, ... , \alloc_n\rangle$ is $\EFXp$ if for all agents $i$ and $j$ and all goods $\good \in \alloc_j$ such that $\valu_i(\alloc_j) > \valu_i(\alloc_j \setminus \{\good\})$, $\valu_i(\alloc_i) \geq \valu_i(\alloc_j \setminus \{\good\})$.
\end{definition}

Algorithm \ref{efxalg} shows a pseudocode of our algorithm for finding an $\EFXp$ and Pareto efficient allocation. 
The algorithm is quite simple. We start with an empty allocation and allocate the goods in non-increasing order of their values. For each good $\good$, we allocate it to agent $i$ such that $\util_i(\alloc_i)$ is minimum and $\util_i(\good) = \valu(\good)$. It is easy to see that the algorithm runs in polynomial time. The sorting of $m$ goods can be done in $\mathcal{O}(m \log m)$ and then for each good $g$ we need to find $\argmin_{j: \valu_j(\good_\ell) = \valu(\good_\ell)}(\valu_j(\alloc_j))$ which can be done in $\mathcal{O}(n)$. Thus in total the running time of Algorithm \ref{efxalg} is $\mathcal{O}(m \log m + mn)$.

In Definition \ref{rest-add-def}, we have $v(g)>0$ for all goods $g$. Note that if for a good $g$ we have that $v_i(g) = 0$ for all agents $i$, we can assume any value for $v(g)$ since it is irrelevant. In order to prevent some unnecessary case distinctions, in this section we assume $v(g) = 0$ for all goods $g$ such that for all agents $i$, $v_i(g)=0$.

\begin{algorithm}
	\caption{Complete $\EFXp$ allocation} \label{efxalg}
	Input : instance $(N, M, (\util_1, \ldots , \util_n))$ 
	
	Output: allocation $\allocs = \langle\alloc_1, ... , \alloc_n\rangle$
	
	\begin{algorithmic}[1]
		\State Let $\allocs = \langle \emptyset, \emptyset, ... , \emptyset \rangle$
		\State Let $\valu(\good_1) \geq \valu(\good_2) \geq \ldots \geq \valu(\good_m)$
		\For {$\ell \leftarrow 1$ to $m$}
		\State Let $i = \argmin_{j: \valu_j(\good_\ell) = \valu(\good_\ell)}(\valu_j(\alloc_j))$  	
		\State $\alloc_i \leftarrow \alloc_i \cup \{\good_\ell\}$
		\EndFor
	\end{algorithmic}
\end{algorithm}

\begin{observation}\label{non-wasteful}
	Let $\alloc$ be the allocation returned by Algorithm \ref{efxalg}. Then $\alloc$ is non-wasteful and therefore, Pareto efficient.
\end{observation}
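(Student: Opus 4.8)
The plan is to establish the statement in two moves: first show that the allocation $\alloc$ returned by Algorithm~\ref{efxalg} is complete and non-wasteful, and then upgrade non-wastefulness to Pareto efficiency via the fact that, in the restricted additive setting, a complete non-wasteful allocation maximizes the utilitarian social welfare.

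\emph{Non-wastefulness.} I would examine the iteration of the loop that handles good $\good_\ell$. The only thing that could go wrong is that the set $\{\, j : \util_j(\good_\ell) = \valu(\good_\ell) \,\}$ over which the minimizer $i$ is selected is empty. If $\valu(\good_\ell) > 0$, then by the convention adopted in this section (a good valued $0$ by all agents has inherent value $0$), some agent $j$ has $\util_j(\good_\ell) \neq 0$; since the valuations are restricted additive, $\util_j(\good_\ell) \in \{0, \valu(\good_\ell)\}$, forcing $\util_j(\good_\ell) = \valu(\good_\ell)$, so the set is non-empty and $\good_\ell$ is assigned to an agent who values it at $\valu(\good_\ell) > 0$, i.e.\ it is not wasted. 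If $\valu(\good_\ell) = 0$, the good is worthless to everyone and its placement is immaterial. Since every good is treated in exactly one iteration, $\alloc$ is complete and carries no positively valued wasted good.

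\emph{From non-wastefulness to Pareto efficiency.} Because $\alloc$ is complete and each good $\good$ goes to an agent valuing it at $\valu(\good)$ (the $\valu(\good)=0$ goods contributing nothing), we get $\sum_i \util_i(\alloc_i) = \sum_{\good \in \goods} \valu(\good)$. On the other hand, in any allocation $Y$ the contribution of a good $\good$ to $\sum_i \util_i(Y_i)$ is at most $\max_i \util_i(\good) = \valu(\good)$, so $\sum_i \util_i(Y_i) \leq \sum_{\good \in \goods} \valu(\good)$; thus $\alloc$ attains the maximum possible utilitarian social welfare. Now apply the standard argument: if some $Y$ Pareto-dominated $\alloc$, then $\util_i(Y_i) \geq \util_i(\alloc_i)$ for every $i$ with strict inequality for at least one agent, whence $\sum_i \util_i(Y_i) > \sum_i \util_i(\alloc_i)$, contradicting maximality. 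Hence $\alloc$ is Pareto efficient.

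I expect the only delicate point to be the bookkeeping around goods of inherent value $0$ and the exact reading of ``wasted'' (and the fact that we must use completeness of $\alloc$, not just absence of wasted goods, to invoke welfare-maximality). The core welfare-maximization-implies-Pareto-efficiency step is routine, and it is clean in this setting precisely because the restricted additive structure pins the optimal social welfare at $\sum_{\good \in \goods}\valu(\good)$, which a complete non-wasteful allocation achieves.
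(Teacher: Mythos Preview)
Your proof is correct. The paper states this as an observation and gives no proof, so there is nothing to compare against; the welfare-maximization argument you give (a complete non-wasteful allocation in the restricted additive setting attains $\sum_{g\in\goods} v(g)$, which is the maximum possible utilitarian welfare, hence is Pareto efficient) is exactly the intended one-line justification. Your handling of the $v(g)=0$ edge case is also appropriate: such goods are ``wasted'' under the literal definition no matter where they go, but they contribute nothing to anyone and therefore do not obstruct Pareto efficiency.
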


Finally, we show that the allocation returned by Algorithm \ref{efxalg} is $\EFXp$. The proof is by induction. We start with the empty allocation which is $\EFXp$. Let us assume allocation $\alloc$ is $\EFXp$ before allocating $\good_\ell$ to agent $i$ (induction hypothesis). It suffices to prove for all agents $j$ and $h \in \alloc_i \cup \{\good_\ell\}$ such that $\valu_j(h) = \valu(h)$,
$$\valu_j(\alloc_j) \geq \valu_j((\alloc_i \cup \{\good_\ell\}) \setminus \{h\}).$$

If $\valu_j(\good_\ell) = 0$, we have
\begin{align*}
\valu_j(\alloc_j) &\geq \valu_j(\alloc_i \setminus \{h\}) &\mbox{induction hypothesis}\\
&= \valu_j(\alloc_i \cup \{\good_\ell\} \setminus \{h\}). &\mbox{$\valu_j(\good_\ell) = 0$}
\end{align*}

Otherwise, we have
\begin{align*}
\valu_j(\alloc_j) &\geq \valu_i(\alloc_i) &\mbox{by the choice of $i$}\\
&= \valu(\alloc_i) &\mbox{Observation \ref{non-wasteful}} \\
&\geq \valu(\alloc_i) + \valu(\good_\ell) - \valu(h) &\mbox{$h=\good_{\ell'}$ for $\ell' \leq \ell$}\\
&= \valu((\alloc_i \cup \{\good_\ell\}) \setminus \{h\}) \\
&\geq \valu_j((\alloc_i \cup \{\good_\ell\}) \setminus \{h\}).
\end{align*}

This concludes that the output of Algorithm \ref{non-wasteful} is both $\EFXp$ and Pareto efficient.\footnote{We have been notified (personal communication, July 2022) that the authors of \cite{beyond} have this result in paralel.}

\begin{theorem}
	Assuming that the valuations are restricted additive, there exists a polynomial-time algorithm that returns an $\EFXp$ and Pareto efficient allocation. 
\end{theorem}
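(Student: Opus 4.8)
The plan is to analyze Algorithm~\ref{efxalg} directly. It processes the goods in non-increasing order of their inherent value $\valu(\good_1) \geq \cdots \geq \valu(\good_m)$ and, for each $\good_\ell$, hands it to an agent $i = \argmin_{j:\valu_j(\good_\ell)=\valu(\good_\ell)}\valu_j(\alloc_j)$ (goods that every agent values at $0$ are irrelevant, as noted just before the algorithm, and can go anywhere). First I would record that the output is non-wasteful: every good is given to an agent who values it, so $\valu_i(\alloc_i) = \valu(\alloc_i)$ for all $i$; since the allocation is complete, its utilitarian welfare equals $\sum_{\good \in \goods}\valu(\good)$, which upper-bounds the welfare of any allocation, so the output is welfare-maximal and hence Pareto efficient. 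This is exactly Observation~\ref{non-wasteful}.

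For the $\EFXp$ guarantee I would induct on the number of goods allocated so far, with the invariant that the current partial allocation is both $\EFXp$ and non-wasteful. The empty allocation satisfies this trivially. Suppose $\alloc$ is $\EFXp$ and we are about to add $\good_\ell$ to $i = \argmin_{j:\valu_j(\good_\ell)=\valu(\good_\ell)}\valu_j(\alloc_j)$. The only bundle that changes is $\alloc_i$, and it only grows, so every $\EFXp$ constraint in which $\alloc_i$ is not the envied bundle still holds, and $i$'s own constraints can only become slacker. Hence it suffices to show, for each agent $j \neq i$ and each $h \in \alloc_i \cup \{\good_\ell\}$ with $\valu_j(h) = \valu(h) > 0$, that $\valu_j(\alloc_j) \geq \valu_j\bigl((\alloc_i \cup \{\good_\ell\}) \setminus \{h\}\bigr)$.

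I would then split into two cases. If $\valu_j(\good_\ell) = 0$, then necessarily $h \in \alloc_i$, and $\valu_j\bigl((\alloc_i \cup \{\good_\ell\}) \setminus \{h\}\bigr) = \valu_j(\alloc_i \setminus \{h\}) \leq \valu_j(\alloc_j)$ by the induction hypothesis. If instead $\valu_j(\good_\ell) = \valu(\good_\ell) > 0$, then $j$ was a candidate to receive $\good_\ell$, so by the choice of $i$ we get $\valu_j(\alloc_j) \geq \valu_i(\alloc_i) = \valu(\alloc_i)$, the last equality being non-wastefulness. Since $h$ was already processed by the algorithm, $h = \good_{\ell'}$ for some $\ell' \leq \ell$, so $\valu(h) \geq \valu(\good_\ell)$ by the sorted order, whence $\valu(\alloc_i) \geq \valu(\alloc_i) + \valu(\good_\ell) - \valu(h) = \valu\bigl((\alloc_i \cup \{\good_\ell\}) \setminus \{h\}\bigr) \geq \valu_j\bigl((\alloc_i \cup \{\good_\ell\}) \setminus \{h\}\bigr)$. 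In both cases the required inequality holds, so the updated allocation is $\EFXp$, closing the induction.

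Finally I would note the running time: sorting costs $\mathcal{O}(m\log m)$ and each of the $m$ rounds takes $\mathcal{O}(n)$ to compute the $\argmin$, for a total of $\mathcal{O}(m\log m + mn)$. The only real subtlety is the second case of the inductive step: this is where all three design choices — greedy assignment to a minimum-value bundle, decreasing order of processing, and the restricted-additive identity $\valu_i(\alloc_i)=\valu(\alloc_i)$ forced by non-wastefulness — must be combined, and in particular it is the sorted order that guarantees $\valu(h)\geq\valu(\good_\ell)$ for the removed good $h$, which is what makes the chain of inequalities go through.
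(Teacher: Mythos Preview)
Your proposal is correct and follows essentially the same argument as the paper: Algorithm~\ref{efxalg} is shown to be non-wasteful (hence Pareto efficient, Observation~\ref{non-wasteful}), and the $\EFXp$ property is established by the same induction with the same two-case split on whether $\valu_j(\good_\ell)=0$. Your running-time analysis and the chain of inequalities in the second case match the paper's almost line for line.
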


\newpage
\bibliographystyle{plain}
\bibliography{nsw}
\clearpage
\appendix

\section{Useful Lemmas on Comparing  Vectors Lexicographically}
\begin{lemma}\label{lex1}
	Let $\sigma = (X, R)$ and $\sigma' = (X', R')$ be two envy-compatible configurations and let $\{r_1, r_2, \ldots, r_{|R|}\}$ and $\{r'_1, r'_2, \ldots, r'_{|R'|}\}$ be the sets of their representative agents respectively. If $\{r'_1, \ldots, r'_{|R'|}\} \subsetneq \{r_1, \ldots, r_{|R|}\}$ and for all $i \in \{r'_1, \ldots, r'_{|R'|}\}$, $v_i(X'_i) \geq v_i(X_i)$, $\Phi(\sigma') \lexlarger \Phi(\sigma)$.
\end{lemma}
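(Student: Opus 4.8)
The plan is to compare the two sorted value-vectors coordinate by coordinate and reduce everything to a short majorization-type inequality. Write the sorted sequence of representative values of $\sigma$ as $a_1 \le a_2 \le \cdots \le a_{|R|}$ and that of $\sigma'$ as $b_1 \le b_2 \le \cdots \le b_{|R'|}$; by the strict-inclusion hypothesis $|R'| < |R|$, so the $+\infty$ entry of $\Phi(\sigma')$ sits in coordinate $|R'|+1 \le |R|$, a coordinate where $\Phi(\sigma)$ still carries a finite rational value.

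The key claim I would establish first is that $b_t \ge a_t$ for every $1 \le t \le |R'|$. Fix $t$ and let $T$ be a set of $t$ agents realizing the $t$ smallest values $b_1, \dots, b_t$ among the representatives of $\sigma'$. Every agent in $T$ is a representative of $\sigma'$, hence, since $\{r'_1,\dots,r'_{|R'|}\} \subsetneq \{r_1,\dots,r_{|R|}\}$, also a representative of $\sigma$; and for each $i \in T$ we have $v_i(X_i) \le v_i(X'_i) \le b_t$, using the hypothesis $v_i(X'_i) \ge v_i(X_i)$. Thus $\sigma$ has at least $t$ representatives whose $\sigma$-value is at most $b_t$, which forces $a_t \le b_t$.

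Given the claim I would finish with a case distinction. If $b_t > a_t$ for some $t \le |R'|$, take the smallest such $t$; then $b_s = a_s$ for all $s < t$ while $b_t > a_t$, so $\Phi(\sigma') \lexlarger \Phi(\sigma)$ already at coordinate $t$. Otherwise $b_t = a_t$ for all $t \le |R'|$, so the first $|R'|$ coordinates of $\Phi(\sigma')$ and $\Phi(\sigma)$ coincide; then coordinate $|R'|+1$ of $\Phi(\sigma')$ equals $+\infty$, whereas coordinate $|R'|+1$ of $\Phi(\sigma)$ equals the finite value $v_{r_{|R'|+1}}(X_{r_{|R'|+1}})$ (here I use $|R'| < |R|$), and $+\infty$ beats it, again yielding $\Phi(\sigma') \lexlarger \Phi(\sigma)$.

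I do not expect a serious obstacle; the only point requiring care is the counting argument for $b_t \ge a_t$ — getting the handling of ties and the ``at least $t$ representatives'' bookkeeping right — together with remembering that the strict inclusion $|R'| < |R|$ is exactly what makes the $+\infty$ coordinate decisive in the tie case.
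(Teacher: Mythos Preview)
Your proof is correct and follows essentially the same approach as the paper. Both arguments hinge on the same pigeonhole observation: the first $t$ representatives of $\sigma'$ are also representatives of $\sigma$ with $X$-value at most $b_t$, forcing $a_t \le b_t$; the paper phrases this as a contradiction at the first differing coordinate, while you state the inequality $b_t \ge a_t$ directly, but the content is the same.
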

\begin{proof}
	Note that $|R| > |R'|$. If $v_{r_\ell}(X_{r_\ell}) = v_{r'_\ell}(X'_{r'_\ell})$ for all $1 \leq \ell \leq |R'|$, then since $v_{r_{|R'|+1}}(X_{r_{|R'|+1}}) < +\infty$, we have $\Phi(\sigma') \lexlarger \Phi(\sigma)$.
	
	Otherwise, let $\ell$ be the smallest index such that 
	\begin{align*}
	v_{r_\ell}(X_{r_\ell}) \neq v_{r'_\ell}(X'_{r'_\ell}). 
	\end{align*}
	If $v_{r'_\ell}(X'_{r'_\ell}) > v_{r_\ell}(X_{r_\ell})$, then $\Phi(\sigma') \lexlarger \Phi(\sigma)$. 
	
	Now assume otherwise. We have
	\begin{align}
		v_{r_\ell}(X_{r_\ell}) > v_{r'_\ell}(X'_{r'_\ell}). \label{ineq}
	\end{align} 
	For all $1 \leq k \leq \ell$, we have
	\begin{align*}
		v_{r'_k}(X_{r'_k}) &\leq v_{r'_k}(X'_{r'_k}) &\mbox{Condition of Lemma \ref{lex1}} \\ 
		&\leq v_{r'_\ell}(X'_{r'_\ell}) &\mbox{$k \leq \ell$} \\
		&< v_{r_\ell}(X_{r_\ell}) &\mbox{Inequality \eqref{ineq}} .		
	\end{align*}
	Therefore, we have $r'_k \in \{r_1, \ldots, r_{\ell-1}\}$ for all $1 \leq k \leq \ell$ which is a contradiction since $|\{r_1, \ldots, r_{\ell-1}\}| = \ell-1$. 
\end{proof}

\begin{lemma}\label{lex3}
	Let $\sigma = (X, R)$ and $\sigma' = (X', R')$ be two envy-compatible configurations and let $\{r_1, r_2, \ldots, r_{|R|}\}$ and $\{r'_1, r'_2, \ldots, r'_{|R'|}\}$ be the sets of their representative agents respectively. Assume there exists $1 \leq \ell \leq |R|$ such that the following properties hold:
	\begin{enumerate}
		\item for all $i \in R_1 \cup \ldots \cup R_{\ell-1}$, if $i \notin \{r_1, r_2, \ldots, r_{|R|}\}$ then $i \notin \{r'_1, r'_2, \ldots, r'_{|R'|}\}$, and
		\item for all $i \in R_1 \cup \ldots \cup R_{\ell-1}$, if $i \in \{r'_1, r'_2, \ldots, r'_{|R'|}\}$, $v_i(X'_i) = v_i(X_i)$, and
		\item for all $i \in R_\ell \cup \ldots \cup R_k$, if $i \in \{r'_1, r'_2, \ldots, r'_{|R'|}\}$, $v_i(X'_i) > v_{r_\ell}(X_{r_\ell})$.
	\end{enumerate}
	Then, $\Phi(\sigma') \lexlarger \Phi(\sigma)$.
\end{lemma}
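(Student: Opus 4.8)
The plan is to compare the two vectors $\Phi(\sigma)$ and $\Phi(\sigma')$ coordinate by coordinate, using the three conditions to pin down how the small‑valued representatives of $\sigma'$ relate to those of $\sigma$. Write $L = R_1 \cup \cdots \cup R_{\ell-1}$, $A = \{r_1,\dots,r_{|R|}\}$ and $A' = \{r'_1,\dots,r'_{|R'|}\}$. Since $R$ partitions the agents, $A \cap L = \{r_1,\dots,r_{\ell-1}\}$ and the agents outside $L$ are exactly those in $R_\ell \cup \cdots \cup R_{|R|}$. Recall also that, by the definition of a configuration, the value‑prefix $v_{r_1}(X_{r_1}),\dots,v_{r_{|R|}}(X_{r_{|R|}})$ of $\Phi(\sigma)$ and the value‑prefix $v_{r'_1}(X'_{r'_1}),\dots,v_{r'_{|R'|}}(X'_{r'_{|R'|}})$ of $\Phi(\sigma')$ are each sorted non‑decreasingly.

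First I would determine the structure of $A'$. By the first condition, a $\sigma'$‑representative lying in $L$ must be a $\sigma$‑representative, so $T := A' \cap L \subseteq \{r_1,\dots,r_{\ell-1}\}$; by the second condition $v_{r_k}(X'_{r_k}) = v_{r_k}(X_{r_k})$ for every $r_k \in T$. Every $\sigma'$‑representative not in $T$ lies in $R_\ell \cup \cdots \cup R_{|R|}$, so by the third condition its $X'$‑value strictly exceeds $v_{r_\ell}(X_{r_\ell})$; call this set $S'$. Since the $\sigma$‑representatives are sorted non‑decreasingly and $T \subseteq \{r_1,\dots,r_{\ell-1}\}$, every value $v_{r_k}(X_{r_k})$ with $r_k \in T$ is at most $v_{r_\ell}(X_{r_\ell})$, whereas every $X'$‑value of an agent in $S'$ is strictly larger. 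Consequently, writing $m = |T|$, the first $m$ coordinates of the value‑prefix of $\Phi(\sigma')$ are exactly the values $\{v_{r_k}(X_{r_k}) : r_k \in T\}$ listed in sorted order — a sub‑multiset of the multiset underlying the value‑prefix of $\Phi(\sigma)$ — and all later coordinates of that prefix are $> v_{r_\ell}(X_{r_\ell})$.

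Next I would locate the first coordinate on which $\Phi(\sigma)$ and $\Phi(\sigma')$ disagree and check that it is in $\sigma'$'s favour. Because a sub‑multiset sorted non‑decreasingly is, entry by entry, at least the ambient multiset sorted non‑decreasingly, the $q$‑th coordinate of $\Phi(\sigma')$ is $\ge$ the $q$‑th coordinate of $\Phi(\sigma)$ for every $q \le m$; hence if the two vectors already differ somewhere in the first $m$ coordinates, the first such coordinate yields a strict inequality in $\sigma'$'s favour, so $\Phi(\sigma') \lexlarger \Phi(\sigma)$. Otherwise they agree on coordinates $1,\dots,m$, and I compare coordinate $m+1$: since $m \le \ell-1$ we have $m+1 \le \ell \le |R|$, so $\Phi(\sigma)$ has the finite entry $v_{r_{m+1}}(X_{r_{m+1}}) \le v_{r_\ell}(X_{r_\ell})$ there, while $\Phi(\sigma')$ has either the smallest value of $S'$, which exceeds $v_{r_\ell}(X_{r_\ell})$, or — when $S' = \emptyset$, i.e.\ $|R'| = m$ — the sentinel $+\infty$ that $\Phi$ places immediately after the value‑prefix. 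In both cases the $(m+1)$‑st coordinate of $\Phi(\sigma')$ strictly exceeds that of $\Phi(\sigma)$, and again $\Phi(\sigma') \lexlarger \Phi(\sigma)$.

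Most of this is routine coordinate bookkeeping; the point that needs care is handling ties among equal representative values, which is precisely why the comparison is phrased through the sub‑multiset inequality rather than by matching indices one at a time, together with the degenerate case $S' = \emptyset$, where one must remember that the explicit $+\infty$ entry of $\Phi$ dominates the finite value $v_{r_{m+1}}(X_{r_{m+1}})$.
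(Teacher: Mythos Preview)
Your proof is correct and follows the same overall skeleton as the paper's: both identify the set of $\sigma'$-representatives with small value (your $T$, the paper's $\{r'_1,\dots,r'_k\}$), argue via Conditions~1--2 that this set is contained in $\{r_1,\dots,r_{\ell-1}\}$ with unchanged values, and then compare the two sorted value-prefixes. The one notable difference is in how the coordinate-by-coordinate comparison is carried out. The paper splits into the two cases $T \subsetneq \{r_1,\dots,r_{\ell-1}\}$ and $T = \{r_1,\dots,r_{\ell-1}\}$, and in the first case uses a pigeonhole contradiction (as in Lemma~\ref{lex1}) to rule out the possibility that some early coordinate of $\Phi(\sigma)$ exceeds the corresponding one of $\Phi(\sigma')$. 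You instead invoke the single observation that a sub-multiset, once sorted non-decreasingly, dominates the ambient sorted multiset entrywise, which handles both cases uniformly and avoids the case split and the contradiction. The paper's argument is slightly more explicit about why no ``bad'' index can exist; yours is more compact and makes clearer that ties among equal representative values cause no trouble.
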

\begin{proof}
	Let $k$ be the largest index such that $v_{r'_k}(X'_{r'_k}) \leq v_{r_\ell}(X_{r_\ell})$. Property $3$ implies that $\{r'_1, \ldots, r'_k\} \subseteq R_1 \cup \ldots \cup R_{\ell-1}$ and Property $1$ implies that $\{r'_1, \ldots, r'_k\} \subseteq \{r_1, \ldots, r_{\ell-1}\}$.
	
	\begin{itemize}
		\item \textbf{Case 1: $\{r'_1, \ldots, r'_k\} \subsetneq \{r_1, \ldots, r_{\ell-1}\}$}
		
		Note that $\ell-1 > k$. If $v_{r_i}(X_{r_i}) = v_{r'_i}(X'_{r'_i})$ for all $1 \leq i \leq k$ and $|R'| > k$, we have
		\begin{align*}
		v_{r'_{k+1}}(X_{r'_{k+1}}) &> v_{r_\ell}(X_{r_\ell}) &\mbox{choice of $k$} \\
		&\geq v_{r_{k+1}}(X_{r_{k+1}}). &\mbox{$\ell > k+1$}
		\end{align*}
		Thus, $\Phi(\sigma') \lexlarger \Phi(\sigma)$. Also if $|R'| = k$, since $+\infty > v_{r_{k+1}}(X_{r_{k+1}})$, $\Phi(\sigma') \lexlarger \Phi(\sigma)$.
		
		Now let $i \leq k$ be the smallest index such that 
		\begin{align*}
		v_{r_i}(X_{r_i}) \neq v_{r'_i}(X'_{r'_i}). 
		\end{align*}
		If $v_{r'_i}(X'_{r'_i}) > v_{r_i}(X_{r_i})$, then $$(v_{r'_1}(X'_{r'_1}), \ldots , v_{r'_k}(X'_{r_k})) \lexlarger (v_{r_1}(X_{r_1}), \ldots ,v_{r_{\ell-1}}(X_{r_{\ell-1}})),$$
		and hence $\Phi(\sigma') \lexlarger \Phi(\sigma)$.
		
		So assume we have
		\begin{align}
		v_{r_i}(X_{r_i}) > v_{r'_i}(X'_{r'_i}). \label{ineq-prim}
		\end{align} 
		For all $1 \leq j \leq i$, we have
		\begin{align*}
		v_{r'_j}(X_{r'_j}) &= v_{r'_j}(X'_{r'_j}) &\mbox{Condition of Lemma \ref{lex3}} \\ 
		&\leq v_{r'_i}(X'_{r'_i}) &\mbox{$j \leq i$} \\
		&< v_{r_i}(X_{r_i}) &\mbox{Inequality \eqref{ineq-prim}} .		
		\end{align*}
		Therefore, we have $r'_j \in \{r_1, \ldots, r_{i-1}\}$ for all $1 \leq j \leq i$ which is a contradiction since $|\{r_1, \ldots, r_{i-1}\}| = i-1$. 
			
		\item \textbf{Case 2: $\{r'_1, \ldots, r'_k\} = \{r_1, \ldots, r_{\ell-1}\}$}
		
		Note that by Property 2, for all $i \in \{r'_1, \ldots, r'_k\}$, $v_i(X'_i)=v_i(X_i)$. 
		If $|R'| = k$, since $+\infty > v_{r_\ell}(X_{r_\ell})$, $\Phi(\sigma') \lexlarger \Phi(\sigma)$.
		If $|R'| > k$, by the choice of $k$ we have $v_{r'_{k+1}}(X'_{r'_{k+1}}) > v_{r_\ell}(X_{r_\ell})$. Thus, $\Phi(\sigma') \lexlarger \Phi(\sigma)$. 
	\end{itemize}
\end{proof}

\begin{lemma} \label{main-Phi-improving}
	Let $\sigma = (X, R)$ be an admissible configuration. Assume we modify $X$ to $\widetilde{X}$ such that there exists $1 \leq \ell \leq |R|$ for which the following properties hold:
	\begin{enumerate}
		\item \label{cond1} for all $i \in R_1 \cup \ldots \cup R_{\ell-1}$, $\widetilde{X}_i = X_i$, and
		\item \label{cond2} for all $i \in R_\ell \cup \ldots \cup R_{|R|}$, $v_i(\widetilde{X}_i) > v_{r_\ell}(X_{r_\ell})$.
	\end{enumerate}
	Let $\sigma' = (X', R')$ be the result of applying envy-elimination on $\widetilde{X}$. Then $\Phi(\sigma') \lexlarger \Phi(\sigma)$.
\end{lemma}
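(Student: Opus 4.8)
The plan is to derive the statement from Lemma~\ref{lex3}, applied to $\sigma$ and $\sigma'$ with the very same index $\ell$. Write $A_{<\ell} := R_1 \cup \cdots \cup R_{\ell-1}$, and let $\widehat{X}$ denote $\widetilde{X}$ after the wasted-good removal performed in the first loop of the envy-elimination (Algorithm~\ref{envy-elim-alg}). The fact to keep in mind is that the instant the envy-elimination picks an agent $i$ as a representative, its bundle is frozen at $\widehat{X}_i$ and is never altered again, so every representative $i$ of $\sigma'$ satisfies $v_i(X'_i) = v_i(\widehat{X}_i) = v_i(\widetilde{X}_i)$ (deleting from $\widetilde{X}_i$ the goods that $i$ values at $0$ does not change $v_i$ of $i$'s own bundle). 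Conditions~2 and~3 of Lemma~\ref{lex3} follow at once: for $i \in A_{<\ell}$ we have $\widetilde{X}_i = X_i$ by hypothesis~\ref{cond1}, so a representative $i \in A_{<\ell}$ of $\sigma'$ has $v_i(X'_i) = v_i(X_i)$; and for $i \in R_\ell \cup \cdots \cup R_{|R|}$, a representative $i$ of $\sigma'$ has $v_i(X'_i) = v_i(\widetilde{X}_i) > v_{r_\ell}(X_{r_\ell})$ by hypothesis~\ref{cond2}.

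The substance is condition~1: no agent of $A_{<\ell}$ that is \emph{not} a representative of $\sigma$ becomes a representative of $\sigma'$. Fix such an agent $i \in R_t$, with $t < \ell$ and $i \ne r_t$. By admissibility of $\sigma$ (Property~\ref{good-alloc}), $H_\sigma$ contains a regular-edge path $r_t = j_1 \to j_2 \to \cdots \to j_p = i$; since admissibility forbids regular edges from a lower-indexed group to a higher-indexed one, this path can never leave $R_t$ --- once it had descended to a strictly lower group it could not climb back up to $i \in R_t$. Hence $j_1, \ldots, j_p \in R_t$, so $j_2, \ldots, j_p$ are non-representatives of $\sigma$; their bundles are non-wasteful and are untouched both by hypothesis~\ref{cond1} and by the wasted-good removal, so $\widehat{X}_{j_a} = X_{j_a}$, and every edge of the path carries value above $v_{r_t}(X_{r_t})$ (the edge $r_t \to j_2$ by definition, and each edge $j_{a-1} \to j_a$ with $a \ge 2$ because it is a regular edge inside $R_t$).

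I would then argue that the envy-elimination \emph{absorbs} $i$ into some group, so $i$ is not a representative of $\sigma'$. Let $j_b$ be the first of $j_1, \ldots, j_p$ to leave $N$, let $R'_s$ be the group it enters at that step, and let $\rho$ be the representative of $R'_s$. First, $v_\rho(\widehat{X}_\rho) \le v_{r_t}(X_{r_t})$: if $j_b$ is absorbed into $R'_s$, then $R'_s$ was created while $r_t$ was still in $N$ (since $r_t$, a path agent, leaves $N$ no earlier than $j_b$), so $\rho$ was then the global minimum and $v_\rho(\widehat{X}_\rho) \le v_{r_t}(\widehat{X}_{r_t}) = v_{r_t}(X_{r_t})$; if instead $j_b$ is itself selected as $\rho$, then either $j_b = r_t$, with value exactly $v_{r_t}(X_{r_t})$, or $b \ge 2$ and $j_b$ would have to tie $r_t$ for the minimum value of $R_t$ --- a degenerate situation we exclude by letting the envy-elimination break ties in favour of $\sigma$-representatives, which only permutes groups of equal representative value and hence leaves $\Phi$ unchanged (and which also forces $j_b$, if $b \ge 2$, to be \emph{absorbed} into $R'_s$ rather than selected). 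Now, during the growth phase of $R'_s$ the only way an agent can leave $N$ is by being absorbed into $R'_s$; since no path agent leaves $N$ before $j_b$, all of $j_{b+1}, \ldots, j_p$ are still in $N$ when that phase begins, and then the standing witnesses $v_{j_a}(\widehat{X}_{j_{a+1}}) > v_{r_t}(X_{r_t}) \ge v_\rho(\widehat{X}_\rho)$ (one for each $a$ with $b \le a < p$) prevent the inner ``while'' loop from terminating until $j_{b+1}, \ldots, j_p$ have all been pulled into $R'_s$. In particular $i = j_p$ is absorbed into $R'_s$ and is not a representative of $\sigma'$.

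With all three hypotheses of Lemma~\ref{lex3} verified, it delivers $\Phi(\sigma') \lexlarger \Phi(\sigma)$, which is the claim. The main obstacle is the chain argument of the third paragraph --- one must keep track of the fact that the path agents do not escape $N$ before $j_b$, so that growing $R'_s$ really does vacuum the whole tail $j_b, j_{b+1}, \ldots, j_p$ of the path into it --- and the treatment of tied minima in the groups $R_t$ is the one point where the bookkeeping is genuinely delicate.
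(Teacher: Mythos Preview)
Your proof is correct and follows the same overall strategy as the paper: reduce to Lemma~\ref{lex3} with the same index $\ell$, observe that conditions~2 and~3 follow immediately from the fact that envy-elimination freezes a representative's bundle at $\widehat{X}$, and spend the real effort on condition~1. For condition~1 the paper packages the argument as an auxiliary claim (its Claim~\ref{subset-sources}) proved by induction on the distance of $i$ from $r_k$: at each step it takes the predecessor $j$ of $i$ on a shortest regular-edge path from $r_k$, uses the induction hypothesis to place $j$ in some $R'_t$ with $v_{r'_t}(X'_{r'_t}) \le v_{r_k}(X_{r_k})$, and then uses the regular edge $j \to i$ (i.e., $v_j(X_i) > v_{r_k}(X_{r_k}) \ge v_{r'_t}(X'_{r'_t})$) to conclude that $i$ is absorbed into $R'_t$ or an earlier group. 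Your argument is a reorganisation of the same induction: you fix the whole path at once, note (more explicitly than the paper) that admissibility keeps the path inside $R_t$, identify the first path vertex $j_b$ to leave $N$, and then vacuum the tail $j_b,\ldots,j_p$ into whatever group $j_b$ enters. Both proofs rest on exactly the same mechanism---the regular edges of the path act as absorption witnesses in the inner while-loop of envy-elimination---so the difference is organisational rather than substantive; the paper's inductive phrasing is more economical, while yours is more explicit about the timing of events inside Algorithm~\ref{envy-elim-alg} and about the tie-breaking corner case (which the paper glosses over).
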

\begin{proof}
	We prove that the properties of Lemma \ref{lex3} hold. 
	First we prove that
	\begin{align*}
		\text{for all $i \in R_1 \cup \ldots \cup R_{\ell-1}$, \quad if $i \notin \{r_1, r_2, \ldots, r_{|R|}\}$ then $i \notin \{r'_1, r'_2, \ldots, r'_{|R'|}\}$.}
	\end{align*}
	We use the following claim.
	\begin{claim}\label{subset-sources}
		For every $1 \leq k < \ell$ and every agent $i \in R_k$, 
		\begin{itemize}
			\item if $i \in R'_t$, then $v_{r'_t}(X'_{r'_t}) \leq v_{r_k}(X_{r_k})$, and
			\item if $i \notin \{r_1, r_2, \ldots, r_{|R|}\}$, then $i \notin \{r'_1, r'_2, \ldots, r'_{|R'|}\}$.
		\end{itemize}
	\end{claim} 
	The proof of the claim is by induction on the distance of $i$ from $r_k$. If $i = r_k$, then either $i \in \{r'_1, \ldots, r'_{|R'|}\}$ and $v_{r'_t}(X'_{r'_t}) = v_{r_k}(X_{r_k})$ or, $v_{r'_t}(X'_{r'_t}) < v_{r_k}(X'_{r_k}) \leq v_{r_k}(X_{r_k})$. Now assume $i \neq r_k$, and $j$ is the preceding node of $i$ in a shortest path from $r_k$ to $i$ in $H_\sigma$. By induction assumption, $j \in R'_t$ for $v_{r'_t}(X'_{r'_t}) \leq v_{r_k}(X_{r_k})$. 
	The bundle of $i$ is not changed before the envy-elimination. Also since $X$ is admissible and $i$ is not a representative, $X_i$ has no wasted good. Therefore, the bundle of $i$ is not changed even after the removal of wasted goods in the envy-elimination process. We have $v_i(X_i) > v_{r_k}(X_{r_k}) \geq v_{r'_t}(X'_{r'_t})$. Therefore, either $i$ is already added to some $R'_{t'}$ for $t' \leq t$ or we set $X'_i := \champset{X_i}{j}$ and $i$ gets added to $R'_t$. Thus, the claim holds.
	
	For the second property, note that when running envy-elimination, the bundles of the agents in $\{r'_1, r'_2, \ldots, r'_{|R'|}\}$ do not change except for the removal of wasted goods. Therefore, 
	\begin{align*}
		\text{for all $i \in R_1 \cup \ldots \cup R_{\ell-1}$, \quad if $i \in \{r'_1, r'_2, \ldots, r'_{|R'|}\}$, $v_i(X'_i) = v_i(\widetilde{X}_i) = v_i(X_i)$.}
	\end{align*}
	We know that for all $j \in \{r'_1, r'_2, \ldots, r'_{|R'|}\} \cap (R_\ell \cup \ldots \cup R_k)$, 
	\begin{align}
	v_j(\widetilde{X}_j) > v_{r_\ell}(X_{r_\ell}). \label{Yineq}
	\end{align}	
	Hence, for all $j \in \{r'_1, r'_2, \ldots, r'_{|R'|}\} \cap (R_\ell \cup \ldots \cup R_k)$, 
	\begin{align*}
		v_j(X'_j) &= v_j(\widetilde{X}_j) \\
		&> v_{i_\ell}(X_{i_\ell}). &\hbox{Inequality \eqref{Yineq}}
	\end{align*}
	Therefore, by Lemma \ref{lex3}, $\Phi(\sigma') \lexlarger \Phi(\sigma)$.
\end{proof}

\end{document}